\documentclass[11pt,letterpaper]{article}

\usepackage[left=1in,right=1in,top=1in,bottom=1in]{geometry}

\usepackage{pgfplots}
\usepackage{tikz}
\usepgfplotslibrary{groupplots}
\pgfplotsset{compat=1.18}
\usepackage{graphicx}
\usepackage{float}
\usepackage{caption}
\usepackage{cite}
\usepackage{amsfonts}
\usepackage{amsmath}
\usepackage{amssymb}
\usepackage{amsthm}
\usepackage{hyperref}
\usepackage{xcolor}
\usepackage{physics}
\usepackage{tikz}
\usepackage{titlesec}
\titleformat*{\section}{\normalsize\bfseries}
\titleformat*{\subsection}{\small\bfseries}
\titleformat*{\subsubsection}{\small\bfseries}
\titleformat*{\paragraph}{\small\bfseries}
\titleformat*{\subparagraph}{\small\bfseries}
\usepackage{kotex}
\usetikzlibrary{arrows.meta, calc, shapes.geometric, positioning, decorations.pathmorphing}
\definecolor{tensorcolor}{HTML}{9A6D99}
\usepackage{circuitikz}
\usepackage{quantikz}
\usepackage[font=small,labelfont=bf]{caption}

\hypersetup{colorlinks=true,urlcolor=[rgb]{0,0,0.5},citecolor=[rgb]{0.5,0,0},linkcolor=[rgb]{0,0,0.5}}

\newcommand{\q}{\mathbf{Q}}

\definecolor{myred}{RGB}{200, 50, 50}      
\definecolor{myblue}{RGB}{50, 80, 180}   
\definecolor{zone_red}{RGB}{250, 235, 235} 
\definecolor{zone_blue}{RGB}{235, 240, 255}
\definecolor{gate_fill}{RGB}{240, 240, 240}

\newtheorem{theorem}{Theorem}

\newtheorem{lemma}{Lemma}
\newtheorem{proposition}{Proposition}

\newtheorem{corollary}{Corollary}
\newtheorem{fact}{Fact}
\theoremstyle{definition}
\newtheorem{definition}{Definition}
\newtheorem{conjecture}{Conjecture}

\def\autorefapp#1{\hyperref[#1]{Appendix~\ref{#1}}}

\def\C{\mathbb{C}}

\def\vev#1{\langle{#1}\rangle}

\def\and{\quad {\rm and} \quad}

\def\ni{\noindent}

\def\CB{{\cal B}}

\renewcommand{\title}[1]{\vbox{\center{\LARGE #1}}\vspace{5mm}}
\renewcommand{\author}[1]{\vbox{\center{#1}}\vspace{5mm}}
\newcommand{\address}[1]{\vbox{\center\em#1}}

\definecolor{db}{rgb}{0,0,0.8}

\DeclareMathOperator*{\range}{{\rm range}}

\usetikzlibrary{calc}
\pgfdeclarelayer{background}
\pgfsetlayers{background,main}

\begin{document}

\title{Finite-bath projected ensembles in dual-unitary circuits}

\author{Rahul Arvind${}^a$ and Nicholas Hunter-Jones${}^{a,b}$}

\address{${}^a$Department of Physics, University of Texas at Austin, Austin, TX}
\address{${}^b$Department of Computer Science, University of Texas at Austin, Austin, TX}

\vspace*{12pt}
\begin{abstract}
We study the emergent randomness in projected ensembles generated by dual-unitary circuits. Specifically, we consider evolution by solvable dual-unitary brickwork circuits followed by measurements on part of the system, which induce an ensemble of states on the unmeasured subsystem. Previous work has established that this projected ensemble of states forms a quantum state design in the limit where the bath size is large, but the rate of convergence at finite bath size has not been rigorously established. Under three inverse-polynomial assumptions, local permutation mixing, a bound on recurrent backward motion, and contraction at the measured boundary, we fill this gap and prove that a bath of polynomial size suffices to form approximate $k$-designs. In establishing our results, we develop a microscopic transport interpretation of emergent randomness in the projected ensemble, in which deviations from local permutation operators appear as defects, and make connections with the theory of quantum walks. 

\end{abstract}

\tableofcontents

\tikzset{
    % Base style for all square tensor nodes
    tensor/.style={
        rectangle, 
        draw=black, 
        thick, 
        rounded corners=2pt, 
        minimum size=18pt, 
        inner sep=0pt, 
        align=center
    },
    U/.style={tensor, fill=red!60!white},
    Ustar/.style={tensor, fill=blue!60!white},
    ident/.style={tensor, fill=orange!80!white},
    leg/.style={thick},
    corner/.style={node, scale=0.6},
    bdot/.style={circle, draw=black, thick, fill=black, inner sep=2pt},
    wdot/.style={circle, draw=black, thick, fill=white, inner sep=2pt}
}

\section{Introduction}
\label{sec:introduction}

The late-time dynamics of generic quantum many-body systems seem to exhibit universal chaotic behavior in certain quantities. Some classic examples are the decay of out-of-time-ordered correlation functions (OTOCs) \cite{SSbutterfly,MSSbound,ChaosChannels,NVH17,vonKey17}, the spread of operator weight \cite{LocalizedShocks,ChaosChannels,NVH17,Parker2018}, and the ramp-and-plateau behavior of the spectral form factor (SFF) \cite{BrezinHikami1,Cotler2016,Chan2017,Bertini2018}. A more recently discovered example of such behavior is the onset of randomness in the \textit{projected ensembles} of these many-body systems \cite{Choi2021,Cotler2021,Ho2021,Claeys2022emergentquantum,Ippoliti2022,Ippoliti2022b}. In this set-up, we measure an extensive subsystem of the late-time many-body state, and study the (projected) ensemble of states on the complementary subsystem. If our extensive system is comprised of $N$ qubits, and we partition it into an output subsystem $N_{A}$ and a measured ``bath" subsystem $N_{B}$, then this ensemble contains $2^{N_{B}}$ states on a Hilbert space of size $(\mathbb{C}^{2})^{\otimes N_{A}}$. For generic dynamics and measurements, we expect this large number of states to be widely distributed over the $N_{A}$ Hilbert space, thereby appearing approximately random.

The concept of an approximately random state ensemble can be made precise in a few ways, but in this paper we will restrict ourselves solely to quantum state $k$-designs \cite{Delsarte1977,Renes2003,Scott2006,Ambainis2007}. Loosely, these are ensembles whose first $k$ moments match those of the (truly random) Haar ensemble. These state $k$-designs are widely used objects in quantum information, acting as an efficient source of randomness in several quantum protocols. For us, these state $k$-designs provide a benchmark which we use to tell us how much randomness is present in the projected ensemble that we have discussed.

Past work on projected ensembles has been largely in the $N_{B} \rightarrow \infty$ limit \cite{Ho2021,Claeys2022emergentquantum,Ippoliti2022,Lucas2022,Chan2024,Chang2024}. This is reminiscent of how we do thermodynamics --- we couple a system to a bath, which has a large number of degrees of freedom, and hence remains largely unchanged. Indeed, the $N_{B} \rightarrow \infty$ limit does have a thermodynamic `sensibility' in that the infinite extent removes finite effects from how the projected ensemble distributes over $N_{A}$, just as coupling an infinite thermal bath to a small system washes away any fluctuations in its temperature. There are a few studies which analyze the scaling of the degree of the $k$-design with the size of $N_{B}$, but they generally limit themselves to studying Haar random, or $k$-design initial states \cite{Cotler2021,Mark2024,Mok2026}, which contain a lot of built-in random structure. 

In this work, we fill in this gap by analyzing the formation of state $k$-designs in the projected ensemble of a dual-unitary brickwork circuit \cite{Bertini2018,Bertini2018b,Bertini2019b,Claeys2020}. Projected ensembles in dual-unitary dynamics have previously been studied in \cite{ho2022exact,Claeys2022emergentquantum,Ippoliti2022,Shrotriya2023,Yu2025}.
Our main contributions are:

\begin{enumerate}
    \item \textbf{A quantitative design bath theorem.} Under three quantitative assumptions, we are able to show that dual-unitary brickwork circuits form quantum state $k$-designs with a bath size that scales as $\mathrm{poly}(k)$.

    \item \textbf{A defect-transport mechanism for convergence to the Haar distribution.} We recast the problem of convergence, using a space-time flip, into the transport and contraction of defects in an absorbing discrete-time quantum process. Two of the three assumptions are introduced to ensure this transport process has no solitonic modes, that is, defects do not get stuck in the process without any contraction.

    \item \textbf{A rigorous finite-replica treatment of normalization.} The study of the discrete-time quantum process bounds the convergence of unnormalized moments of the distribution; to pass to the physical, normalized moment, past work has invoked a replica-trick consideration. Here, we do this on rigorous footing using the approximation theory of Jacobi polynomials. 
\end{enumerate}

The rest of this paper is organized as follows. In \autoref{sec:preliminaries}, we introduce some preliminary concepts that are required to follow the technical details in the paper, as well as introduce notation that is required for later sections. In \autoref{sec:proofsketch}, we then give a brief proof sketch of the main results of the paper, which are then written out in \autoref{sec:defect-structure} through \autoref{sec:normalization-to-design}. In \autoref{sec:comment} we discuss the genericity of one of the assumptions required for our proof. We then finish in \autoref{sec:outlook} with an outlook detailing open questions, as well as what we think are promising directions for future research.
 
\section{Preliminaries}
\label{sec:preliminaries}
We introduce the central object of study, the dual-unitary brickwork circuit \cite{Bertini2018,Bertini2018b,Bertini2019b,Claeys2020}, which here refers to multiple iterations of the extensive single-layer unitary operator
\begin{equation}
\label{eq:brickwork-layer}
    \mathbb{U} = \mathbb{U}_{e}\mathbb{U}_{o}\,, \quad{\rm with}\quad \mathbb{U}_{e} = \bigotimes_{x= 0}^{L-1} u_{x,x+\frac{1}{2}}\,,\quad  \mathbb{U}_{o} = \bigotimes_{x= 1}^{L-1} u_{x-\frac{1}{2},x}\,,
\end{equation}
and where we label single-qubit sites by half-integers as $\{0, \frac{1}{2}, 1, \dots L-\frac{1}{2}\}$ for a total of $2L$ qubits, and $u_{x,y}$ refers to a constituent two-qubit brickwork unitary that specifies the circuit. Dual unitarity here refers to the individual brickwork unitary being dual unitary, meaning that these operators retain unitarity when the space and time indices are swapped with each other. The $t$-layer brickwork circuit is just given then by $\mathbb{U}^{t} = (\mathbb{U}_{e}\mathbb{U}_{o})^{t}$. In this section, we will give a brief introduction to the theory and description of these dual-unitary operators and circuits, and connect them to our study of the projected ensemble.

\subsection{Dual-unitarity and graphical notation.}
\label{sec:dual-unitarity}

We begin with the basic ingredient in the brickwork circuit: the two-qubit constituent unitary $u_{cd,ab}$, where $(a,b)$ represent the input legs, each with local dimension $2$, and $(c,d)$ represent the output legs. Here, $a$ spans the basis vectors of one qubit and $b$ of another, so altogether the Cartesian product $(a,b)$ specifies a basis in the four-qubit Hilbert space. Now, if we think of a brickwork circuit propagating the state of the system forward in time, we can think of the indices $(c,d)$ ``belonging" to a time later than $(a,b)$. Now assume (without loss of generality) that we have labeled $(a,b,c,d)$ such that $a$ is to the left of $b$ and $c$ is to the left of $d$. Then we can think of an alternate picture where we input states from the right into the four-qubit Hilbert space described by the input legs $(b,d)$, and define a \textit{space evolution} operator $\tilde{u}$ such that $\tilde{u}_{ac,bd} = u_{ac,bd}$ --- which maps into output legs $(a,c)$. If the map $\tilde{u}$ is unitary, then we say that it is \textit{dual unitary}. That is, we require

\begin{equation}
\label{eq:unitarity}
    uu^{\dag} = u^{\dag}u = I,
\end{equation}

\begin{equation}
\label{eq:dual-unitarity}
    \tilde{u}^{\dag}\tilde{u} = \tilde{u}\tilde{u}^{\dag}= I.
\end{equation}

It is convenient to use graphical notation for $u_{i,j}$ and its complex conjugate $u^{*}_{i,j}$, which we adapt from \cite{Claeys2022emergentquantum}.

\begin{equation}
    u_{i,j}^{k,l} \quad = \quad
    \begin{tikzpicture}[baseline=(current bounding box.center), scale=0.8]
        % Main Tensor
        \node[U] (T) at (0,0) {$\urcorner$};
        
        % Legs with labels
        \draw[leg] (T.south west) -- ++(-0.4,-0.4) node[below] {$i$};
        \draw[leg] (T.south east) -- ++(0.4,-0.4) node[below] {$j$};
        \draw[leg] (T.north west) -- ++(-0.4,0.4) node[above] {$k$};
        \draw[leg] (T.north east) -- ++(0.4,0.4) node[above] {$l$};
    \end{tikzpicture}
    \qquad \qquad \qquad
    (u^\dagger)_{i,j}^{k,l} \quad = \quad
    \begin{tikzpicture}[baseline=(current bounding box.center), scale=0.8]
        % Main Tensor
        \node[Ustar] (T) at (0,0) {$\lrcorner$};
        
        % Legs with labels
        \draw[leg] (T.south west) -- ++(-0.4,-0.4) node[below] {$i$};
        \draw[leg] (T.south east) -- ++(0.4,-0.4) node[below] {$j$};
        \draw[leg] (T.north west) -- ++(-0.4,0.4) node[above] {$k$};
        \draw[leg] (T.north east) -- ++(0.4,0.4) node[above] {$l$};
    \end{tikzpicture}
\end{equation}
In this graphical notation, we can describe unitarity and dual unitarity as
\begin{equation}
\begin{tikzpicture}[baseline=(current bounding box.center), scale=0.8]
    % Term 1
    \node[U] (U1) at (0, 0.75) {$\urcorner$};
    \node[Ustar] (U2) at (0, -0.75) {$\lrcorner$};
    
    % Outer legs
    \draw[leg] (U1.north west) -- ++(-0.4, 0.4) node[above] {$k$};
    \draw[leg] (U1.north east) -- ++(0.4, 0.4) node[above] {$l$};
    \draw[leg] (U2.south west) -- ++(-0.4, -0.4) node[below] {$i$};
    \draw[leg] (U2.south east) -- ++(0.4, -0.4) node[below] {$j$};
    
    % Smooth loop connections (tightened looseness for condensed spacing)
    \draw[leg] (U1.south west) to[out=-135, in=135, looseness=1.4] (U2.north west);
    \draw[leg] (U1.south east) to[out=-45, in=45, looseness=1.4] (U2.north east);
    
    \node at (1.4, 0) {$=$};
    
    % Term 2
    \begin{scope}[shift={(2.8, 0)}]
        \node[Ustar] (U3) at (0, 0.75) {$\lrcorner$};
        \node[U] (U4) at (0, -0.75) {$\urcorner$};
        
        \draw[leg] (U3.north west) -- ++(-0.4, 0.4) node[above] {$k$};
        \draw[leg] (U3.north east) -- ++(0.4, 0.4) node[above] {$l$};
        \draw[leg] (U4.south west) -- ++(-0.4, -0.4) node[below] {$i$};
        \draw[leg] (U4.south east) -- ++(0.4, -0.4) node[below] {$j$};
        
        \draw[leg] (U3.south west) to[out=-135, in=135, looseness=1.4] (U4.north west);
        \draw[leg] (U3.south east) to[out=-45, in=45, looseness=1.4] (U4.north east);
    \end{scope}
    
    \node at (4.2, 0) {$=$};
    
    % Term 3 (Result)
    \begin{scope}[shift={(5.2, 0)}]
        \draw[leg] (-0.4, 1.15) node[above] {$k$} -- (-0.4, -1.15) node[below] {$i$};
        \draw[leg] (0.4, 1.15) node[above] {$l$} -- (0.4, -1.15) node[below] {$j$};
        \node at (0.8, -0.2) {$,$};
    \end{scope}
\end{tikzpicture}
\end{equation}

% Bottom Equation: Dual/Spatial Unitarity
\begin{equation}
\begin{tikzpicture}[baseline=(current bounding box.center), scale=0.8]
    % Term 1
    \node[U] (U1) at (-0.8, 0) {$\urcorner$};
    \node[Ustar] (U2) at (0.8, 0) {$\ulcorner$};
    
    % Outer legs
    \draw[leg] (U1.north west) -- ++(-0.4, 0.4) node[above] {$k$};
    \draw[leg] (U1.south west) -- ++(-0.4, -0.4) node[below] {$i$};
    \draw[leg] (U2.north east) -- ++(0.4, 0.4) node[above] {$l$};
    \draw[leg] (U2.south east) -- ++(0.4, -0.4) node[below] {$j$};
    
    % Smooth loop connections (tightened looseness for condensed spacing)
    \draw[leg] (U1.north east) to[out=45, in=135, looseness=1.35] (U2.north west);
    \draw[leg] (U1.south east) to[out=-45, in=-135, looseness=1.35] (U2.south west);
    
    \node at (1.9, 0) {$=$};
    
    % Term 2
    \begin{scope}[shift={(3.8, 0)}]
        \node[Ustar] (U3) at (-0.8, 0) {$\ulcorner$};
        \node[U] (U4) at (0.8, 0) {$\urcorner$};
        
        \draw[leg] (U3.north west) -- ++(-0.4, 0.4) node[above] {$k$};
        \draw[leg] (U3.south west) -- ++(-0.4, -0.4) node[below] {$i$};
        \draw[leg] (U4.north east) -- ++(0.4, 0.4) node[above] {$l$};
        \draw[leg] (U4.south east) -- ++(0.4, -0.4) node[below] {$j$};
        
        \draw[leg] (U3.north east) to[out=45, in=135, looseness=1.35] (U4.north west);
        \draw[leg] (U3.south east) to[out=-45, in=-135, looseness=1.35] (U4.south west);
    \end{scope}
    
    \node at (5.7, 0) {$=$};
    
    % Term 3 (Result)
    \begin{scope}[shift={(7.6, 0)}]
        \draw[leg] (-1, 0.4) node[left] {$k$} -- (0.6, 0.4) node[right] {$l$};
        \draw[leg] (-1, -0.4) node[left] {$i$} -- (0.6, -0.4) node[right] {$j$};
        \node at (1.3, -0.2) {$.$};
    \end{scope}
\end{tikzpicture}
\end{equation}

The time evolution operator from Equation~\eqref{eq:brickwork-layer} can be similarly written diagrammatically. The diagrammatic approach makes clear where the nomenclature brickwork circuit appears from. We will use diagrams interchangeably in our work, depending on convenience. It is important to emphasize that the diagrammatic notation is entirely rigorous, and any graphical identities that we introduce are precise statements about the operators they represent. 

\begin{equation}
\mathbb{U} = \mathbb{U}_e \mathbb{U}_o \quad = \quad
\begin{tikzpicture}[
    baseline=(current bounding box.center), 
    scale=0.75, 
    tensor/.style={rectangle, draw=black, thick, rounded corners=2pt, minimum size=18pt, inner sep=0pt, align=center},
    U/.style={tensor, fill=red!60!white},
    leg/.style={thick}
]

    % Bottom Layer (Even layer: U_e)
    \node[U] (U00) at (0,0) {$\urcorner$};
    \node[U] (U20) at (2,0) {$\urcorner$};
    \node[U] (U60) at (6,0) {$\urcorner$};

    % Top Layer (Odd layer: U_o)
    \node[U] (U11) at (1,1) {$\urcorner$};
    \node[U] (U51) at (5,1) {$\urcorner$};

    % Shortened from 0.5 to 0.35
    \draw[leg] (U00.south west) -- ++(-0.35,-0.35) node[below=2pt] {$0$};
    \draw[leg] (U00.south east) -- ++(0.35,-0.35) node[below=2pt] {$\frac{1}{2}$};
    \draw[leg] (U20.south west) -- ++(-0.35,-0.35) node[below=2pt] {$1$};
    \draw[leg] (U20.south east) -- ++(0.35,-0.35) node[below=2pt] {$\frac{3}{2}$};
    
    \draw[leg] (U60.south west) -- ++(-0.35,-0.35) node[below=2pt] {$L-1$};
    \draw[leg] (U60.south east) -- ++(0.35,-0.35) node[below=2pt] {$L-\frac{1}{2}$};

    % Even Layer -> Odd Layer
    \draw[leg] (U00.north east) -- (U11.south west);
    \draw[leg] (U20.north west) -- (U11.south east);
    \draw[leg] (U60.north west) -- (U51.south east);

    % Leftmost wire (Site 0)
    \draw[leg] (U00.north west) -- ++(-0.4, 0.4);

    % Rightmost wire (Site L-1/2)
    \draw[leg] (U60.north east) -- ++(0.4, 0.4);

    \draw[leg] (U11.north west) -- ++(-0.35,0.35);
    \draw[leg] (U11.north east) -- ++(0.35,0.35);
    \draw[leg] (U51.north west) -- ++(-0.35,0.35);
    \draw[leg] (U51.north east) -- ++(0.35,0.35);

    % Wires extending into the dotted gap from the left
    \draw[leg] (U20.north east) -- ++(0.35,0.35);
    
    % Wires extending into the dotted gap from the right
    \draw[leg] (U51.south west) -- ++(-0.35,-0.35);

    % The vertical column of horizontal dots
    \node at (3.5, 0)   {$\cdots$};
    \node at (3.5, 0.5) {$\cdots$};
    \node at (3.5, 1)   {$\cdots$};
    \node at (3.5, 1.5) {$\cdots$};
\end{tikzpicture}
\end{equation}

\subsection{Folded dual-unitary circuits}
\label{sec:folded-circuits}

We now introduce the notion of folded brickwork unitary circuits. For some motivation, consider the usual Heisenberg picture dynamics, where operators evolve as $O \rightarrow uOu^{\dag}$. This admits a \textit{vectorized} picture, where we represent operators as higher dimensional states. Specifically, the vectorization isomorphism is $O \rightarrow \mathrm{vec}(O) := \ket{O} := \sum_{i} (I \otimes O^{T})\ket{i} \otimes \ket{i}$. This uniquely extends to a map from channels to operators. In the vectorized picture, the channel $(\cdot) \rightarrow u(\cdot)u^{\dag}$ is mapped to the operator $u \otimes u^{*}$ that acts on vectors of the form $\ket{O}$. Consequently, our overall brickwork layer acts as a tensor product of two ``usual" brickwork circuits, one of which has $u^{*}$ as its constituent unitary instead of $u$.

To represent these tensor products compactly diagrammatically, we overlay one of these circuits over the other, with the understanding that if there are gates at the same location, we are to take their tensor product. A more compact notation can be introduced for such pairs of $u$ and $u^*$.

\begin{equation}
    u \otimes u^* \ = \ 
    \begin{tikzpicture}[
        baseline=(current bounding box.center), 
        scale=0.8,
        % Define styles locally
        tensor/.style={rectangle, draw=black, thick, rounded corners=2pt, minimum size=18pt, inner sep=0pt, align=center},
        U/.style={tensor, fill=red!60!white},
        Ustar/.style={tensor, fill=blue!60!white},
        ident/.style={tensor, fill=orange!80!white},
        leg/.style={thick}
    ]

        % 1. Back Tensor (U*, Blue)
        % Drawn first so it naturally sits behind the red tensor
        \begin{scope}[shift={(0.2, 0.2)}]
            % Draw legs from the center outwards
            \draw[leg] (0,0) -- ++(-0.6, -0.6);
            \draw[leg] (0,0) -- ++(0.6, -0.6);
            \draw[leg] (0,0) -- ++(-0.6, 0.6);
            \draw[leg] (0,0) -- ++(0.6, 0.6);
            % Draw the blue node on top of its own legs
            \node[Ustar] at (0,0) {\hspace{2pt}\vspace{2pt}$\urcorner$};
        \end{scope}

        % 2. Front Tensor (U, Red)
        % Drawn second so it covers the blue tensor and its inner legs
        \begin{scope}[shift={(0, 0)}]
            % Draw legs from the center outwards
            \draw[leg] (0,0) -- ++(-0.6, -0.6);
            \draw[leg] (0,0) -- ++(0.6, -0.6);
            \draw[leg] (0,0) -- ++(-0.6, 0.6);
            \draw[leg] (0,0) -- ++(0.6, 0.6);
            % Draw the red node on top of its own legs
            \node[U] at (0,0) {$\urcorner$};
        \end{scope}

    \end{tikzpicture}
    \quad = \quad
    \begin{tikzpicture}[
        baseline=(current bounding box.center), 
        scale=0.8,
        tensor/.style={rectangle, draw=black, thick, rounded corners=2pt, minimum size=18pt, inner sep=0pt, align=center},
        ident/.style={tensor, fill=orange!80!white},
        leg/.style={thick}
    ]
        % 3. Identity Tensor (Orange)
        % Draw legs
        \draw[leg] (0,0) -- ++(-0.6, -0.6);
        \draw[leg] (0,0) -- ++(0.6, -0.6);
        \draw[leg] (0,0) -- ++(-0.6, 0.6);
        \draw[leg] (0,0) -- ++(0.6, 0.6);
        
        % Draw the orange node
        \node[ident] at (0,0) {$1$};
    \end{tikzpicture}
    \ .
\end{equation}
The number inside the orange box refers to the number of pairs of $u$ and $u^{*}$ at a particular location. We call the dual-unitary circuit consisting of $k$ pairs of $u$ and $u*$ at each location the \textit{$k$-fold dual-unitary brickwork circuit}. For complete clarity, we illustrate in the next equation one layer of this object.

\begin{equation}
\mathbb{U}^{(k)} := \mathbb{U}^{(k)}_e \mathbb{U}^{(k)}_o =
\begin{tikzpicture}[
    baseline=(current bounding box.center), 
    scale=0.75, 
    tensor/.style={rectangle, draw=black, thick, rounded corners=2pt, minimum size=18pt, inner sep=0pt, align=center},
    Ufold/.style={tensor, fill=orange!80!white},
    leg/.style={thick}
]
    % Bottom Layer (Even layer)
    \node[Ufold] (U00) at (0,0) {$k$};
    \node[Ufold] (U20) at (2,0) {$k$};
    \node[Ufold] (U60) at (6,0) {$k$};

    % Top Layer (Odd layer)
    \node[Ufold] (U11) at (1,1) {$k$};
    \node[Ufold] (U51) at (5,1) {$k$};

    \draw[leg] (U00.south west) -- ++(-0.35,-0.35) node[below=2pt] {$0$};
    \draw[leg] (U00.south east) -- ++(0.35,-0.35) node[below=2pt] {$\frac{1}{2}$};
    \draw[leg] (U20.south west) -- ++(-0.35,-0.35) node[below=2pt] {$1$};
    \draw[leg] (U20.south east) -- ++(0.35,-0.35) node[below=2pt] {$\frac{3}{2}$};
    
    \draw[leg] (U60.south west) -- ++(-0.35,-0.35) node[below=2pt] {$L-1$};
    \draw[leg] (U60.south east) -- ++(0.35,-0.35) node[below=2pt] {$L-\frac{1}{2}$};

    % Even Layer -> Odd Layer
    \draw[leg] (U00.north east) -- (U11.south west);
    \draw[leg] (U20.north west) -- (U11.south east);
    \draw[leg] (U60.north west) -- (U51.south east);

    % Leftmost wire (Site 0)
    \draw[leg] (U00.north west) -- ++(-0.4, 0.4);

    % Rightmost wire (Site L-1/2)
    \draw[leg] (U60.north east) -- ++(0.4, 0.4);

    \draw[leg] (U11.north west) -- ++(-0.35,0.35);
    \draw[leg] (U11.north east) -- ++(0.35,0.35);
    \draw[leg] (U51.north west) -- ++(-0.35,0.35);
    \draw[leg] (U51.north east) -- ++(0.35,0.35);

    % Wires extending into the dotted gap from the left
    \draw[leg] (U20.north east) -- ++(0.35,0.35);
    
    % Wires extending into the dotted gap from the right
    \draw[leg] (U51.south west) -- ++(-0.35,-0.35);

    % The vertical column of horizontal dots
    \node at (3.5, 0)   {$\cdots$};
    \node at (3.5, 0.5) {$\cdots$};
    \node at (3.5, 1)   {$\cdots$};
    \node at (3.5, 1.5) {$\cdots$};
\end{tikzpicture}
\end{equation}

We emphasize that most of our work in this paper will be in this vectorized setting. When we talk of a $k$-folded circuit acting on an operator, we will mean its action on the vectorization of that operator.

\subsection{Quantum state $k$-designs}
\label{sec:state-designs}
We now introduce some of the quantum information theoretic tools we need, specifically the Haar measure, moments of Haar random states, and state $k$-designs. 

Since the unitary group is a compact topological group, there exists a unique normalized measure on it which is translation invariant under the group action. We call this the Haar measure $H$, and it provides us with a notion of true randomness which we can compare the distribution of our ensembles to. This Haar measure on unitaries also extends to states in the obvious way. Fix a pure state $\rho$, and evolve it by a Haar random unitary. The resulting distribution of states is said to be Haar random as well. We denote the Haar measure on the space of states as $\mu$.

A quantum state $k$-design is an approximation to the Haar ensemble which matches its first $k$ moments ($k+1$ if we include the $0$ moment, $\mathbb{I}$). We will consider in this work approximate state $k$-designs, which match these moments up to some error.

\begin{definition}[State $k$-design]
Given an ensemble of states on $\mathbb{C}^{d}$, denoted as $\mathcal{E} = \{(p_i,\ket{\psi_i})\}$, we call it an $\varepsilon$-approximate state $k$-design if
\begin{equation}
    \left\Vert \sum_{i}p_{i}\ketbra{\psi_{i}}^{\otimes k} -  \int \ketbra{\psi}^{\otimes k} \dd \mu(\psi) \right\Vert_{1} \leq \varepsilon.
\end{equation}
\end{definition} 
Here, the norm $\Vert \cdot \Vert_{1}$ refers to the trace norm. It is useful to write the Haar moment in terms of
permutation matrices. For $\pi\in S_k$, define
\begin{equation}
    P_\pi^{(d)}\ket{i_1,\ldots,i_k} :=\ket{i_{\pi^{-1}(1)},\ldots,i_{\pi^{-1}(k)}}.
    \label{eq:replica-permutation}
\end{equation}

It turns out that the moments of the Haar ensemble can be written as the (normalized) projector on the symmetric subspace of $(\mathbb{C}^{d})^{\otimes k}$ as \cite{harrow2013church} 
\begin{equation}
    \int \ketbra{\psi}^{\otimes k} \dd \mu(\psi) = \frac{1}{k!\binom{d+k-1}{k}}\sum_{\pi \in S_{k}} P_{\pi}.
\end{equation}
As a result, $k$-design formation in an ensemble of quantum states is equivalent to the corresponding moment operator converging to this normalized orthogonal projector. 

The $k$-th moment of Haar random states is equal to the $k$-fold twirl of a fixed state $\ket\varphi$ by a Haar random unitary $U$
\begin{equation}
    \int \ketbra{\psi}^{\otimes k} \dd \mu(\psi) = \int U^{\otimes k}\ketbra{\varphi}^{\otimes k}U^{\dag}{}^{\otimes k} \dd \mu(U),
\end{equation}
where on the left-hand side the integration is over the Haar measure on states, while on the right-hand side the integration is over unitaries, which, for simplicity, we also denote as $\mu$. Define the $k$-fold Haar twirl on $(\mathbb{C}^{d})^{\otimes k}$ as 
\begin{equation}
    \Pi^{(k)}(X) := \int U^{\otimes k}X(U^\dagger)^{\otimes k}\dd \mu(U),
\end{equation}
It can be shown \cite{brandao2012local,harrow2013church} that this $k$-fold channel, represented on vectorized operators, projects onto the span of permutation operators $\operatorname{span}\{P_\pi^{(d)}:\pi\in S_k\}$.

\subsection{Projected ensembles and the finite bath setting}
\label{sec:finite-bath-setting}

Consider a state $\ket{\psi}$ on $N$ qubits, and evolve it via a unitary brickwork quantum circuit, say $\mathbb{U}$. Split this system into two disjoint parts $A$ and $B$, and at some later time $t_{\mathrm{phys}}$, measure the subsystem $B$ in some basis. We use time here loosely --- this is a discrete-time quantum process, so we really mean $t_{\mathrm{phys}}$ applications of $\mathbb{U}$. Write $N=N_A+N_B$ and $d=2^{N_A}$. The process yields the projected ensemble of states on the subsystem $A$ indexed by the measurement outcomes on the subsystem $B$
\begin{equation}
    \mathcal{E}_{A} = \Big\{\ket{\psi^{\mathcal{E}}_{i_{1}, i_{2}, \dots i_{N_{B}}}}, ~~p^{\mathcal{E}}_{i_{1}, i_{2}, \dots i_{N_{B}}} \Big\},
\end{equation}
where $\{{i_{1}, i_{2}, \dots i_{N_{B}}}\}$ is really a binary string of length $N_{B}$, the $j$ entry of which tells us which of the two basis states were measured at qubit $j$. $\ket*{\psi_{i_{1}, i_{2}, \dots i_{N_{B}}}}$ then refers to which of the $2^{N_{B}}$ states in the ensemble is realized with measurement outcome $\{{i_{1}, i_{2}, \dots i_{N_{B}}} \}$ on $B$. Writing $i=(i_1,\ldots,i_{N_B})$ for an outcome string, the unnormalized state and its Born probability are
\begin{equation}
    \ket{\tilde\psi_i^{\mathcal E}}
    :=(\mathbb I_A\otimes\bra{i})\mathbb U^{t_{\mathrm{phys}}}\ket{\psi},
\end{equation}
\begin{equation}
    p_i^{\mathcal E}=\|\tilde\psi_i^{\mathcal E}\|^2,
\end{equation}
\begin{equation}
    \ket{\psi_i^{\mathcal E}}
    =\frac{1}{\sqrt{p_i^{\mathcal E}}} \ket{\tilde\psi_i^{\mathcal E}}.
    \label{eq:physical-outcome-states}
\end{equation}
Here, we are assuming the circuit runs for some $t_{\mathrm{phys}}$ timesteps. The last expression is used only when $p_i^{\mathcal E}>0$; zero-probability outcomes contribute zero to the physical moments. The moments of this ensemble can be calculated as 
\begin{equation}
    \rho^{(k)}_{\mathcal{E}_{A}} = \sum_{i=1}^{2^{N_{B}}} (\ketbra{\psi^{\mathcal{E}}_{i}})^{\otimes k} p^{\mathcal{E}}_{i},
    \label{eq:physical-projected-moment}
\end{equation}
where $i$ indexes the various measurement outcomes in the ensemble. In particular, the first moment is just the reduced density matrix corresponding to the subsystem $A$. In light of the previous subsection, the projected ensemble $\mathcal{E}$ forms an approximate state $k$-design if $\rho^{(k)}_{\mathcal{E}_{A}}$ is close to the Haar $k$-moment operator.\\

\subsubsection*{The set-up and results of \cite{Claeys2022emergentquantum}}

In this reference, it is shown that for a large class of dual-unitary projected ensembles, convergence to the Haar moments can be obtained from an analysis of rotated transfer matrices. We use the product-state and single-site measurement geometry of Section 2.2 of that reference, with $A$ adjacent to a closed endpoint and $t_{\mathrm{phys}}>N_A/2$. We describe their argument in some detail here. They begin by considering a modified $k$-th moment operator 
\begin{equation}
    {\rho}^{(n',k)}_{\mathcal{E}_{A}} = \sum_{i=1}^{2^{N_{B}}} \big(\ketbra{\psi^{\mathcal{E}}_{i}}\big)^{\otimes k} (p^{\mathcal{E}}_{i})^{n'},
\end{equation}
where $\rho^{(k)}_{\mathcal{E}_{A}}$ can be obtained by setting $n'=1$. It is useful to rewrite this in terms of unnormalized states $\ket*{\tilde{\psi}^{\mathcal{E}}_{i}}$ as 
\begin{equation}
    \tilde{\rho}^{(n,k)}_{\mathcal{E}_{A}} = \sum_{i=1}^{2^{N_{B}}} \Big(\ketbra{\tilde{\psi}^{\mathcal{E}}_{i}}\Big)^{\otimes k} (p^{\mathcal{E}}_{i})^{n},
\end{equation}
where $n=n'-k$, so that $\tilde\rho^{(n,k)}_{\mathcal E_A}=\rho^{(n+k,k)}_{\mathcal E_A}$, and now the physical moment operator $\rho^{(k)}_{\mathcal{E}_{A}}$ is obtained by setting $n=1-k$. For nonnegative integer $n$, put $m=n+k$. Ref.~\cite{Claeys2022emergentquantum} then represents $\tilde\rho^{(n,k)}_{\mathcal{E}_{A}}$ by the following diagram. Each gate carries $m$ folded replicas; at $N_{A}$, $n$ replicas are traced out and the remaining $k$ are left open. 

\begin{equation}
\tilde\rho_{\mathcal E_A}^{(n,k)}
=
\sum_{z_B}
\;
\begin{tikzpicture}[baseline=12mm, 
tensor/.style={rectangle, draw=black, thick, rounded corners=2.5pt, fill=orange!80!white, minimum width=7.5mm, minimum height=7.5mm, inner sep=0pt, font=\small\bfseries},
leg/.style={thick, draw=black, line cap=round, line join=round},
arrowleg/.style={thick, draw=black, -{Latex[length=2.4mm,width=1.9mm]}},
dot/.style={circle, fill=black, inner sep=1.45pt},
cup/.style={thick,draw=black}
]

% Top row
\node[tensor] (T1) at (0.0,3.20) {$m$};
\node[tensor] (T2) at (2.0,3.20) {$m$};
\node[tensor] (T3) at (4.0,3.20) {$m$};
\node[tensor] (T4) at (6.0,3.20) {$m$};

% Second row
\node[tensor] (M1) at (1.0,2.20) {$m$};
\node[tensor] (M2) at (3.0,2.20) {$m$};
\node[tensor] (M3) at (5.0,2.20) {$m$};
\node[tensor] (M4) at (7.0,2.20) {$m$};

% Third row
\node[tensor] (B1) at (0.0,1.20) {$m$};
\node[tensor] (B2) at (2.0,1.20) {$m$};
\node[tensor] (B3) at (4.0,1.20) {$m$};
\node[tensor] (B4) at (6.0,1.20) {$m$};

% Fourth row
\node[tensor] (C1) at (1.0,0.20) {$m$};
\node[tensor] (C2) at (3.0,0.20) {$m$};
\node[tensor] (C3) at (5.0,0.20) {$m$};
\node[tensor] (C4) at (7.0,0.20) {$m$};

% Fifth row
\node[tensor] (D1) at (0.0,-0.80) {$m$};
\node[tensor] (D2) at (2.0,-0.80) {$m$};
\node[tensor] (D3) at (4.0,-0.80) {$m$};
\node[tensor] (D4) at (6.0,-0.80) {$m$};

% Top -> second
\draw[leg] (T1.south east) -- (M1.north west);
\draw[leg] (T2.south west) -- (M1.north east);

\draw[leg] (T2.south east) -- (M2.north west);
\draw[leg] (T3.south west) -- (M2.north east);

\draw[leg] (T3.south east) -- (M3.north west);
\draw[leg] (T4.south west) -- (M3.north east);

\draw[leg] (T4.south east) -- (M4.north west);

% Second -> third
\draw[leg] (M1.south west) -- (B1.north east);
\draw[leg] (M1.south east) -- (B2.north west);

\draw[leg] (M2.south west) -- (B2.north east);
\draw[leg] (M2.south east) -- (B3.north west);

\draw[leg] (M3.south west) -- (B3.north east);
\draw[leg] (M3.south east) -- (B4.north west);

\draw[leg] (M4.south west) -- (B4.north east);

% Third -> fourth
\draw[leg] (B1.south east) -- (C1.north west);
\draw[leg] (B2.south west) -- (C1.north east);

\draw[leg] (B2.south east) -- (C2.north west);
\draw[leg] (B3.south west) -- (C2.north east);

\draw[leg] (B3.south east) -- (C3.north west);
\draw[leg] (B4.south west) -- (C3.north east);

\draw[leg] (B4.south east) -- (C4.north west);

% Fourth -> fifth
\draw[leg] (C1.south west) -- (D1.north east);
\draw[leg] (C1.south east) -- (D2.north west);

\draw[leg] (C2.south west) -- (D2.north east);
\draw[leg] (C2.south east) -- (D3.north west);

\draw[leg] (C3.south west) -- (D3.north east);
\draw[leg] (C3.south east) -- (D4.north west);

\draw[leg] (C4.south west) -- (D4.north east);

% Caps on T1 and T2
\coordinate (c11) at ($(T1.north west)+(-0.26,0.26)$);
\coordinate (c12) at ($(T1.north east)+( 0.26,0.26)$);
\coordinate (c21) at ($(T2.north west)+(-0.26,0.26)$);
\coordinate (c22) at ($(T2.north east)+( 0.26,0.26)$);

\draw[leg] (T1.north west) -- (c11);
\draw[leg] (T1.north east) -- (c12);
\draw[leg] (T2.north west) -- (c21);
\draw[leg] (T2.north east) -- (c22);

\foreach \C in {c11,c12,c21,c22}{
    \node[draw,rectangle,fill=white,inner sep=1.5pt,font=\scriptsize]
        (Q\C) at (\C) {$\mathrm{tr}_n$};
    \draw[leg] (Q\C.north) -- ++(0,0.38);
}

% z1, z2, z3 on the next top legs
\coordinate (z1b) at ($(T3.north west)+(-0.26,0.26)$);
\coordinate (z2b) at ($(T3.north east)+( 0.26,0.26)$);
\coordinate (z3b) at ($(T4.north west)+(-0.26,0.26)$);
\coordinate (z4b) at ($(T4.north east)+( 0.26,0.26)$);

\draw[leg] (T3.north west) -- (z1b);
\draw[leg] (T3.north east) -- (z2b);
\draw[leg] (T4.north west) -- (z3b);
\draw[leg] (T4.north east) -- (z4b);

\draw[arrowleg] (z1b) -- ++(0,0.92);
\node at ($(z1b)+(0,1.22)$) {$z_1$};

\draw[arrowleg] (z2b) -- ++(0,0.92);
\node at ($(z2b)+(0,1.22)$) {$z_2$};

\draw[arrowleg] (z3b) -- ++(0,0.92);
\node at ($(z3b)+(0,1.22)$) {$z_3$};

\node at ($(z4b)+(-0.25,1.05)$) {$\cdots$};

% Upper-right leg goes to z_{N_B}
\coordinate (rtop) at ($(M4.north east)+(0.26,0.26)$);
\draw[leg] (M4.north east) -- (rtop);
\draw[arrowleg] (rtop) -- ++(0,2.05);
\node at ($(rtop)+(0,2.35)$) {$z_{N_B}$};

% Lower-right leg goes out, straight down, then diagonally back into C4
\coordinate (rupper) at ($(M4.south east)+(0.26,-0.26)$);
\coordinate (rlower) at ($(C4.north east)+(0.26,0.26)$);

\draw[leg] (M4.south east) -- (rupper) -- (rlower) -- (C4.north east);

% Left legs
\coordinate (d0) at ($(B1.south west)+(-0.38,-0.38)$);
\draw[leg] (B1.south west) -- (d0);
\node[dot] at (d0) {};

\coordinate (d1) at ($(T1.south west)+(-0.38,-0.38)$);
\draw[leg] (T1.south west) -- (d1);
\node[dot] at (d1) {};

% Bottom row dots from D row
\foreach \X in {D1,D2,D3,D4}{
    \coordinate (dl) at ($(\X.south west)+(-0.38,-0.38)$);
    \coordinate (dr) at ($(\X.south east)+( 0.38,-0.38)$);

    \draw[leg] (\X.south west) -- (dl);
    \draw[leg] (\X.south east) -- (dr);

    \node[dot] at (dl) {};
    \node[dot] at (dr) {};
}.

\end{tikzpicture}
\end{equation}

This can be split into three vertical parts, of which one repeats $\ell=(N_B-1)/2$ times. We use this paired-column convention, with $N_B=2\ell+1$, and a circuit, initial state, and measurement pattern that repeat across the two-site bath cells. 

It is convenient to view this in the rotated picture, with space and time flipped --- this follows from the dual-unitary of the underlying circuit. In this `auxiliary` picture, one applies the circuit right to left; the right boundary is shown to correspond to an input state, the repeated segment (which we call the auxiliary column, and is indexed by the measurement) is applied $\ell$ times, and then we have a left boundary which traces out $n$ replicas and maps the previous output from  $\mathbb{C}^{D}$ to $\mathbb{C}^{d}$ (with $D=2^{2t_{\mathrm{phys}}-1}$, $d=2^{N_A}$). Define $t:=2t_{\mathrm{phys}}-1$ as the number of free auxiliary sites, indexed $1,\ldots,t$. In general, $d,D,t$ remain fixed when the bath size grows at fixed physical time and subsystem size.

In Ref.~\cite{Claeys2022emergentquantum} they then restrict themselves to \textit{solvable} dual unitary circuits (we shall do this as well). Let us make this precise. Take $C_\xi:\mathbb C^D\to\mathbb C^D$ an unnormalized auxiliary column for the two-site outcome $\xi\in\{0,1\}^2$, $F_A:\mathbb C^D\to\mathbb C^d$ the end map onto $A$, and $\ket{r_a}\in\mathbb C^D$ the remaining endpoint vector, with $a\in\{0,1\}$. The unfolded network gives
\begin{equation}
    \ket{\tilde\psi_i^{\mathcal E}}
    =F_A C_{\xi_\ell}\cdots C_{\xi_1}\ket{r_a},\ i=(a,\xi_1,\ldots,\xi_\ell).
    \label{eq:raw-auxiliary-boundary-decomposition}
\end{equation}

\begin{definition}[Solvability in the single-site measurement geometry]
\label{def:boundary-solvability}
    We call the circuit, initial product state, and measurement basis solvable in this geometry when their auxiliary tensors satisfy
    \begin{equation}
    C_\xi^\dagger C_\xi=C_\xi C_\xi^\dagger
        =\frac14\mathbb I_D,\ \xi\in\{0,1\}^2,
        \label{eq:solvable-columns}
\end{equation}
    \begin{equation}
    F_AF_A^\dagger
        =2^{t_{\mathrm{phys}}-N_A}\mathbb I_d,
\end{equation}
\begin{equation}
    \langle r_a|r_a\rangle=2^{t_{\mathrm{phys}}-2},\ a\in\{0,1\},
        \label{eq:solvable-end-maps}
\end{equation}
    and the normalized single-copy initial-state edge $a_L$ of each transfer step is unitary. These identities are required for the physical times and subsystem sizes under consideration, with $t_{\mathrm{phys}}>N_A/2$.
\end{definition}

 In fact, we demand no structure from $A_{L}$ and $A_{R}$ other than their solvability (although we will conjecture some properties for the boundaries to be generic, and we will assume that solvability does not restrict us to the vanishing subset of these boundaries where the properties do not hold). Solvable boundaries are explicitly constructed in \cite{Claeys2022emergentquantum}, and ensure unitary auxiliary steps and the stated boundary normalizations.

Define $U_\xi:=2C_\xi$. By Eq.\ \eqref{eq:solvable-columns}, each $U_\xi$ is unitary. At replica order $m$, they define the transfer matrix, which we present here in the notation that we have developed. 
\begin{equation}
\begin{aligned}
    \mathbb T^{(m)}
    &:=\frac14\sum_{\xi\in\{0,1\}^2}
      (U_\xi\otimes U_\xi^*)^{\otimes m}.
\end{aligned}
    \label{eq:normalized-auxiliary-transfer}
\end{equation}

It acts on a vectorized space of dimension $D^{2m}=2^{2mt}$, or equivalently on operators in $\mathcal L((\mathbb C^D)^{\otimes m})$. 

In our notation, we will call the left and right boundaries of $\mathbb T^{(m)}$ $A_L$ and $A_R$ respectively. Here $A_L=(a_L\otimes a_L^*)^{\otimes m}$ is the folded initial-state edge, distinct from the end map $F_A$. Write the outcome-dependent single-copy right edge as $a_{R,\xi}$ and set
\begin{equation}
    A_{R,\xi}:=(a_{R,\xi}\otimes a_{R,\xi}^*)^{\otimes m},
\end{equation}
\begin{equation}
    A_R:=\frac14\sum_{\xi\in\{0,1\}^2}A_{R,\xi}.
    \label{eq:averaged-right-edge}
\end{equation}
The single-copy column factors as $U_\xi=(a_L\otimes\mathbb I)u_{\mathrm{bulk}}(\mathbb I\otimes a_{R,\xi})$. Since $U_\xi$, $a_L$, and the space-time-dual bulk circuit $u_{\mathrm{bulk}}$ are unitary, each $a_{R,\xi}$ is unitary. The transfer matrix itself is a sum over unitary terms; it is $A_R$ which can be non-unitary when the measurement outcomes are averaged over, and other than this right boundary, nothing about this operator is contractive. Indeed, if we were to apply the transfer matrix without the right boundary to an arbitrary operator in $\mathcal L((\mathbb C^D)^{\otimes m})$, the Hilbert--Schmidt norm would stay constant.

Altogether, in our notation,
\begin{equation}
    \mathbb T^{(m)}
    =(A_L\otimes\mathbb I_{t-m_L})\,\mathbb U^{(m)}
      (\mathbb I_{t-m_R}\otimes A_R),
\end{equation}
\begin{equation}
    \mathbb U^{(m)}:=(u_{\mathrm{bulk}}\otimes u_{\mathrm{bulk}}^*)^{\otimes m},
    \label{eq:auxiliary-transfer-factorization}
\end{equation}
which we depict below. Here $m_L,m_R$ are the numbers of auxiliary sites in the boundary factors, $\mathbb I_s$ is the identity on $s$ folded auxiliary sites, and $\mathbb U^{(m)}$ includes the gates coupling the boundary sites to the interior. Note that in the figure, we have rotated the transfer matrix clockwise $90$ degrees --- the unitarity of its rotated bulk gates follows from dual-unitarity.

\begin{figure}[h!]
\centering
\begin{equation}
\mathbb{T}^{(m)} = \frac14\sum_{\xi=(z_0,z_1)\in\{0,1\}^2} \quad
\begin{tikzpicture}[
    baseline=(current bounding box.center), 
    scale=0.8,
    tensor/.style={rectangle, draw=black, thick, rounded corners=2pt, minimum size=20pt, inner sep=0pt, align=center},
    T/.style={tensor, fill=orange!80!white},
    leg/.style={thick},
    >=latex 
]

    \node[T] (AL) at (0,1) {$A_L$};
    \node[T] (B1) at (1,0) {$m$};
    \node[T] (T1) at (2,1) {$m$};
    
    % Internal connections (Left)
    \draw[leg] (AL.south east) -- (B1.north west);
    \draw[leg] (B1.north east) -- (T1.south west);
    \draw[leg] (T1.south east) -- ++(0.4, -0.4);

    % Left boundary wires with dots (A_L)
    \draw[leg] (AL.north west) -- ++(-0.4, 0.4) node[circle, fill=black, inner sep=1.2pt] {};
    \draw[leg] (AL.south west) -- ++(-0.4, -0.4) node[circle, fill=black, inner sep=1.2pt] {};

    % Open wires (Left)
    \draw[leg] (AL.north east) -- ++(0.4, 0.4);
    \draw[leg] (T1.north west) -- ++(-0.4, 0.4);
    \draw[leg] (T1.north east) -- ++(0.4, 0.4);
    \draw[leg] (B1.south west) -- ++(-0.4, -0.4);
    \draw[leg] (B1.south east) -- ++(0.4, -0.4);

    \node at (3.5, 1) {$\cdots$};
    \node at (3.5, 0) {$\cdots$};

    \node[T] (Bn) at (5,0) {$m$};
    \node[T] (Tn) at (6,1) {$m$};
    \node[T] (AR) at (7,0) {$A_{R,\xi}$};
    
    % Internal connections (Right)
    \draw[leg] (Bn.north west) -- ++(-0.4, 0.4);
    \draw[leg] (Bn.north east) -- (Tn.south west);
    \draw[leg] (Tn.south east) -- (AR.north west);

    % Open wires (Right)
    \draw[leg] (Tn.north west) -- ++(-0.4, 0.4);
    \draw[leg] (Tn.north east) -- ++(0.4, 0.4);
    \draw[leg] (Bn.south west) -- ++(-0.4, -0.4);
    \draw[leg] (Bn.south east) -- ++(0.4, -0.4);
    \draw[leg] (AR.south west) -- ++(-0.4, -0.4);

    % Right output arrows (A_R)
    \draw[leg, ->] (AR.north east) -- ++(0.3, 0.3) -- ++(0.4, 0) node[right, inner sep=2pt] {$z_0$};
    \draw[leg, ->] (AR.south east) -- ++(0.3, -0.3) -- ++(0.4, 0) node[right, inner sep=2pt] {$z_1$};

    \draw[<->, >=stealth, thick, draw=black!60] (0, -1.3) -- (7, -1.3) 
        node[midway, fill=white, font=\footnotesize, text=black!80] {$t$ auxiliary sites};

\end{tikzpicture}
\end{equation}
\caption{The $m$-fold transfer matrix with space and time flipped. Each folded site has local dimension $2^{2m}$ (or $q^{2m}$ for qudits), and the folded two-site operators $u^{\otimes m}\otimes u^{*\otimes m}$ are $2^{4m}\times2^{4m}$ matrices. The boundary tensors are normalized as in Equation~\eqref{eq:normalized-auxiliary-transfer}; averaging the fixed-outcome edge $A_{R,\xi}$ gives $A_R$.}
\label{fig:transfer-matrix}
\end{figure}

They then demonstrate that in the $N_{B} \rightarrow \infty$ limit, one has convergence of $\mathbb{T}^{(m)}$ to the (vectorization of) $\Pi^{(m)}$, by identifying that the fixed left- and right- eigenvectors of this operator are exactly the permutation operators. Since $\mathbb{T}^{(m)}$ is made up of unitary operators and contractions, it is itself a contraction, and hence as the bath size goes to infinity, all contributions from non-unimodular eigenvalues of $\mathbb{T}^{(m)}$ must vanish.

After this, they contract the boundaries of the space-time flipped circuit; these consist of contractions with an initial state, and the map ${F}_{A}$. They show that the initial state contraction contributes a scalar, while $F_{A}$ converts the permutations on the auxiliary $\mathbb{C}^{D}$ to those on $\mathbb{C}^{d}$. For $t_{\mathrm{phys}}>N_A/2$, the result is
proportional to the Haar $k$ moment, with the extra $n$ replicas affecting only the scalar prefactor. They handle this prefactor with an analytical continuation to $n = 1-k$.

Our work will use this exact setting, except for the assumption that $N_{B} \rightarrow \infty$. We study the exact same question, and follow this construction, but at every step ask: how does $N_{B}$ have to scale with $k$? This will complicate the mathematics considerably, and we will have to develop the analysis carefully. In the next section, we sketch the proof that the rest of the paper details.

\section{Proof sketch}
\label{sec:proofsketch}

Throughout this work, $\Vert\cdot\Vert$ denotes the Hilbert-space norm on vectors and the corresponding induced operator norm on linear maps. When an operator $O$ is regarded as a vector under vectorization, $\Vert O\Vert$ is therefore its Hilbert--Schmidt norm. We also use
\begin{equation}
    \|X\|_1:=\operatorname{tr}\sqrt{X^\dagger X},
    \qquad
    \|X\|_2:=\sqrt{\operatorname{tr}(X^\dagger X)},
    \qquad
    \|\Phi\|_{2\to2}:=\sup_{X\ne0}\frac{\|\Phi[X]\|_2}{\|X\|_2}.
    \label{eq:moment-norm-conventions}
\end{equation}

As discussed previously, $\mathrm{vec}(\cdot)$ will denote the \textit{vectorization} operation that one gets by taking the Choi state of the operator, and $P_{\pi}$ is the permutation matrix on $k$ qubits corresponding to a particular permutation in $S_{k}$. 

We break our proof into two broad sub-problems:

\begin{enumerate}
    \item Show that $\mathbb{T}^{(m)}$ converges to $\Pi^{(m)}$, and determine the rate of convergence as a function of the replica number $m$.

    \begin{enumerate}
        \item We begin by studying the dynamics of defects under evolution by $\mathbb{T}^{(m)}$. These defects are the set of all operators which lie outside the permutation sector, or equivalently, form $\mathrm{ker}(\Pi^{(m)})$. Since the only contractive part of $\mathbb{T}^{(m)}$ is at the right boundary $A_{R}$, defects which are localized far away from the right edge must somehow spread to the end, where they may be contracted.
        
        To this end, we replace $A_{R}$ by a hard absorber at the location $t$, thereby going from $\mathbb{T}^{(k)}$ to a different process, $\mathbb{K}_{t}$. We then make two assumptions. The first of these, local permutation mixing (LPM), ensures that the defects cannot be stalled; over two steps, they must either move or be absorbed. The second, a subspace recurrence bound, makes sure the defects do not get stuck within the bulk and bounds the right defect backflow current, modulo an initial transient. Combining these two assumptions, we show a positive worst-case defect current over a $\mathrm{poly}(m)$ number of steps, which means that the defects spread efficiently. This is detailed in Sections~\ref{sec:free-walk}--\ref{sec:negative-current}.
        
        \item We then show a `drift-or-hit' dichotomy for $\mathbb{K}_{t}$, which, along with the fact that defects can only drift so far on a finite chain, turns into a hitting time bound for the hard absorber. Using this, we show that a constant fraction of a defect is absorbed within
        $\mathrm{poly}(m,t)$ steps. This is the content of
        Section~\ref{sec:hard-absorber}.

        \item Finally, in Section ~\ref{sec:soft-absorption}, we transition back to the physical $\mathbb{T}^{(k)}$ by comparing the hard absorber to the physical $A_{R}$. We assume that $A_R$, aided by mixing from the brickwork circuit, contracts boundary defect weight
        by an inverse-polynomial amount over a block of size
        $q\leq\mathrm{poly}(m)$. With this, we are able to show that the comparison only loses a polynomial factor.

    Altogether, we are able to show that $\mathbb{T}^{(k)}$ contracts defect vectors a constant amount in a timescale that goes as $N_{B} \sim \mathrm{poly}(m,t)$.
    
    \end{enumerate}
    \item Show that applying $F_{A}$ to the result and accounting for the normalization indeed leads to state-design formation on $A$. For this, the main difficulty is the normalization of the projected states. Applying $F_A$ is linear, but normalizing its output introduces inverse powers of an outcome-dependent norm.

    We handle this by approximating the inverse power of the norm by a polynomial in the positive powers, which are controlled by the previous subproblem. At fixed accuracy and subsystem size, the required polynomial degree, and hence the largest replica order needed, grow only polynomially with $k$. The polynomial coefficients and dimension factors enter logarithmically in the required bath size, since the auxiliary error decays exponentially. 
    
    Combining the two parts gives approximate state-design formation for $N_B\leq\mathrm{poly}(k,t)$, under our quantitative assumptions. This is described in detail in \autoref{sec:normalization-to-design}.
\end{enumerate}

\section{Defect dynamics: structure and assumptions}
\label{sec:defect-structure}
In this section, we set up the terminology and assumptions that we need for our construction.

 \subsection{Local permutation sectors in $\mathcal{L}((\mathbb{C}^{2})^{\otimes 2m})$.}
\label{sec:local-permutation-sectors}

To warm up, we recall a well-known fact regarding brickwork dual-unitary circuits. Take a two-site $m$-fold dual-unitary operator $u_{i,i+1}$, and act with it on a single-site \textit{traceless} operator. How does this operator propagate through the unitary? It turns out that the dual unitarity of $u$ necessitates that the operator must spread --- it cannot stay fixed at the same site \cite{Bertini2026SpaceTimeDuality}. The same argument trivially holds if the identity is replaced by a permutation matrix, as long as the permutations match on both sides of the unitary. This is immediately suggestive to us --- we know that the projected-ensemble argument requires control of $\mathbb{T}^{(m)}$ outside the global permutation sector. Here, we see that dual unitarity assigns to the permutation operators a privileged position even in this very simple case.

Let us now look at the spread of an arbitrary operator across two of the $m$-fold local sites. Take $O \in \mathcal{L}((\mathbb{C}^{2})^{\otimes 2m})$. Let us write $\mathcal{L}((\mathbb{C}^{2})^{\otimes 2m})$ as $\mathcal{L}((\mathbb{C}^{2})^{\otimes m}) \otimes \mathcal{L}((\mathbb{C}^{2})^{\otimes m})$, separating the parts of the operator acting on each of the individual local systems ($m$-fold qubits). For brevity, we term each of these $\mathcal{V}$. We now use our previous discussion as an invitation to single out $S_{m}$ within $\mathcal{V}$. To this end, split $\mathcal{V}$ into $\mathcal{P} \oplus (\mathcal{V} \ominus \mathcal{P})$, where $\mathcal{P} = \mathrm{span}\{P_{\pi}, \pi \in S_{m} \}$. Our process yields 
\begin{equation}
    \mathcal{L}\Big((\mathbb{C}^{2})^{\otimes 2m}\Big)  = \mathcal{V} \otimes \mathcal{V} = \mathcal{V} \otimes (\mathcal{P} \oplus (\mathcal{V} \ominus \mathcal{P})) = (\mathcal{V} \otimes \mathcal{P}) \oplus (\mathcal{V} \otimes (\mathcal{V} \ominus \mathcal{P})).
\end{equation}
Define the two-local transparent subspace
\begin{equation}
    \mathcal{S}_{\mathrm{diag}} = \mathrm{span}(\{P_{\pi} \otimes P_{\pi}, \pi \in S_{m} \}).
\end{equation}
Since $\mathcal{S}_{\mathrm{diag}} \subset (\mathcal{V} \otimes \mathcal{P})$, we have $(\mathcal{V} \otimes \mathcal{P}) = \mathcal{S}_{\mathrm{diag}} \oplus ((\mathcal{V} \otimes \mathcal{P}) \ominus  \mathcal{S}_{\mathrm{diag}})$. Substituting this into the previous expression gets us
\begin{equation}
        \mathcal{L}\Big((\mathbb{C}^{2})^{\otimes 2m}\Big)  = \mathcal{S}_{\mathrm{diag}} \oplus ((\mathcal{V} \otimes \mathcal{P}) \ominus  \mathcal{S}_{\mathrm{diag}}) \oplus (\mathcal{V} \otimes (\mathcal{V} \ominus \mathcal{P})).
\end{equation}
The latter two spaces will turn out to have a nice physical interpretation in our model, and so we christen them with the dedicated notation
\begin{equation}
        \mathcal{L}\Big((\mathbb{C}^{2})^{\otimes 2m}\Big)  := \mathcal{S}_{\mathrm{diag}} \oplus \mathcal{B} \oplus \mathcal{G}.
\end{equation}

Altogether, we have split the Hilbert space of linear operators on $2m$ qubits into three parts: $\mathcal{S}_{\mathrm{diag}}$, which passes through the brickwork unitary unchanged, since these are matched permutations on both legs, the bad sector $\mathcal{B}$, which consists of defects for which the right local system lies in the permutation span, and the good sector $\mathcal{G}$, which consists of defects for which the right local system is orthogonal to the permutation span. A more detailed description of defects as well as a justification for this nomenclature will be provided in the next subsection.

For now, let us write down a basis for these spaces. A basis for $\mathcal{S}_{\mathrm{diag}}$ is given by taking $\{ P_{i} \otimes {P}_{i}\}$ and using the Gram-Schmidt process, discarding zero residuals (say starting with $\mathbb{I}^{\otimes m} \otimes \mathbb{I}^{\otimes m}$) to arrive at a set $\{S_{i}\}$. A basis for $\mathcal{B}$ is given by taking any product basis for $\mathcal{V} \otimes \mathcal{P}$, computing the orthogonal projection away from $\mathcal{S}_{\mathrm{diag}}$, and again using the Gram-Schmidt process. Lastly, a basis for $\mathcal{G}$ is given by projecting away from $\mathcal{S}_{\mathrm{diag}}$ and $\mathcal{B}$, and then (again) applying the Gram-Schmidt process. We call these latter two bases $\{\mathcal{B}_{i}\}$ and $\{\mathcal{G}_{i}\}$ respectively. Note also that every element in these latter two bases must have trace $0$, since it is orthogonal to the identity, which belongs to $\mathcal S_{\mathrm{diag}}$.

Our brickwork $m$-folded dual unitary takes an especially simple form in this basis. Anything in $\mathcal{S}_{\mathrm{diag}}$ must pass straight through one layer of the brickwork dual unitary. Further, it is impossible for any operator outside $\mathcal{S}_{\mathrm{diag}}$  to enter this subspace after propagation through $u$, because of unitarity in reverse. As a result, we have the following form for our $m$-fold brickwork unitary
\begin{equation}
    (u \otimes u^{*})^{\otimes m} = \begin{pmatrix}
        1_{\mathcal{S}_{\mathrm{diag}}, \mathcal{S}_{\mathrm{diag}}} & 0 & 0\\
        0 & (u \otimes u^{*})^{\otimes m}|_{\mathcal{B},\mathcal{B}} & (u \otimes u^{*})^{\otimes m}|_{\mathcal{B},\mathcal{G}} \\
        0 & (u \otimes u^{*})^{\otimes m}|_{\mathcal{G},\mathcal{B}} & (u \otimes u^{*})^{\otimes m}|_{\mathcal{G},\mathcal{G}}
    \end{pmatrix}.
\end{equation}
It is then useful to define the restricted matrix $\tilde{u}^{(m)}$ as 
\begin{align}
    \tilde{u}^{(m)} &:= \begin{pmatrix}
         (u \otimes u^{*})^{\otimes m}|_{\mathcal{B},\mathcal{B}} & (u \otimes u^{*})^{\otimes m}|_{\mathcal{B},\mathcal{G}} \\
         (u \otimes u^{*})^{\otimes m}|_{\mathcal{G},\mathcal{B}} & (u \otimes u^{*})^{\otimes m}|_{\mathcal{G},\mathcal{G}}
        \end{pmatrix}, \\&= \begin{pmatrix}
         \tilde{u}^{(m)}_{\mathcal{B},\mathcal{B}} & \tilde{u}^{(m)}_{\mathcal{B},\mathcal{G}} \\
         \tilde{u}^{(m)}_{\mathcal{G},\mathcal{B}} & \tilde{u}^{(m)}_{\mathcal{G},\mathcal{G}}
        \end{pmatrix}.
\end{align}

\subsection{The defect picture: right defect weight, and the checkerboard of defects.}
\label{sec:defect-picture}

We begin by defining the space of defects in the most obvious manner, as the complement of the global transparent space. We will consider $t$ single-qubit sites in the auxiliary Hilbert space, with $t$ odd, so that there are $n_{p}=(t-1)/2$ aligned pairs and the singleton site $1$. 

\begin{definition}[Interval transparent space]
\label{def:interval-transparent}
    Let $I \subseteq [1,t]$ be an interval. We define its transparent space $\mathcal{S}_{I}$ as $\mathrm{span}(\{\otimes_{I}P_{\pi}, \pi \in S_{m}\})$. In particular we have $\mathcal{S} = \mathcal{S}_{[1,t]}$.
\end{definition}

\begin{definition}[Defects]
\label{def:defects}
    Define $\mathcal{K} = \mathcal{S}^{\perp}$ as the space of all defects.
\end{definition}

The remainder of this section will focus on forming a simpler characterization of these defects, which we employ in studying design formation. We will do this in two ways, one of which is independent of the basis decomposition of the previous subsection, and another which has more synergy with the brickwork circuit's locality. Both of these constructions will be useful. As a rule of thumb, whenever we have to analyze the situation microscopically, and track the motion of defects, the basis picture becomes essential; from the bird's-eye view, however, basis independence is very useful. Let us start with this.

\begin{definition}[Nested defect space]
\label{def:nested-defect}
    Take $y \in \{1, \dots t\}$. We denote the projector onto the nested defect space at $y$ as $Q_{y}$ and define 
    \begin{equation}
        Q_{y} = \mathbb{I}_{1 \dots y} \otimes \Pi_{\mathcal{S}_{[y+1,\dots,t]}}
    \end{equation}
    with $Q_t=\mathbb I$ (the empty suffix), such that $Q_{i} \leq Q_{i+1}$ for all $i$. It is also convenient to define $P_{y} = Q_{y}-Q_{y-1}$, and $Q_{0} = \Pi_{\mathcal{S}}$.
\end{definition}

\begin{lemma}
\label{lem:defect-decomposition}
    The defect space $\mathcal{K} = \bigoplus_{y=1}^{t} \range(P_{y})$. Equivalently, $\mathbb{I}_{\mathcal{K}} = \sum_{y} P_{y}$, and the $P_{i}$ are mutually orthogonal.
\end{lemma}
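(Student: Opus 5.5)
The proof is a telescoping argument built on the claim that the $Q_y$ form an increasing chain of orthogonal projectors from $Q_0=\Pi_{\mathcal S}$ to $Q_t=\mathbb I$. The first step is to check that each $Q_y$ is an orthogonal projector. This is immediate, because $\Pi_{\mathcal S_{[y+1,t]}}$ is the orthogonal projector onto a subspace of the operators on sites $y+1,\dots,t$, and a tensor product of orthogonal projectors is again one.

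Next I establish the nesting $Q_{y-1}\le Q_y$, which amounts to the range inclusion $\range(Q_{y-1})\subseteq\range(Q_y)$. The range of $Q_{y-1}$ is $\mathcal L_{[1,y-1]}\otimes\mathcal S_{[y,t]}$. Each generator there has the form $X\otimes(P_\pi\otimes P_\pi^{\otimes [y+1,t]})$, with $X$ arbitrary on $[1,y-1]$. I regroup it as $(X\otimes P_\pi)\otimes P_\pi^{\otimes[y+1,t]}$, which lies in $\mathcal L_{[1,y]}\otimes\mathcal S_{[y+1,t]}=\range(Q_y)$. I also handle the endpoints explicitly. At $y=t$ the suffix is empty, $\range(Q_t)$ is everything, and $Q_{t-1}\le Q_t$ is trivial. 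At $y=1$ the range of $Q_0$ is by definition $\mathcal S=\mathcal S_{[1,t]}$, which is $\mathbb I_{\emptyset}\otimes\Pi_{\mathcal S_{[1,t]}}$, so the same regrouping gives $Q_0\le Q_1$.

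With nesting in hand, the standard fact about commuting projectors does the rest. If $Q\le Q'$ are orthogonal projectors, then $Q'Q=QQ'=Q$, so $Q'-Q$ is an orthogonal projector onto $\range(Q')\ominus\range(Q)$. Hence each $P_y=Q_y-Q_{y-1}$ is an orthogonal projector. For $y<y'$, expanding $P_yP_{y'}$ and using $Q_aQ_b=Q_{\min(a,b)}$ gives
\[
P_yP_{y'}=Q_y-Q_{y-1}-Q_y+Q_{y-1}=0,
\]
so the $P_y$ are mutually orthogonal.

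Summing, the sum telescopes to $\sum_{y=1}^t P_y=Q_t-Q_0=\mathbb I-\Pi_{\mathcal S}$. Since $\mathcal K=\mathcal S^\perp$, this last expression is exactly $\mathbb I_{\mathcal K}$. Orthogonality of the ranges then turns this resolution of the identity into the stated direct sum $\mathcal K=\bigoplus_y\range(P_y)$.

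The only step needing care is the nesting $Q_{y-1}\le Q_y$. It must be argued at the level of ranges, not by multiplying projectors naively, because $\Pi_{\mathcal S_{[y,t]}}$ does not factor as a tensor product of single-site projectors. The argument also rests on the fact that a single common $\pi$ labels all sites in $\mathcal S_{[y,t]}$, which is what allows $P_\pi$ on site $y$ to be absorbed into the free left factor. Everything after that step is routine projector algebra.
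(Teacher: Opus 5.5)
Your proof is correct and takes the same route as the paper: nested orthogonal projectors $Q_0=\Pi_{\mathcal S}\le Q_1\le\cdots\le Q_t=\mathbb I$, mutually orthogonal differences $P_y$, and the telescoping sum $Q_t-Q_0=\mathbb I_{\mathcal K}$. The one addition is that you explicitly verify the nesting $\range(Q_{y-1})\subseteq\range(Q_y)$ by absorbing the site-$y$ factor $P_\pi$ into the free left factor, whereas the paper simply asserts this nesting in Definition~\ref{def:nested-defect}.
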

 \begin{proof}
     This follows directly from the definition of $Q_{y}$. Since the various $P_{i}$ are the differences of nested orthogonal projectors, they must be mutually orthogonal. Further, their sum must be equivalent to $Q_{t}-Q_{0}$ itself, since the sum telescopes, and hence gives $\mathbb{I}_{\mathcal{K}}$. Since this is a sum of mutually orthogonal projectors, it is also a projector, and it projects onto its range, so $\mathcal{K} = \bigoplus_{y=1}^{t} \range(P_{y})$.
 \end{proof}

\begin{definition}[Right defect edge]
\label{def:right-defect-edge}
    Consider an operator $O \in \mathcal{L}((\mathbb{C}^{2})^{\otimes mt})$. We define its right defect edge $r(O)$, as that $y \in \{1, \dots t \}$ such that $\Vert P_yO\Vert \neq 0$ but $\Vert P_{y'}O\Vert=0$ for all $y'>y$. If there is no right defect edge, we say $r(\cdot) = \emptyset$.
\end{definition}

\begin{definition}[Right defect subspace]
    Define the right defect subspace at $y \in \{1, \dots t \}$ as $\range(P_{y})$.
\end{definition}

\begin{definition}[Right defect weight]
    Define the right defect weight at $y \in \{1, \dots t \}$ of an operator $O \in \mathcal{L}((\mathbb{C}^{2})^{\otimes mt})$ as $\Vert P_{y}O\Vert^{2}$.
\end{definition}

Let us briefly discuss how these defects travel in the bulk upon application of our transfer matrix, which we split into its aligned and staggered sublayers for clarity. With $u^{(m)}_{i,i+1}$ denoting the folded gate, define
\begin{equation}
    \mathbb A:=A_L\prod_{c=1}^{(t-1)/2}u^{(m)}_{2c,2c+1},
\end{equation}
\begin{equation}
    \mathbb S:=\prod_{c=0}^{(t-3)/2}u^{(m)}_{2c+1,2c+2},
\end{equation}
\begin{equation}
    \mathbb W:=\mathbb S\mathbb A.
    \label{eq:finite-unitary-layers}
\end{equation}
Here $A_L$ acts on site $1$, is unitary, and fixes every local permutation; empty products are identities. The change from the auxiliary-column ordering is given in Section~\ref{sec:soft-absorption}.

\begin{figure}[htbp]
\centering
\begin{tikzpicture}[
    scale=1.15,
    % Standard template styles
    tensor/.style={rectangle, draw=black, thick, rounded corners=3pt, fill=orange!80!white, minimum size=24pt, font=\small\bfseries},
    defectG/.style={rectangle, draw=green!60!black, thick, fill=green!20, fill opacity=0.8, rounded corners=4pt, inner sep=3pt, font=\small\bfseries, text=green!50!black},
    defectB/.style={rectangle, draw=red!60!pink, thick, fill=red!20, fill opacity=0.8, rounded corners=4pt, inner sep=2.5pt, font=\scriptsize\bfseries, text=red!70!black},
    redcirc/.style={circle, draw=red!80!black, thick, fill=white, inner sep=1.2pt, font=\scriptsize, text=red!80!black},
    leg/.style={thick, draw=black!80},
    arrowleg/.style={thick, draw=red!80!black, ->, >=stealth}
]

    \begin{scope}[shift={(0,0)}]
        % Subcaption
        \node[font=\bfseries] at (0, 1.8) {(a) $G$-defect upward travel};

        % Vertical Red Arrow drawn FIRST (behind the tensor)
        \draw[arrowleg] (0, -0.8) -- (0, 1.1);

        % Single Sublayer Tensor
        \node[tensor] (K) at (0, 0) {$m$};

        % Leg connections with centered decorations
        \draw[leg] (K.north west) -- ++(-1.0, 1.0);
        \draw[leg] (K.north east) -- ++(1.0, 1.0);
        \draw[leg] (K.south east) -- ++(1.0, -1.0) node[pos=0.5, redcirc] {$\sigma$};
        \draw[leg] (K.south west) -- ++(-1.0, -1.0) node[pos=0.5, defectG] {$G$};

        \node[text=red!80!black, font=\scriptsize, align=center] at (0, -1.4) {Travels forward\\vertically through $k$};
    \end{scope}

    \begin{scope}[shift={(5.8,0)}]
        % Subcaption
        \node[font=\bfseries] at (0, 1.8) {(b) $B$-defect ($B = \sigma'$)};

        % Single Sublayer Tensor
        \node[tensor] (K) at (0, 0) {$m$};

        % Leg connections with combined B = \sigma' box centered on the lower-left leg
        \draw[leg] (K.north west) -- ++(-1.0, 1.0);
        \draw[leg] (K.north east) -- ++(1.0, 1.0);
        \draw[leg] (K.south east) -- ++(1.0, -1.0) node[pos=0.5, redcirc] {$\sigma$};
        \draw[leg] (K.south west) -- ++(-1.0, -1.0) node[pos=0.5, defectB] {$B = \sigma'$};

        \node[text=red!80!black, font=\scriptsize, align=center] at (0, -1.4) {Possible stalled\\channel};
    \end{scope}

\end{tikzpicture}
\caption{Upward defect propagation channels through a single $m$ tensor.}
\label{fig:single-gate-defect-channels}
\end{figure}

\begin{lemma}[Locality of the defect sectors]
\label{lem:sector-locality}
    A gate on $(a,a+1)$ commutes with $Q_y$ for $y\ne a$, and $A_L$ commutes with every $Q_y$. Consequently, the gate preserves $\range(P_a+P_{a+1})$ and each of the other single-site sectors.
\end{lemma}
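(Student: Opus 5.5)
The plan is to write each nested projector as a tensor product over sites and check the two cases $y<a$ and $y>a$ separately. By Definition~\ref{def:nested-defect}, $Q_y=\mathbb I_{1\dots y}\otimes\Pi_{\mathcal S_{[y+1,t]}}$, where $\Pi_{\mathcal S_{[y+1,t]}}$ is the orthogonal projector onto the span of $\bigotimes_{j=y+1}^{t}P_\pi$. The gate $g=u^{(m)}_{a,a+1}$ acts on the folded sites $a,a+1$ only.

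\emph{Case $y\ge a+1$.} The gate is supported inside $[1,y]$, where $Q_y$ acts as the identity. The two operators then act on disjoint tensor factors, so they commute.

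\emph{Case $y\le a-1$.} Both sites $a,a+1$ lie in the suffix $I=[y+1,t]$. I will show that $g$ commutes with $\Pi_{\mathcal S_I}$.
\begin{itemize}
\item $g$ fixes every $\bigotimes_{j\in I}P_\pi$. On sites $a,a+1$ this element is $P_\pi\otimes P_\pi\in\mathcal S_{\mathrm{diag}}$, which $g$ leaves invariant by the block form of $(u\otimes u^*)^{\otimes m}$ in Section~\ref{sec:local-permutation-sectors}. This is just $u^{\otimes m}P_\pi^{\otimes 2}u^{\dagger\otimes m}=P_\pi^{\otimes 2}$, since permutations commute with $u^{\otimes m}$. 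Hence $g$ acts as the identity on $\mathcal S_I$.
\item $g$ is unitary in the vectorized picture. It therefore also preserves $\mathcal S_I^\perp$.
\item A unitary that preserves a subspace and its orthogonal complement commutes with the orthogonal projector onto that subspace. So $g$ commutes with $\Pi_{\mathcal S_I}$, and hence with $Q_y$.
\end{itemize}

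The same argument handles $A_L$. If $y\ge1$, site $1$ lies in the identity factor of $Q_y$. For $Q_0=\Pi_{\mathcal S}$, we use that $A_L$ is unitary and fixes every local permutation on site $1$. Hence $A_L$ fixes $\mathcal S$ and preserves $\mathcal S^\perp$, as above.

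The only case that does not work is $y=a$, where $Q_a$ cuts between sites $a$ and $a+1$. This is exactly the excluded case.

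For the consequence, I use $P_y=Q_y-Q_{y-1}$.
\begin{itemize}
\item For $y\notin\{a,a+1\}$, both $y$ and $y-1$ differ from $a$. So $g$ commutes with $P_y$ and preserves $\range(P_y)$.
\item $P_a+P_{a+1}=Q_{a+1}-Q_{a-1}$. Both $Q_{a+1}$ and $Q_{a-1}$ commute with $g$, so $g$ preserves $\range(P_a+P_{a+1})$.
\item For $A_L$, every $P_y$ commutes with it, so it preserves every single-site sector.
\end{itemize}

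The only point needing care is the invariance of $\mathcal S_I^\perp$. It follows from unitarity, or from the ``unitarity in reverse'' remark given earlier, and I expect no real obstacle.
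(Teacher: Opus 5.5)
Your proof is correct and follows essentially the same route as the paper. The case $y\ge a+1$ is disjoint support; for $y\le a-1$ the gate fixes every matched permutation string in the suffix, and unitarity then gives commutation with the projector (the paper phrases this as the adjoint also fixing those strings). The sector statements follow from $P_y=Q_y-Q_{y-1}$, and your handling of $A_L$ and $Q_0$ matches the paper's as well.
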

\begin{proof}
    If $y\geq a+1$, the gate acts within the identity factor of $Q_y$. If $y\leq a-1$, it acts inside the transparent suffix and fixes every matched permutation string there. Its adjoint does so too, so it commutes with the corresponding projector. The same argument applies to $Q_{0}$ and to $A_L$. Taking differences proves the statements for $P_y$, and then grouping the sectors proves the statements for the layers.
\end{proof}

For the aligned sublayer, the situation is therefore simple: right defect weight cannot change its even cell or leave the singleton sector at site $1$. For the staggered sublayer, weight can move between the two single-site sectors in an odd cell, which belong to neighbouring even cells. This is the checkerboard mechanism by which the paired defect position changes.

\begin{figure}[htbp]
\centering
\begin{tikzpicture}[
    scale=1.15,
    % Standard template styles
    tensor/.style={rectangle, draw=black, thick, rounded corners=3pt, fill=orange!80!white, minimum size=24pt, font=\small\bfseries},
    defectG/.style={rectangle, draw=green!60!black, thick, fill=green!20, fill opacity=0.8, rounded corners=4pt, inner sep=3pt, font=\small\bfseries, text=green!50!black},
    defectB/.style={rectangle, draw=red!60!pink, thick, fill=red!20, fill opacity=0.8, rounded corners=4pt, inner sep=2.5pt, font=\scriptsize\bfseries, text=red!70!black},
    redcirc/.style={circle, draw=red!80!black, thick, fill=white, inner sep=1.2pt, font=\scriptsize, text=red!80!black},
    leg/.style={thick, draw=black!80},
    arrowleg/.style={thick, draw=red!80!black, ->, >=stealth}
]

    \begin{scope}[shift={(0,0)}]
        % Subcaption
        \node[font=\bfseries] at (0, 1.8) {(a) Non-permutation leg};

        % Vertical Red Arrow drawn FIRST (behind the tensor)
        \draw[arrowleg] (0, -0.8) -- (0, 1.1);

        % Single Sublayer Tensor
        \node[tensor] (K) at (0, 0) {$m$};

        % Leg connections with centered decorations
        \draw[leg] (K.north west) -- ++(-1.0, 1.0);
        \draw[leg] (K.north east) -- ++(1.0, 1.0);
        \draw[leg] (K.south east) -- ++(1.0, -1.0) node[pos=0.5, redcirc] {$\sigma$};
        \draw[leg] (K.south west) -- ++(-1.0, -1.0) node[pos=0.5, defectG] {$O\perp\mathcal P$};

        \node[text=red!80!black, font=\scriptsize, align=center] at (0, -1.4) {Forward propagation\\channel};
    \end{scope}

    \begin{scope}[shift={(5.8,0)}]
        % Subcaption
        \node[font=\bfseries] at (0, 1.8) {(b) Permutation leg};

        % Single Sublayer Tensor
        \node[tensor] (K) at (0, 0) {$m$};

        % Leg connections with combined B = \sigma' box centered on the lower-left leg
        \draw[leg] (K.north west) -- ++(-1.0, 1.0);
        \draw[leg] (K.north east) -- ++(1.0, 1.0);
        \draw[leg] (K.south east) -- ++(1.0, -1.0) node[pos=0.5, redcirc] {$\sigma$};
        \draw[leg] (K.south west) -- ++(-1.0, -1.0) node[pos=0.5, defectB] {$\sigma'$};

        \node[text=red!80!black, font=\scriptsize, align=center] at (0, -1.4) {Possible stalled\\channel};
    \end{scope}

\end{tikzpicture}
\caption{Local propagation channels in the checkerboard picture. The labels on the legs distinguish a component orthogonal to the local permutation span from a permutation component. The orthogonal defect weights used in the proof are defined by the projectors $P_y$.}
\label{fig:upward_defect_behavior}
\end{figure}

\begin{definition}[Position operator]
\label{def:position-operator}
    Define the paired position and the position from the end by
    \begin{equation}
    X:=\sum_{y=1}^{t}\left\lfloor\frac y2\right\rfloor P_y,
\end{equation}
\begin{equation}
    X_{\max}:=\frac{t-1}{2},
\end{equation}
\begin{equation}
    X^{\mathrm{end}}:=X-X_{\max}\mathbb I_{\mathcal K}.
\end{equation}
\end{definition}

For a normalized defect $O$, $\braket{O|X|O}$ provides the expectation value of the paired location of its right defect weight. The aligned layer preserves this position, so $[\mathbb A,X]=0$. We use this same $X$ throughout, including when discussing the odd pairing.

To study contraction, we will replace the physical right boundary by a hard absorber. Let us restrict ourselves to $\mathcal K$ for the time being, and define
\begin{equation}
    P:=P_t,
\end{equation}
\begin{equation}
    Q:=\sum_{y=1}^{t-1}P_y=Q_{t-1}-\Pi_{\mathcal S},
\end{equation}
\begin{equation}
    \mathbb K_t:=Q\mathbb WQ.
    \label{eq:finite-hard-walk}
\end{equation}
Thus $P+Q=\mathbb I_{\mathcal K}$, and $Q$ projects onto surviving defects. On the full operator space,
\begin{equation}
    P_t=\mathbb I_{[1,t-1]}\otimes(\mathbb I-\Pi_{\mathcal P}).
\end{equation}
The absorber removes the component whose last site is orthogonal to the local permutation span. The staggered layer commutes with $P$ and $Q$, by \autoref{lem:sector-locality}. Hence, for $v\in\range(Q)$,
\begin{equation}
    \|v\|^2-\|\mathbb K_t v\|^2
       =\|P_t\mathbb A v\|^2.
\end{equation}
The vectors $\mathbb K_t^n v$ are not renormalized: their decreasing squared norm accounts for absorption. All subsequent currents and backward-current subspaces are defined on $\mathcal K$ before restricting initial vectors to $\range(Q)$.

\subsection{Making defects move: the local permutation mixing assumption on dual-unitary operators}
\label{sec:lpm}

Now that we have developed the concept of a defect, we describe our method of modeling $k$-design formation in the transfer matrix setting. Operator weight can be split into two subspaces: the globally transparent $\mathcal{S}$, and the space of defects, which can be further decomposed into the single-site sectors $\range(P_y)$, or grouped into the checkerboard of even and odd cells. The idea is to treat $A_R$ as a lone source of defect contraction. The bulk, being unitary, preserves operator weight no matter what goes into it, and only at the right boundary can the norm reduce. Approximate $k$-design formation requires that every operator outside the transparent subspace is contracted, with the rate of contraction matching the design time. This motivates us to explain design formation as a two-step physical process, where

\begin{enumerate}
    \item Right defect weight flows \textit{in the worst case} to $A_R$ fast enough as a consequence of bulk dual unitary dynamics;
    \item The right defect weight, once it reaches $A_R$, gets contracted at some rate that is bounded from below.
\end{enumerate}

Let us focus now on the first point. If all of the defects flowed in a simple, coherent way to the boundary, things would be quite straightforward for us. We could treat the unitary process as some sort of Markov chain on the right defect weight, and then simply calculate the hitting time to the boundary. We saw earlier that the aligned layer cannot move weight between even cells, whereas the staggered layer couples neighboring even cells. The local $\mathcal G$ sector detects a non-permutation component on the right leg, while the $\mathcal B$ sector does not. In the checkerboard picture, permutations may match across the shifted pairing, allowing backward motion. This situation is illustrated in \autoref{fig:backflow-defects-main}. Unfortunately for us, it turns out that the staggered layer can push defects backwards.

\begin{figure}[h!]
\centering
\begin{tikzpicture}[
baseline=(current bounding box.center),
scale=1.15,
% Standard template styles
tensor/.style={rectangle, draw=black, thick, rounded corners=3pt, fill=orange!80!white, minimum size=22pt, font=\small\bfseries},
defect/.style={rectangle, draw=red!80!black, thick, fill=white, rounded corners=3pt, inner sep=4pt, font=\small\bfseries, text=red!80!black},
redcirc/.style={circle, draw=red!80!black, thick, fill=white, inner sep=1.2pt, font=\scriptsize, text=red!80!black},
leg/.style={thick, draw=black!80},
% Capsule operator styles
capsuleB/.style={draw=red!60!pink, thick, fill=red!20, fill opacity=0.5, rounded corners=6pt},
capsuleY/.style={draw=green!60!black, thick, fill=green!20, fill opacity=0.5, rounded corners=6pt}
]

% Top-Left Defect Box
\node[defect, minimum width=1.6cm] (DefectTop) at (-1.8, 2.7) {Defect};

% Bottom Defect Box spanning both bottom tensors
\node[defect, minimum width=4.2cm] (DefectBot) at (-0.6, -1.7) {Defect};

% Black Boundary Dots (\cdots)
\node[text=black, font=\Large] at (-2.9, 0.6) {$\cdots$};
\node[text=black, font=\Large] at (2.9, 0.6) {$\cdots$};

\node[tensor] (TTL)  at (-1.8, 1.8)  {$m$};
\node[tensor] (TMID) at (0.6, 1.8)   {$m$};
\node[tensor] (CMID) at (-0.6, 0.6)  {$m$};
\node[tensor] (TR)   at (1.8, 0.6)   {$m$};
\node[tensor] (BL)   at (-1.8, -0.6) {$m$};
\node[tensor] (BR)   at (0.6, -0.6)  {$m$};

% --- Top-Left Tensor (TTL) ---
\draw[leg] (TTL.north west) -- (TTL.north west |- DefectTop.south);
\draw[leg] (TTL.north east) -- (TTL.north east |- DefectTop.south);
\draw[leg] (TTL.south west) -- ++(-0.6, -0.6);
\draw[leg] (TTL.south east) -- (CMID.north west);

% --- Top-Middle Tensor (TMID) ---
\draw[leg] (TMID.north west) -- ++(-0.5, 0.5) node[pos=0.5, redcirc] {$\pi$};
\draw[leg] (TMID.north east) -- ++(0.5, 0.5) node[pos=0.5, redcirc] {$\pi$};
\draw[leg] (TMID.south west) -- (CMID.north east);
\draw[leg] (TMID.south east) -- (TR.north west) node[pos=0.5, redcirc] {$\pi$};

% --- Center-Middle Tensor (CMID) ---
\draw[leg] (CMID.south west) -- (BL.north east);
\draw[leg] (CMID.south east) -- (BR.north west);

% --- Bottom-Left Tensor (BL) ---
\draw[leg] (BL.north west) -- ++(-0.6, 0.6);
\draw[leg] (BL.south west) -- (BL.south west |- DefectBot.north);
\draw[leg] (BL.south east) -- (BL.south east |- DefectBot.north);

% --- Bottom-Right Tensor (BR) ---
\draw[leg] (BR.north east) -- (TR.south west) node[pos=0.5, redcirc] {$\pi$};
\draw[leg] (BR.south west) -- (BR.south west |- DefectBot.north);
\draw[leg] (BR.south east) -- (BR.south east |- DefectBot.north);

% --- Top-Right Tensor (TR) ---
\draw[leg] (TR.north east) -- ++(0.5, 0.5) node[pos=0.5, redcirc] {$\pi$};
\draw[leg] (TR.south east) -- ++(0.5, -0.5) node[pos=0.5, redcirc] {$\pi$};

% Upper B Capsule (spanning legs above CMID)
\draw[capsuleB] (-1.3, 1.05) rectangle (0.1, 1.35);
\node[text=red!70!black, font=\small\bfseries] at (-0.6, 1.55) {$\mathcal{B}$};

% Middle Green y Capsule (spanning legs below CMID)
\draw[capsuleY] (-1.3, -0.15) rectangle (0.1, 0.15);
\node[text=green!60!black, font=\small\bfseries] at (-0.6, -0.4) {$\mathcal{G}$};

% Lower Red B Capsule (spanning two legs and intersecting green y)
\draw[capsuleB] (-0.1, -0.15) rectangle (1.3, 0.15);
\node[text=red!70!black, font=\small\bfseries] at (1.35, -0.3) {$\mathcal{B}$};
\end{tikzpicture}
\caption{The figure shows how backflow of defects may occur microscopically in our setup. The idea is essentially that if we have two $\mathcal{B}$ sector defects in a row, then if the permutations align just right, the defect can flow backwards. Since design formation requires worst-case contraction of everything outside the globally transparent sector, this is a genuine problem. Notice that the second $\mathcal{B}$ sector defect must come from a $\mathcal{G} \rightarrow \mathcal{B}$ transition.}
\label{fig:backflow-defects-main}
\end{figure}

It is important that we find some structural property of generic dual-unitary circuits which prevents this indefinite backflow of defects --- otherwise, defects could very well be stuck in the bulk forever, never arriving at the boundary. We will assume a bound on accumulated backward current which permits an initial transient. A full description of this assumption requires some work, so we defer it to the next subsection for now. 

Other than control on backflow, we also need the right defect weight to have some guaranteed mobility. We accomplish this by assuming that a local gate pushes a definite fraction of every $\mathcal B$-sector vector into $\mathcal G$.

\begin{definition}[Local permutation mixing (LPM)]
\label{def:lpm}
    Let $u \in \mathrm{DU}_{(2)}$, and take the $m$-fold $(u \otimes u^{*})^{\otimes m}$. Define $u^{(m)}$ as in the previous section. We define $u$ to be $\epsilon(m)$ local permutation mixing ($\epsilon(m)$-LPM) if $\Vert \Pi_{\mathcal B}u^{(m)}\Pi_{\mathcal B}\Vert \leq 1 -  \epsilon(m)$.
\end{definition}

This condition ensures that every bad-sector vector leaks into the good sector in one application. Equivalently, for $v\in\mathcal B$,
\begin{equation}
    \|\Pi_{\mathcal G}u^{(m)}v\|^2
    \geq\bigl(1-(1-\epsilon(m))^2\bigr)\|v\|^2.
\end{equation}
For good convergence, we will require that the largest singular value of $u^{(m)}_{\mathcal B,\mathcal B}$ is at least $1/\mathrm{poly}(m)$ below $1$. 

We conjecture that this is true of a generic dual unitary. Specifically, a stronger $m$-independent gap would follow from the exact universality of single-site unitaries and a fixed dual unitary.
\begin{lemma}\label{lem:dugap}
For a dual unitary $u\in \mathrm{DU}_{(2)}$, let $H = I \otimes SU(2)$ and $H'= u^\dagger H u = \{u^\dagger (I\otimes v)u ~|~ v\in SU(2)\}$. If $H \cup H'$ exactly generates $SU(4)$, i.e.\ if for all $g\in SU(4)$ we have $g = h_1 h_2 \ldots h_\ell$ with $h_i\in H \cup H'$ and $\ell\leq L$ for some finite $L$, then
\begin{equation}
    \Vert \Pi_{\mathcal B}u^{(m)}\Pi_{\mathcal B}\Vert \leq 1 - \frac{1}{4L^2}\,.
\end{equation}
\end{lemma}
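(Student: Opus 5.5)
The plan is to argue by contradiction. Take a vector of $\mathcal B$ that $u^{(m)}$ maps almost entirely back into $\mathcal B$. Such a vector is then almost invariant under two subgroups, hence under all of $SU(4)$, which forces it into $\mathcal S_{\mathrm{diag}}$. Throughout, write $g^{(m)}:=(g\otimes g^*)^{\otimes m}$ for $g\in SU(4)$. Then $g\mapsto g^{(m)}$ is a unitary representation, so $(g_1g_2)^{(m)}=g_1^{(m)}g_2^{(m)}$.

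The first step records the invariances. For $h=I\otimes v\in H$, the operator $h^{(m)}$ acts on the right local system only, as conjugation by $v^{\otimes m}$. This fixes every $P_\pi$ on that leg and leaves the left leg untouched. Hence $h^{(m)}$ acts as the identity on $\mathcal V\otimes\mathcal P$, and in particular on $\mathcal B$ and on $\mathcal S_{\mathrm{diag}}$. Since $h^{(m)}$ is unitary, it also preserves the orthogonal complement $\mathcal G$.

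Next, the fixed-point space of $\{g^{(m)}:g\in SU(4)\}$ is exactly $\mathcal S_{\mathrm{diag}}$, the span of the global permutations $P_\pi\otimes P_\pi$ on the two-qubit system. This is the commutant statement quoted in the preliminaries: the Haar twirl $\Pi^{(m)}$ projects onto the permutation span. Consequently, for $v\perp\mathcal S_{\mathrm{diag}}$ we have $\int g^{(m)}v\,\dd\mu(g)=0$.

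Now fix a unit vector $v\in\mathcal B$ and suppose $\|\Pi_{\mathcal B}u^{(m)}v\|=1-\varepsilon$. Write $w:=u^{(m)}v=b+c$ with $b\in\mathcal B$ and $c\perp\mathcal B$. By the block form of $u^{(m)}$, the vector $w$ is orthogonal to $\mathcal S_{\mathrm{diag}}$, so $c\in\mathcal G$. Moreover
\begin{equation*}
\|c\|^2=1-(1-\varepsilon)^2\le 2\varepsilon .
\end{equation*}
For $h\in H$ we have $h^{(m)}b=b$, hence $h^{(m)}w-w=h^{(m)}c-c$ and $\|h^{(m)}w-w\|\le 2\|c\|$. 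Conjugating by $u$, every $h'=u^\dagger h u\in H'$ satisfies
\begin{equation*}
\|h'^{(m)}v-v\|=\|h^{(m)}w-w\|\le 2\|c\|,
\end{equation*}
while every $h\in H$ fixes $v$ exactly.

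Now take any $g\in SU(4)$ and use the hypothesis to write $g=h_1\cdots h_\ell$ with $\ell\le L$ and each $h_i\in H\cup H'$. Telescoping with unitarity gives
\begin{equation*}
\|g^{(m)}v-v\|\le\sum_{i=1}^{\ell}\|h_i^{(m)}v-v\|\le 2L\|c\| .
\end{equation*}
Since this holds pointwise for every $g$, we get $1-\mathrm{Re}\langle v,g^{(m)}v\rangle=\tfrac12\|g^{(m)}v-v\|^2\le 2L^2\|c\|^2\le 4L^2\varepsilon$. Averaging over the Haar measure and using $\int g^{(m)}v\,\dd\mu(g)=0$ (because $v\in\mathcal B\perp\mathcal S_{\mathrm{diag}}$), the left-hand side averages to exactly $1$. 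Therefore $1\le 4L^2\varepsilon$, i.e.\ $\varepsilon\ge 1/(4L^2)$ for every unit $v\in\mathcal B$, which is the claimed bound on $\|\Pi_{\mathcal B}u^{(m)}\Pi_{\mathcal B}\|$.

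The main obstacle is the constant. A naive argument that averages the norm $\|v-\int g^{(m)}v\,\dd\mu(g)\|$ only gives $1\le 2L\sqrt{2\varepsilon}$, i.e.\ $\varepsilon\ge 1/(8L^2)$. Recovering $1/(4L^2)$ requires averaging the quantity $1-\mathrm{Re}\langle v,g^{(m)}v\rangle$, which is quadratic in the displacement and whose Haar average is exactly $1$. The remaining points are routine. The phase ambiguity between $U(4)$ and $SU(4)$ is harmless, since $g\otimes g^*$ is insensitive to global phases. The identification of the fixed space with $\mathcal S_{\mathrm{diag}}$ is the standard Schur--Weyl statement, applied to the two-qubit system.
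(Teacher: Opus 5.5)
Your proof is correct and follows the paper's argument essentially step for step. Your $\|c\|$ is exactly the paper's $\|(I-\Pi_{H'})\psi\|$, since $\langle\psi|\Pi_{H'}|\psi\rangle=\|\Pi_{\mathcal B}u^{(m)}\psi\|^2$ for $\psi\in\mathcal B$, and the remaining steps are the same: near-invariance under $H'$, exact invariance under $H$, telescoping over a word of length at most $L$, and Haar-averaging the squared displacement to reach $\|\Pi_{\mathcal B}u^{(m)}\Pi_{\mathcal B}\|^2\le 1-1/(2L^2)$. The only difference is presentational: you work with the decomposition $u^{(m)}v=b+c$ rather than the projector $\Pi_{H'}=u^{(m)\dagger}\Pi_H u^{(m)}$, and your final estimate $\|c\|^2\le 2\varepsilon$ plays the role of the paper's $\sqrt{1-x}\le 1-x/2$.
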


\begin{proof}
For a fixed dual unitary $u$, let $H = \{ I\otimes v ~|~ v\in SU(2)\}$ and $H' = u^\dagger Hu = \{ u^\dagger (I\otimes v) u ~|~ v\in SU(2)\}$, where $H, H'\subset SU(4)$. Again, we use the notation for $m$-folded unitaries $U^{(m)} = U^{\otimes m}\otimes U^*{}^{\otimes m} $. Just as $\Pi_{SU(4)} = \int_{SU(4)} U^{(m)} \dd\mu(U)$ is the projector onto the span of permutation operators for $(\mathbb{C}^d)^{\otimes m}$, we define projectors
\begin{equation}
    \Pi_{H} = \int (I\otimes v)^{(m)} \dd\mu(v) \quad{\rm and}\quad \Pi_{H'} = \int \big(u^\dagger (I\otimes v) u\big)^{(m)} \dd\mu(v)\,.
\end{equation}
Now note that the projector onto the subspace of bad defects $\Pi_\mathcal{B} $ is simply $\Pi_\mathcal{B} = \Pi_{H} (I-\Pi_{SU(4)})$. 
We first write the norm as
\begin{align}
    \|\Pi_{\mathcal B}u^{(m)}\Pi_{\mathcal B}\|^2 &= \sup_\psi \vev{\psi| \Pi_{\mathcal B}u^{(m)}{}^\dagger\Pi_{\mathcal B} \Pi_{\mathcal B}u^{(m)}\Pi_{\mathcal B}|\psi} = \sup_{\psi\in \mathcal{B}} \vev{\psi| u^{(m)}{}^\dagger \Pi_{\mathcal B}u^{(m)}|\psi}\\
    &=\sup_{\psi\in \mathcal{B}} \vev{\psi| u^{(m)}{}^\dagger \Pi_{H} \Pi_{SU(4)}^\perp u^{(m)}|\psi} = \sup_{\psi\in \mathcal{B}} \vev{\psi| u^{(m)}{}^\dagger \Pi_{H} u^{(m)}|\psi}\,,
\end{align}
where in the last step we used that $(I-\Pi_{SU(4)}) u^{(m)} = u^{(m)} (I-\Pi_{SU(4)})$ by the invariance of the Haar measure and that $(I-\Pi_{SU(4)})\ket\psi = \ket\psi$ for states $\ket\psi\in \mathcal{B}$. Thus, by the definition of $H'$ we simply need to consider 
\begin{equation}
    \|\Pi_{\mathcal B}u^{(m)}\Pi_{\mathcal B}\|^2 = \sup_{\psi\in \mathcal{B}} \vev{\psi| \Pi_{H'} |\psi}\,.
\end{equation}
For any fixed $\ket\psi\in \mathcal{B}$, we equivalently consider $\vev{\psi| \Pi_{H'} |\psi} = 1- \vev{\psi| \Pi_{H'}^\perp |\psi}$ and lower bound the second term. 

First, for any $h\in H'$ note that $(h^{(m)}-I)\ket\psi =  (h^{(m)}-I)(I-\Pi_{H'})\ket\psi$ as $h^{(m)}\Pi_{H'}=\Pi_{H'}$ by the invariance of the Haar measure and thus $(h^{(m)}-I) \Pi_{H'} \ket\psi=0$. Proceeding, we can bound
\begin{equation}
    \| (h^{(m)}-I)\ket\psi\| = \|(h^{(m)}-I)(I-\Pi_{H'})\ket\psi\|
    \leq 2 \| (I-\Pi_{H'})\ket\psi\|\,,
\end{equation}
as $\|(h^{(m)}-I)\|_\infty \leq 2$.

We now use the exact universality of $H\cup H'$ for $SU(4)$. For every $g\in SU(4)$, there exists a decomposition $g=g_1 g_2 \ldots g_\ell$ with $\ell\leq L$ for some finite $L$, and where each $g_i\in H\cup H'$. Meaning we can construct any $SU(4)$ unitary from single-site unitaries conjugated by a fixed dual unitary. If $g\in H'$ we know that
\begin{equation}
    \| (g^{(m)}-I)\ket\psi\| \leq 2 \| (I-\Pi_{H'})\ket\psi\|\,,
\end{equation}
and if $g\in H$ then $\| (g^{(m)}-I)\ket\psi\|=0$ as $\ket\psi\in \CB$ so $\Pi_H\ket\psi=\ket\psi$. Writing the product as a telescoping sum $(\prod_i g_i-I) = \sum_i g_1\ldots g_{i-1}(g_i-I) $ we find for any $g \in SU(4)$
\begin{align}
    \| (g^{(m)}-I)\ket\psi\| &= \| ((g_1\ldots g_\ell)^{(m)}-I)\ket\psi\|\leq \sum_i \|g_1\ldots g_{i-1}\|_\infty \| (g_i^{(m)}-I)\ket\psi\|\\
    &\leq 2 L \| (I-\Pi_{H'})\ket\psi\|\,,
\end{align}
using the fact that $\|g_i\|_\infty=1$. Lastly, we simply integrate $\| (g^{(m)}-I)\ket\psi\|^2$ over the Haar measure to compute the desired lower bound. Proceeding, integrate over the Haar measure on $SU(4)$
\begin{equation}
    \int \| (g^{(m)}-I)\ket\psi\|^2\, \dd\mu(g) = 2-2\int \vev{\psi|g^{(m)}|\psi}\, \dd\mu(g) = 2 \vev{\psi|(I-\Pi_{SU(4)})|\psi} = 2\,,
\end{equation}
as $\ket\psi\in \CB$ so $(I-\Pi_{SU(4)})\ket\psi=1$. We conclude that $\| (I-\Pi_{H'})\ket\psi\|^2 \geq 1/2L^2$ and thus
\begin{equation}
    \|\Pi_{\mathcal B}u^{(m)}\Pi_{\mathcal B}\|^2 \leq 1 - \frac{1}{2L^2}\,,
\end{equation}
from which the claim follows.

\end{proof}

\begin{conjecture}[Almost all dual unitaries are LPM]
\label{conj:lpm}
For almost every fixed two-qubit dual unitary $u$, there is an LPM bound
\begin{equation}
    \epsilon(m)\geq c(u)>0, \qquad m\geq 1\,,
\end{equation}
such that there is a constant lower bound for all $m\geq 1$, where the constant depends on the choice of $u$.
\end{conjecture}

The conjecture would follow from \autoref{lem:dugap} and the exact universality of dual unitaries and single qubit unitaries. Specifically, we conjecture that the above constant is nonzero for all dual unitaries $u$ that are not self dual.

\begin{figure}[t]
\centering
\begin{tikzpicture}
\begin{groupplot}[
 group style={group size=2 by 1,horizontal sep=1.7cm},
 width=0.43\textwidth,height=6.0cm,
 tick label style={font=\small},label style={font=\small},
 title style={font=\small},ymajorgrids=true,grid style={gray!20},
 legend style={font=\scriptsize,draw=none,fill=none,
   at={(0.5,-0.25)},anchor=north,legend columns=3},
 ]
\nextgroupplot[title={(a) Replica dependence},
 xlabel={Replica number $m$},ylabel={LPM gap $\epsilon(m)$},
 xmin=0.85,xmax=4.15,xtick={1,2,3,4},ymin=0,ymax=1.06,
 ytick={0,0.2,0.4,0.6,0.8,1}]
\addplot+[color=blue!75!black,mark=*,mark size=2.2pt,thick] coordinates {(1,1) (2,0.1040234829854) (3,0.1040234829854) (4,0.1040234829854)};
\addlegendentry{$\theta=0.15$}
\addplot+[color=orange!85!black,mark=square*,mark size=2.2pt,thick] coordinates {(1,1) (2,0.2786754754636) (3,0.2786754754636) (4,0.2786754754636)};
\addlegendentry{$\theta=0.30$}
\addplot+[color=teal!80!black,mark=triangle*,mark size=2.2pt,thick] coordinates {(1,1) (2,0.1186929658332) (3,0.1186929658332) (4,0.1186929658332)};
\addlegendentry{$\theta=0.55$}
\nextgroupplot[title={(b) Gap closing near SWAP},
 xlabel={Distance to SWAP $\delta=\pi/4-\theta$},
 ylabel={LPM gap $\epsilon(m)$},
 xmode=log,ymode=log,xmin=0.00007,xmax=0.4,
 ymin=1e-8,ymax=0.5,
 xtick={0.0001,0.001,0.01,0.1},
 ytick={1e-8,1e-6,1e-4,1e-2},
 legend columns=3]
\addplot+[blue!75!black,thick,mark=*,mark size=1.6pt] coordinates {(0.0001,2.66666649873e-08) (0.0001771600539669,8.369514870044e-08) (0.0003138568472156,2.626828847774e-07) (0.000556028959906,8.244475245833e-07) (0.0009850612054411,2.587578165847e-06) (0.001745134963167,8.121223839308e-06) (0.003091682042541,2.548835306604e-05) (0.005477225575052,7.999040056339e-05) (0.009703455784651,0.0002509909293116) (0.01719064750474,0.0007871179711396) (0.03045496039666,0.002464186139696) (0.05395402427433,0.007672890328104) (0.09558497852173,0.02348964656746) (0.1693383995334,0.06820237741734) (0.3,0.1692460221178)};
\addlegendentry{$m=2$}
\addplot+[orange!85!black,only marks,mark=square,mark size=2.8pt] coordinates {(0.0001,2.666666454321e-08) (0.0001771600539669,8.369514736817e-08) (0.0003138568472156,2.62682883112e-07) (0.000556028959906,8.244475238062e-07) (0.0009850612054411,2.587578164182e-06) (0.001745134963167,8.121223838087e-06) (0.003091682042541,2.548835306493e-05) (0.005477225575052,7.99904005625e-05) (0.009703455784651,0.0002509909293104) (0.01719064750474,0.0007871179711388) (0.03045496039666,0.002464186139696) (0.05395402427433,0.007672890328104) (0.09558497852173,0.02348964656745) (0.1693383995334,0.06820237741734) (0.3,0.1692460221178)};
\addlegendentry{$m=3$}
\addplot[gray!75!black,dashed,thick,domain=0.0001:0.3,samples=50]
 {2.6666664987295974*x^2};
\addlegendentry{$\propto\delta^2$}
\end{groupplot}
\end{tikzpicture}
\caption{We consider dual unitary gates of the form $u_{\theta} = \exp[-i(\frac{\pi}{4}XX) + \frac{\pi}{4}YY +\theta ZZ)]$. Panel (a) clearly shows how the LPM gap is $m$-independent for $m \geq 2$. In (b), we study the convergence of the LPM to $0$ as a function of the distance $\delta = \pi/4 - \theta$ to the SWAP gate. The graph is consistent with a quadratic approach, which connects back to \autoref{lem:dugap}.}
\label{fig:lpm-numerics}
\end{figure}

Let us circle back to the second point presently. What could go wrong with boundary contraction? For one, maybe the defects that reach the end do not get contracted. We will separately assume a quantitative contraction gap for $A_R$ on the absorbing sector. What could happen, however, is that defects near the boundary remain in the $\mathcal B$ sector and never enter the sector where they are contracted. This is also addressed by LPM. We will investigate the resulting mobility-or-absorption bound in \autoref{sec:unsigned-current}.

\subsection{Bounding defects' backward motion: backflow recurrence and the first-return assumption}
\label{sec:first-return}

We now return to limiting the backflow of right defect weight, starting with a review of \cite{Bourgain2014Recurrence}. A reader with knowledge of the contents of Section 2 of this reference may safely skip the next subsection and proceed to the one after.

\subsubsection*{Subspace recurrence in quantum processes}
\label{sec:subspace-recurrence}
Consider a discrete-time quantum process, where at each time-step we apply a \textit{fixed} unitary $U$ to a state initialized to $\ket{\psi}$. Fix some privileged finite-dimensional subspace $V$ of the overall Hilbert space $\mathcal{H}$, and demand that $\ket{\psi}$ belongs to this subspace. We study the dynamics of this system, and in particular the relation of the time evolution of the state to the subspace $V$. For notational simplicity, define $\tilde{U} = (\mathbb{I}-\Pi_{V})U$. Notice that $\tilde{U}\ket{\psi}$ has norm at most $1$, and roughly quantifies the part of the state that remains outside of $V$ after one iteration of the unitary $U$. We define a few important operators on $V$:

\begin{definition}[$n$-step return operator]
\label{def:return-operator}
    Consider the setting from above, with the unitary operator applied at each time-step given by $U$, and the privileged subspace given by $V$. We define the $n$-step return operator 
    \begin{equation}
    \mu_n  = \Pi_{V}U^{n}\Pi_{V}.
    \end{equation}
\end{definition}

\begin{definition}[$n$-step first-return operator]
\label{def:first-return-operator}
    Retain the same setting with the unitary $U$, and the privileged subspace $V$. Recall also the notation $\tilde{U} = (1-\Pi_{V})U$. Then the $n$-step first-return operator 
    \begin{equation}
    a_{n}  = \Pi_{V}U\tilde{U}^{n-1}\Pi_{V}.
    \end{equation}
\end{definition}

\begin{definition}[Generating operators for the discrete-time process]
\label{def:return-generating-operators}
Let $\mu_{n}$ and $a_{n}$ be defined as in \autoref{def:return-operator} and \autoref{def:first-return-operator}. We also inherit their setting. Let $z \in \mathbb{C}$, $|z| < 1$. Then the generating operators corresponding to the discrete-time quantum process are analytic functions on the unit disc with values in $\mathcal{L}(V)$ given by 
\begin{equation}
    \hat{\mu}(z) = \sum_{n\geq 0} \mu_{n}z^{n},
\end{equation}
\begin{equation}
    \hat{a}(z) = \sum_{n\geq 1} a_{n}z^{n}.
\end{equation}
\end{definition}
\noindent The Green's function for the operators $U$ and $\tilde{U}$ are given by 
\begin{equation}
    G(z) = (\mathbb I_{\mathcal H} - zU)^{-1},
\end{equation}
\begin{equation}
    \tilde{G}(z) = (\mathbb I_{\mathcal H} - z\tilde{U})^{-1}.
\end{equation}
These can be checked to obey the identities
\begin{equation}
\label{eq:green-resolvent}
    G(z) - \tilde{G}(z) = z\tilde G(z)\Pi_VUG(z),
\end{equation}
\begin{equation}
\label{eq:compressed-green}
    \Pi_V\tilde G(z)\Pi_V=\Pi_V.
\end{equation}

These identities imply the so-called renewal equation for the generating operators, which can also be written in terms of the single-time operators $a_{n}$ and $\mu_{n}$.
\begin{equation}
\label{eq:renewal}
    (\mathbb{I}_{V} - \hat{a}(z))\hat{\mu}(z) = \mathbb{I}_{V},
\end{equation}
\begin{equation}
    \mu_n=\sum_{j=1}^{n}a_j\mu_{n-j},
\end{equation}
\begin{equation}
    \mu_0=\mathbb I_V,\ n\geq1.
\end{equation}

We now define the Schur function of the process, which motivates our later first-return formulation. Define the projection-valued spectral measure $E(d\lambda)$ of $U$ on the unit circle $S^1\subset\mathbb C$, that is, such that $U = \int_{S^{1}}\lambda E(d\lambda)$. Define the spectral measure of the subspace $V$ as $\mu(d \lambda) := \Pi_{V}E(d\lambda)\Pi_{V}$. 

\begin{definition}[Schur function]
\label{def:schur-function}
    Retain the setting of the above discussion. Let
    \begin{equation}
        F(z):=\int_{S^1}\frac{\lambda+z}{\lambda-z}\,\mu(d\lambda).
    \end{equation}
    We call this the Carath\'eodory function associated with the walk. Then the Schur function on the open disc $|z|<1$ is defined as (with the value at $z=0$ obtained by continuity)

    \begin{equation}
        f(z) = \frac{1}{z}(F(z) - \mathbb{I}_{V})(F(z) +\mathbb{I}_{V})^{-1}.
    \end{equation}
\end{definition}

Ref.~\cite{Bourgain2014Recurrence} then proves the following for the Schur function and its relation to the generating operators corresponding to the return operators.

\begin{fact}[Below Eq.~(6) in \cite{Bourgain2014Recurrence}]
\label{fact:schur-first-return}
    \begin{equation}
        \hat a(z)=z f(\bar z)^\dagger.
    \end{equation}
\end{fact}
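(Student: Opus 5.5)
The plan is to compute both sides as power series in $z$ and compare them coefficient by coefficient. The most direct route goes through the Carath\'eodory function. Expanding $\frac{\lambda+z}{\lambda-z}=1+2\sum_{n\ge1}\bar\lambda^{\,n}z^n$ for $|\lambda|=1$, $|z|<1$, and integrating against $\mu(d\lambda)=\Pi_VE(d\lambda)\Pi_V$, we get
\begin{equation}
F(z)=\mathbb I_V+2\sum_{n\ge1}\Pi_V(U^\dagger)^n\Pi_V z^n=2\hat\mu^\dagger(z)-\mathbb I_V,
\end{equation}
where $\hat\mu^\dagger(z):=\sum_{n\ge0}\mu_n^\dagger z^n$ is the generating function of the adjoint process, i.e.\ the one driven by $U^\dagger$. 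Hence
\begin{equation}
F-\mathbb I_V=2(\hat\mu^\dagger-\mathbb I_V),\qquad F+\mathbb I_V=2\hat\mu^\dagger .
\end{equation}
Since $F$ has positive real part on the disc, $F+\mathbb I_V$ is invertible, and so
\begin{equation}
zf(z)=\mathbb I_V-(\hat\mu^\dagger(z))^{-1}.
\end{equation}

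Next I invoke the renewal equation~\eqref{eq:renewal}. It was derived from~\eqref{eq:green-resolvent}--\eqref{eq:compressed-green} for an arbitrary unitary, so it holds verbatim with $U$ replaced by $U^\dagger$. Writing $\hat a^\dagger(z)$ for the first-return generating function built from $U^\dagger$, this gives
\begin{equation}
(\hat\mu^\dagger)^{-1}=\mathbb I_V-\hat a^\dagger ,
\end{equation}
and therefore $zf(z)=\hat a^\dagger(z)$.

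It remains to relate the adjoint process back to the original one. I would check directly from \autoref{def:first-return-operator} that
\begin{equation}
\bigl(\Pi_VU(( \mathbb I-\Pi_V)U)^{n-1}\Pi_V\bigr)^\dagger=\Pi_V U^\dagger\bigl((\mathbb I-\Pi_V)U^\dagger\bigr)^{n-1}\Pi_V ,
\end{equation}
so the first-return coefficients of $U^\dagger$ are the adjoints $a_n^\dagger$. The projector placement in the reversed product needs to be verified carefully here: reversing the word $\Pi_V U(\mathbb I-\Pi_V)U\cdots(\mathbb I-\Pi_V)U\Pi_V$ gives $\Pi_V U^\dagger(\mathbb I-\Pi_V)\cdots U^\dagger\Pi_V$, which has the same alternating form. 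Consequently $\hat a^\dagger(z)=\sum_n a_n^\dagger z^n=\hat a(\bar z)^\dagger$.

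Combining the two results gives
\begin{equation}
zf(z)=\hat a(\bar z)^\dagger .
\end{equation}
Replacing $z$ by $\bar z$ and taking adjoints yields $\bar z f(\bar z)^\dagger\cdot$ on the left; more precisely, $z f(\bar z)^\dagger=\hat a(z)$ after conjugating scalars. The value at $z=0$ follows by continuity, since both sides vanish there.

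The main obstacle is bookkeeping rather than substance. The delicate points are the convention $\bar\lambda$ versus $\lambda$ in the Herglotz kernel, which fixes whether $U$ or $U^\dagger$ appears, and the resulting conjugation $z\mapsto\bar z$. If I get the kernel's sign wrong, the same argument instead gives $zf=\hat a$ directly, so I would fix the convention first by checking the coefficient of $z$: $a_1=\Pi_VU\Pi_V$ should equal $\bigl(\Pi_VU^\dagger\Pi_V\bigr)^\dagger$, which it does.
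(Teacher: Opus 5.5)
Your proof is correct. The paper gives no argument for this fact beyond the citation to \cite{Bourgain2014Recurrence}, so your derivation supplies a self-contained verification. Expanding the Herglotz kernel gives $F(z)=2\sum_{n\ge0}\mu_n^\dagger z^n-\mathbb I_V=2\hat\mu(\bar z)^\dagger-\mathbb I_V$. Positivity of $\mathrm{Re}\,F$ gives invertibility of $F+\mathbb I_V$, and the renewal equation finishes the argument. Your first-order check $a_1=\Pi_VU\Pi_V$ confirms that the $z\mapsto\bar z$ and adjoint conventions come out exactly as stated.

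Two small points. First, the detour through the $U^\dagger$-process and the word-reversal verification can be skipped. Your $\hat\mu^\dagger(z)$ is exactly $\hat\mu(\bar z)^\dagger$, so you can take the adjoint of the renewal equation $(\mathbb I_V-\hat a(\bar z))\hat\mu(\bar z)=\mathbb I_V$, using $\dim V<\infty$ to make the one-sided inverse two-sided. This gives $\bigl(\hat\mu(\bar z)^\dagger\bigr)^{-1}=\mathbb I_V-\hat a(\bar z)^\dagger$, and hence $zf(z)=\hat a(\bar z)^\dagger$ directly. Second, your last step is garbled as written. It should say that substituting $z\to\bar z$ gives $\bar z f(\bar z)=\hat a(z)^\dagger$, and that taking adjoints, which conjugates the scalar $\bar z$ to $z$, gives $zf(\bar z)^\dagger=\hat a(z)$.
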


\begin{fact}[Proposition 2.2 of \cite{Bourgain2014Recurrence}]
\label{fact:schur-singular-measure}
    The measure $\mu(d\lambda)$ is purely singular if and only if $f(z)$ is inner, that is, $f(e^{i\theta})$ is unitary for almost every $\theta$. In particular, $\mu(d \lambda)$ is finitely supported iff $f$ is a rational inner function.
\end{fact}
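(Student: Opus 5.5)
The plan is to pass everything through the Carath\'eodory function $F$ and its boundary values on $S^1$, using the M\"obius-type relation that defines $f$. First I will invert the definition in \autoref{def:schur-function}. Writing $zf(z)=(F(z)-\mathbb I_V)(F(z)+\mathbb I_V)^{-1}$, one gets
\begin{equation}
F(z)=(\mathbb I_V+zf(z))(\mathbb I_V-zf(z))^{-1}.
\end{equation}
This relation is valid on the disc because $\operatorname{Re}F(z)=\int\frac{1-|z|^2}{|\lambda-z|^2}\mu(d\lambda)\geq 0$, with strict positivity when $\mu(S^1)=\mathbb I_V$. That positivity makes $F+\mathbb I_V$ invertible and makes $zf$ a strict contraction for $|z|<1$.

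A direct computation then gives the key identity
\begin{equation}
\operatorname{Re}F(z)=(\mathbb I_V-zf(z))^{-\dagger}\bigl(\mathbb I_V-|z|^2f(z)^\dagger f(z)\bigr)(\mathbb I_V-zf(z))^{-1}.
\end{equation}
Because $f$ is bounded by $1$ in norm on the disc, I expect the $|z|<1$ contraction to pass to $\|f\|\le1$ via the maximum principle, with $f(0)$ handled by continuity.

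The second step is boundary behaviour. By Fatou's theorem applied entrywise, $f$ has nontangential limits $f(e^{i\theta})$ almost everywhere. The entries of $F$ are Herglotz functions in $H^p$ for $p<1$, so $F$ also has finite nontangential limits almost everywhere. Moreover, the absolutely continuous part of $\mu$ has density $\frac{1}{2\pi}\operatorname{Re}F(e^{i\theta})\,d\theta$, which is the matrix version of the classical Poisson/Herglotz decomposition, proved by taking $\langle v,\cdot\,v\rangle$ for each $v\in V$.

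At almost every $\theta$ both limits exist and $F(e^{i\theta})$ is finite. Hence $\mathbb I_V-e^{i\theta}f(e^{i\theta})=2(F+\mathbb I_V)^{-1}$ is invertible there, and the boundary version of the identity reads
\begin{equation}
\operatorname{Re}F(e^{i\theta})=(\cdots)^{\dagger}\bigl(\mathbb I_V-f^\dagger f\bigr)(\cdots).
\end{equation}
Consequently $\mu$ is purely singular iff $\operatorname{Re}F=0$ a.e. iff $f^\dagger f=\mathbb I_V$ a.e. In finite dimension, $f^\dagger f=\mathbb I_V$ is equivalent to $f(e^{i\theta})$ being unitary, so $f$ is inner.

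For the finitely supported case, suppose first $\mu=\sum_{j=1}^N\Pi_j\delta_{\lambda_j}$. Then $F(z)=\sum_j\frac{\lambda_j+z}{\lambda_j-z}\Pi_j$ is rational, so $f$ is rational, and $f$ is inner by the first part.

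Conversely, suppose $f$ is rational and inner. Then $F$ is rational with finitely many poles, and by the first part $\operatorname{Re}F=0$ a.e. on $S^1$. On any closed arc avoiding the poles, $F$ is bounded and continuous up to the boundary. Hence $\mu$ restricted to that arc is the weak limit of $\frac{1}{2\pi}\operatorname{Re}F(re^{i\theta})\,d\theta$, which tends to $0$. So $\mu$ is supported on the finite pole set and is therefore finitely supported.

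I expect the main obstacle to be the boundary-value step. It needs the precise matrix form of the statement that the a.c.\ density of $\mu$ equals the radial limit of $\operatorname{Re}F$. It also needs a check that $\mathbb I_V-e^{i\theta}f$ is invertible almost everywhere, so the factorization holds on the boundary and not merely in the limit. I would handle both scalarly, through the quadratic forms $\langle v,\mu(\cdot)v\rangle$ and Fatou's theorem, and then polarize.
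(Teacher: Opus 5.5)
Your proof is correct, and it is the standard argument behind the cited result; the paper itself gives no proof and only quotes Proposition~2.2 of \cite{Bourgain2014Recurrence}. That argument consists of the Cayley inversion $F=(\mathbb I_V+zf)(\mathbb I_V-zf)^{-1}$, the factorization of $\operatorname{Re}F$ through $\mathbb I_V-|z|^2f^\dagger f$, and the identification of the absolutely continuous density of $\mu$ with the boundary values of $\frac{1}{2\pi}\operatorname{Re}F$.

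The invertibility you flag is routine: $\operatorname{Re}F(e^{i\theta})\geq0$ a.e.\ gives $\operatorname{Re}(F+\mathbb I_V)\geq\mathbb I_V$, so $F+\mathbb I_V$ is invertible and so is $\mathbb I_V-e^{i\theta}f=2(F+\mathbb I_V)^{-1}$. Two small corrections: the off-diagonal entries of $F$ are linear combinations of Herglotz functions (by polarization) rather than Herglotz functions themselves, and in the rational case you should test against continuous functions supported in open arcs that avoid the poles.
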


\begin{fact}[Theorem 2.4 of \cite{Bourgain2014Recurrence}]
\label{fact:recurrence-schur}
    Let $V$ be finite-dimensional. Then all the states of $V$ are recurrent iff $f(z)$ is inner. We call $\ket{\psi} \in V$ recurrent if it returns to $V$ with probability $1$ under monitoring after each step.
\end{fact}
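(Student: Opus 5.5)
The plan is to turn recurrence into an operator identity for the first-return operators $a_n$ of \autoref{def:first-return-operator}. That identity can then be read off on the unit circle through \autoref{fact:schur-first-return}.

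\textbf{Step 1: return probability in terms of the $a_n$.} Fix $\ket{\psi}\in V$ and consider the monitored walk. At each step we apply $U$ and then project with $\Pi_V$ or $\mathbb I-\Pi_V$. The amplitude of first returning at step $n$ is $a_n\ket\psi=\Pi_VU\tilde U^{n-1}\ket\psi$. The amplitude of not yet having returned after $n$ steps is $\tilde U^{n}\ket\psi$. Since $U$ is unitary and $\Pi_V\perp(\mathbb I-\Pi_V)$,
\begin{equation}
\|\tilde U^{n-1}\psi\|^2=\|U\tilde U^{n-1}\psi\|^2=\|a_n\psi\|^2+\|\tilde U^{n}\psi\|^2 .
\end{equation}
Telescoping gives $\|\psi\|^2=\sum_{j=1}^n\|a_j\psi\|^2+\|\tilde U^n\psi\|^2$. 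So the return probability is $\sum_{n\ge1}\langle\psi|a_n^\dagger a_n|\psi\rangle$, and this series converges with value at most $1$. Hence every state of $V$ is recurrent if and only if
\begin{equation}
\sum_{n\ge1}a_n^\dagger a_n=\mathbb I_V .
\end{equation}

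\textbf{Step 2: pass to the boundary.} Since $V$ is finite-dimensional, $\hat a(z)=\sum_{n\ge1}a_nz^n$ is a $\mathcal L(V)$-valued function. By Step 1 it lies in the vector-valued Hardy space $H^2$. By Fatou's theorem it has radial limits $\hat a(e^{i\theta})$ for a.e. $\theta$, and Parseval gives
\begin{equation}
\sum_{n\ge1}a_n^\dagger a_n=\frac{1}{2\pi}\int_0^{2\pi}\hat a(e^{i\theta})^\dagger\hat a(e^{i\theta})\,d\theta .
\end{equation}
Next I use \autoref{fact:schur-first-return}, $\hat a(z)=zf(\bar z)^\dagger$. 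The integrand becomes $f(e^{-i\theta})f(e^{-i\theta})^\dagger$, and the substitution $\theta\to-\theta$ turns the right-hand side into $\frac1{2\pi}\int f(e^{i\theta})f(e^{i\theta})^\dagger d\theta$.

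I also need $f$ to be a Schur function, $\|f(z)\|\le1$ on the disc. This follows from $\operatorname{Re}F(z)\ge0$, which holds because $\operatorname{Re}\frac{\lambda+z}{\lambda-z}=\frac{1-|z|^2}{|\lambda-z|^2}\ge0$ and $\mu$ is a positive operator measure. The Cayley transform $(F-\mathbb I)(F+\mathbb I)^{-1}$ is then contractive, and the Schwarz lemma allows dividing by $z$, since this transform vanishes at $0$ because $F(0)=\mu(S^1)=\mathbb I_V$. It follows that the boundary values satisfy $\mathbb I_V-f f^\dagger\ge0$ a.e.

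\textbf{Step 3: conclude.} Combining Steps 1 and 2, recurrence of all states is equivalent to
\begin{equation}
\frac1{2\pi}\int_0^{2\pi}\bigl(\mathbb I_V-f(e^{i\theta})f(e^{i\theta})^\dagger\bigr)d\theta=0 .
\end{equation}
The integrand is positive semidefinite, so its trace is nonnegative and vanishes a.e. Hence the integral vanishes if and only if $ff^\dagger=\mathbb I_V$ a.e. In finite dimension $ff^\dagger=\mathbb I_V$ is equivalent to $f^\dagger f=\mathbb I_V$, that is, to $f(e^{i\theta})$ being unitary a.e. This is exactly the statement that $f$ is inner, which proves the equivalence in both directions at once.

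The main obstacle is Step 2: making the analytic passage rigorous. This means justifying a.e. boundary values and the Parseval identity for operator-valued $H^2$ functions, and making sure that the relation $\hat a(z)=zf(\bar z)^\dagger$ from \autoref{fact:schur-first-return} survives the radial limit. I would handle this entrywise in a fixed orthonormal basis of $V$, where everything reduces to scalar $H^2$, and in fact scalar $H^\infty$, theory.
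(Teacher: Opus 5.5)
Your proof is correct. The paper gives no argument of its own for this statement; it imports it from \cite{Bourgain2014Recurrence}, alongside \autoref{fact:schur-first-return}. So the only comparison is with the cited source. Your route is the natural derivation of that theorem:
\begin{itemize}
\item the monitored-walk telescoping identity gives $\sum_n a_n^\dagger a_n\le\mathbb I_V$, with equality exactly when every state recurs;
\item Parseval for the $H^2$ function $\hat a$ expresses this sum as a boundary integral;
\item \autoref{fact:schur-first-return} turns the boundary integrand into $ff^\dagger$;
\item contractivity of the Schur function upgrades a vanishing integral of $\mathbb I_V-ff^\dagger$ to a.e.\ unitarity.
\end{itemize}

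Your argument yields slightly more than the Fact states: the per-state return probability $\sum_n\|a_n\psi\|^2=\frac{1}{2\pi}\int_0^{2\pi}\|f(e^{i\theta})^\dagger\psi\|^2\,d\theta$. It also shows that finite-dimensionality is used only in two places: the $H^2$ bound $\sum_n\|a_n\|_2^2\le\dim V$, and the equivalence $ff^\dagger=\mathbb I_V\Leftrightarrow f^\dagger f=\mathbb I_V$.

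Your one imported ingredient, \autoref{fact:schur-first-return}, follows in two lines from the renewal equation, so you could make the argument fully self-contained relative to the paper's preliminaries. Since $\mu_n=\int\lambda^n\mu(d\lambda)$ and $|\lambda|=1$, one has $F(z)=2\hat\mu(\bar z)^\dagger-\mathbb I_V$. Inserting $\hat\mu=(\mathbb I_V-\hat a)^{-1}$ from \eqref{eq:renewal} gives $(F(z)-\mathbb I_V)(F(z)+\mathbb I_V)^{-1}=\hat a(\bar z)^\dagger$, which is exactly $\hat a(z)=zf(\bar z)^\dagger$.
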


Let us comment on the significance of the facts that we have established so far. We fixed a privileged subspace $V$ of an overall Hilbert space, took a state within it, and had it undergo some time-invariant dynamics. We now ask: when does the state return to $V$? And how much of it returns to $V$? The operator $\mu_{n}$ from earlier is then the operator that takes a state in $V$ back to $V$ after $n$ iterations of the unitary $U$. Here, in the $n-1$ intermediate steps, the state is allowed to enter and/or exit any subspace of its choice, it is completely unrestricted. The quantity $\|\mu_n\ket{\psi}\|^2$ tells us what portion of the evolved state is in $V$ after $n$ steps; these probabilities do not form a distribution over return times. A similar analysis holds for $a_{n}$. However, due to the contraction $\tilde{U}$, we remove the parts of $\ket{\psi}$ within $V$ in any step before the last. In this way, we get instead the portion of the state that returns to $V$ for the first time at time step $n$ in the monitored process.

One can also ask whether the state always returns to $V$. This is the content of Facts~\ref{fact:schur-first-return}, \ref{fact:schur-singular-measure}, and \ref{fact:recurrence-schur}. Build the operator $\hat{a}(z)$. This $a(z)$ turns out to be intimately related to the Schur function $f(z)$, which in turn controls the recurrence of the walk based on whether it is inner or not. Innerness concerns almost-everywhere radial boundary values on the unit circle. For the absorbing process below, we instead prove a renewal identity inside the disc and impose a quantitative bound on its coefficients.

\subsubsection*{The finite absorbing process and the first-return assumption}
\label{sec:bulk-first-return}

Let us return to our setting. Here, our states are the vectorization of operators, and the discrete-time quantum process is given by our $m$-folded circuit. However, $\mathbb K_t$ is not unitary, so the results of our previous discussion do not carry over directly. We use the finite walk from Equation~\eqref{eq:finite-hard-walk}, which already includes the physical left boundary $A_L$.

Intuitively, we want to make sure that not too much defect weight flows backwards, so that the entirety of the state is visible to $A_R$ within a reasonable time-frame. We adapt the Schur function story to our purpose, beginning (as always) with some definitions.

\begin{definition}[$h$-step current]
\label{def:m-step-current}
    Define the ($m$-fold) $h$-step right defect edge current as
    \begin{equation}
        J_h^m:=(\mathbb W^\dagger)^hX\mathbb W^h-X.
    \end{equation}
\end{definition}

For a normalized defect $O$, $\bra{O}J_h^m\ket{O}$ gives the expected \textit{change} in paired position after $h$ steps of the free evolution. This is easy to see --- expanding out gives the difference between the expectation of $X$ after $h$ steps and its initial expectation. We write $J:=J_1^m$, when the replica order is understood.

Let us now focus on the one-step current. It is Hermitian, so we can perform a spectral decomposition. Its negative part measures backward current, and its absolute value measures unsigned current.

\begin{definition}[Negative one-step current]
\label{def:negative-current}
Let $J$ be the one-step current from \autoref{def:m-step-current}. Define
\begin{equation}
    J_-:=\frac{|J|-J}{2},
\end{equation}
\begin{equation}
    J_+:=\frac{|J|+J}{2}.
\end{equation}
We have positive operators $J_-,J_+ \geq 0$, such that $J=J_+-J_-$ and $|J|=J_++J_-$.
\end{definition}

We consider the backward current subspace as the support of this operator.

\begin{definition}[Backward current subspace]
\label{def:backward-current-subspace}
We define the ($m$-fold) backward current subspace
\begin{equation}
    \mathcal U_m:= \range \mathbf 1_{(-\infty,0)}(J),
\end{equation}
\begin{equation}
    \Pi_{\mathcal U_m}=\mathbf 1_{(-\infty,0)}(J).
\end{equation}
\end{definition}

This immediately presents itself as a candidate for subspace recurrence analysis. The idea is that if the parts of an operator which flow backwards find it difficult to rearrange themselves into this same subspace at later times, the right defect weight will tend to flow right. We are thus invited to take $\mathcal U_m$ as the privileged subspace $V$ in the discussion above. For our purposes, however, we need to control how much an arbitrary state returns to this subspace, rather than impose a blanket ban on returns. 

Further, an initially extended defect may undergo a long backward transient. Consider a state $w$ that undergoes evolution under the free unitary walk $\mathbb W$ for some timescale $H$, such that its right-defect spreads by some $O(H)$. If we consider the (dual-unitary) backwards circuit, we have that the right defect edge shrinks for this period. Thus, it is possible to organize coherences in such a way that the right defect weight shrinks backwards in such a reversed free history. Since the maximum width we can reach is $t$, this motivates allowing a transient that is roughly $O(t)$. Any assumption we make on the dual-unitary circuit must allow for this. 

\begin{figure}[t]
\centering
\begin{tikzpicture}[
    scale=0.95,
    tensor/.style={
        rectangle, draw=black, thick, rounded corners=3pt,
        fill=orange!80!white, minimum size=20pt,
        font=\small\bfseries
    },
    leg/.style={thick, draw=black!75},
    front/.style={very thick, draw=red!75!black},
    guide/.style={dashed, draw=red!55!black, thick},
    lbl/.style={font=\scriptsize},
    title/.style={font=\small\bfseries},
]

\begin{scope}[xshift=-4.6cm]

\node[title] at (0,3.7) {(a) Forward evolution};

\fill[red!20,opacity=.42,rounded corners=2pt]
(-2.75,-1.65) -- (-0.70,-1.65) -- (1.65,3.05) -- (-2.75,3.05) -- cycle;

% right defect edge
\draw[front] (-0.70,-1.65) -- (1.65,3.05);

% endpoints
\fill[red!75!black] (-0.70,-1.65) circle (2pt);
\fill[red!75!black] ( 1.65, 3.05) circle (2pt);

% layer 1
\node[tensor] (L11) at (-1.8,-1.0) {$m$};
\node[tensor] (L12) at ( 0.6,-1.0) {$m$};

% layer 2
\node[tensor] (L21) at (-0.6,0.15) {$m$};
\node[tensor] (L22) at ( 1.8,0.15) {$m$};

% layer 3
\node[tensor] (L31) at (-1.8,1.30) {$m$};
\node[tensor] (L32) at ( 0.6,1.30) {$m$};

% layer 4
\node[tensor] (L41) at (-0.6,2.45) {$m$};
\node[tensor] (L42) at ( 1.8,2.45) {$m$};

% bottom legs
\draw[leg] (L11.south west) -- ++(-0.6,-0.55);
\draw[leg] (L11.south east) -- ++( 0.6,-0.55);
\draw[leg] (L12.south west) -- ++(-0.6,-0.55);
\draw[leg] (L12.south east) -- ++( 0.6,-0.55);

% L1 -> L2
\draw[leg] (L11.north west) -- ++(-0.6,0.55);
\draw[leg] (L11.north east) -- (L21.south west);

\draw[leg] (L12.north west) -- (L21.south east);
\draw[leg] (L12.north east) -- (L22.south west);

\draw[leg] (L22.south east) -- ++(0.6,-0.55);

% L2 -> L3
\draw[leg] (L21.north west) -- (L31.south east);
\draw[leg] (L21.north east) -- (L32.south west);

\draw[leg] (L22.north west) -- (L32.south east);
\draw[leg] (L22.north east) -- ++(0.6,0.55);

\draw[leg] (L31.south west) -- ++(-0.6,-0.55);

% L3 -> L4
\draw[leg] (L31.north west) -- ++(-0.6,0.55);
\draw[leg] (L31.north east) -- (L41.south west);

\draw[leg] (L32.north west) -- (L41.south east);
\draw[leg] (L32.north east) -- (L42.south west);

\draw[leg] (L42.south east) -- ++(0.6,-0.55);

% top legs
\draw[leg] (L41.north west) -- ++(-0.6,0.55);
\draw[leg] (L41.north east) -- ++( 0.6,0.55);
\draw[leg] (L42.north west) -- ++(-0.6,0.55);
\draw[leg] (L42.north east) -- ++( 0.6,0.55);

% time arrow
\draw[->, thick] (-3.25,-1.7) -- (-3.25,3.15);
\node[lbl, rotate=90] at (-3.50,0.7) {time};

% labels on front
\node[lbl, text=red!75!black, anchor=east]
    at (-0.52,-1.28) {$r(0)$};

\node[lbl, text=red!75!black, anchor=west]
    at (1.77,3.02) {$r(H)$};

% defect label
\node[font=\scriptsize,text=red!65!black] at (-1.8,0.6) {defect};

\node[font=\scriptsize, text=red!65!black] at (-1.8,0.3) {support};

\end{scope}

\begin{scope}[xshift=4.6cm]

\node[title] at (0,3.7)
    {(b) Same history viewed backwards};

\fill[red!20, opacity=.42, rounded corners=2pt] (-2.75,-1.65) -- (1.65,-1.65) -- (-0.70,3.05) -- (-2.75,3.05) -- cycle;

% shrinking right edge
\draw[front] (1.65,-1.65) -- (-0.70,3.05);

% endpoints
\fill[red!75!black] ( 1.65,-1.65) circle (2pt);
\fill[red!75!black] (-0.70, 3.05) circle (2pt);

\node[tensor] (R11) at (-1.8,-1.0) {$m$};
\node[tensor] (R12) at ( 0.6,-1.0) {$m$};

\node[tensor] (R21) at (-0.6,0.15) {$m$};
\node[tensor] (R22) at ( 1.8,0.15) {$m$};

\node[tensor] (R31) at (-1.8,1.30) {$m$};
\node[tensor] (R32) at ( 0.6,1.30) {$m$};

\node[tensor] (R41) at (-0.6,2.45) {$m$};
\node[tensor] (R42) at ( 1.8,2.45) {$m$};

% bottom
\draw[leg] (R11.south west) -- ++(-0.6,-0.55);
\draw[leg] (R11.south east) -- ++( 0.6,-0.55);
\draw[leg] (R12.south west) -- ++(-0.6,-0.55);
\draw[leg] (R12.south east) -- ++( 0.6,-0.55);

% R1 -> R2
\draw[leg] (R11.north west) -- ++(-0.6,0.55);
\draw[leg] (R11.north east) -- (R21.south west);

\draw[leg] (R12.north west) -- (R21.south east);
\draw[leg] (R12.north east) -- (R22.south west);

\draw[leg] (R22.south east) -- ++(0.6,-0.55);

% R2 -> R3
\draw[leg] (R21.north west) -- (R31.south east);
\draw[leg] (R21.north east) -- (R32.south west);

\draw[leg] (R22.north west) -- (R32.south east);
\draw[leg] (R22.north east) -- ++(0.6,0.55);

\draw[leg] (R31.south west) -- ++(-0.6,-0.55);

% R3 -> R4
\draw[leg] (R31.north west) -- ++(-0.6,0.55);
\draw[leg] (R31.north east) -- (R41.south west);

\draw[leg] (R32.north west) -- (R41.south east);
\draw[leg] (R32.north east) -- (R42.south west);

\draw[leg] (R42.south east) -- ++(0.6,-0.55);

% top
\draw[leg] (R41.north west) -- ++(-0.6,0.55);
\draw[leg] (R41.north east) -- ++( 0.6,0.55);
\draw[leg] (R42.north west) -- ++(-0.6,0.55);
\draw[leg] (R42.north east) -- ++( 0.6,0.55);

\draw[->, thick] (-3.25,-1.7) -- (-3.25,3.15);
\node[lbl, rotate=90] at (-3.50,0.7) {time};

\node[lbl, text=red!75!black, anchor=west] at (1.77,-1.48) {$r(0)$};

\node[lbl, text=red!75!black, anchor=east] at (-0.92,2.72) {$r(H)$};

% defect label
\node[font=\scriptsize, text=red!65!black] at (-1.8,0.6) {defect};

\node[font=\scriptsize, text=red!65!black] at (-1.8,0.3) {support};
\end{scope}

\end{tikzpicture}

\caption{
Finite-width states with the right correlations can lead to long backflows, as illustrated in the figure. A fast moving defect for one dual unitary moves backwards when considering the Hermitian conjugate circuit. This does not reverse the absorbing process.
}
\label{fig:backward-transient}
\end{figure}
We state our assumption in terms of the first-return generating function. Let us first fix
\begin{equation}
    C:=\mathbb I_{\mathcal K}-\Pi_{\mathcal U_m},
\end{equation}
\begin{equation}
    \widetilde{\mathbb K}_t:=C\mathbb K_t.
\end{equation}

\begin{definition}[First-return and generating operator for the contractive process.] Let $\mathbb{K}_{t}$ be the contractive process from before, and $\mathcal{U}_{m}$ be the backward current subspace.  Then, for replica order $m$ and $n \geq 1$, we define 
\begin{equation}
    a_{m,n}:=\Pi_{\mathcal U_m}\mathbb K_t
       \widetilde{\mathbb K}_t^{\,n-1}\Pi_{\mathcal U_m}
       \big|_{\mathcal U_m},\ n\geq1,
\end{equation}
to be the $n$-step first-return operator corresponding to $\mathbb{K}_{t}$, and for $|z|<1$
\begin{equation}
    \widehat a_m(z):=\sum_{n\geq1}a_{m,n}z^n
\end{equation}
to be the generating operator for the walk.
\end{definition}

\begin{definition}[First arrival generating operator]
Retain the same setting as before, and let $v$ be an arbitrary vector. Then, we define the first arrival generating operator as
    \begin{equation}
    \widehat A_v(z):=\Pi_{\mathcal U_m}v
       +z\Pi_{\mathcal U_m}\mathbb K_t
           (\mathbb I_{\mathcal K}-zC\mathbb K_t)^{-1}Cv.
\end{equation}
\end{definition}

We note the similarity of the first two definitions to our discussion in the previous section. All the same intuition carries forward, although we have to be careful of importing the exact results. In that setting $U$ was unitary, here $\mathbb{K}_{t}$ is instead a contraction. Here, in fact $\widehat a_m(z)/z$ is an operator-valued Schur function and $\mathbb I_{\mathcal U_m}-\widehat a_m(z)$ is invertible for $|z|<1$. This is a little technical, and the details are not needed for the main argument. 

Let us interpret the third definition presently. The first term on the right-hand side is simple: it projects the vector $v$ into the backward current subspace. Let us focus on the second term, and expand it out by writing out the geometric series for the inverse. We get $z\Pi_{\mathcal U_m}\mathbb K_t(\mathbb I_{\mathcal K}-zC\mathbb K_t)^{-1}Cv = z\Pi_{\mathcal U_m}\mathbb K_t \sum_{i=0}^{\infty}(zC\mathbb{K}_{t})^{i}Cv = \sum_{i=0}^{\infty} z^{i+1}\Pi_{\mathcal U_m}\mathbb K_t(C\mathbb{K}_{t})^{i}Cv$. This is a generating function of the first arrival of the vector $v$ into the backward current subspace. The $C$ projects away from here, and the sum adds up the contributions of the various first arrival times, weighted by powers of $z$. This is very similar in spirit to the $\widehat a_{m}$ that we have dealt with so far.

\begin{definition}
Let $f$ be an analytic vector-valued function  $f(z)=\sum_{n\geq0}f_nz^n$. We define the Hardy $H^2$ norm of the Taylor polynomial through degree $H-1$ as 
\begin{equation}
    \|f\|_{H^2;H}^{\,2}
    :=\frac{1}{2\pi}\int_0^{2\pi}
       \left\|\sum_{n=0}^{H-1}f_ne^{in\theta}\right\|^2\,d\theta
    =\sum_{n=0}^{H-1}\|f_n\|^2,\ H\geq1.
    \label{eq:finite-horizon-hardy}
\end{equation}

\end{definition}

The last equality is Parseval's identity. This squared norm can be thought of as the ``response energy" that counts the amount of squared norm that is present at times $0,\ldots,H-1$ of the signal.

\begin{conjecture}[Finite-chain first-return bound (A2)]
\label{conj:first-return-gap}
\label{conj:finite-backflow}
    Fix a bulk gate $u$ and solvable $A_{L}$. Further assume that there exist width-dependent constants $b_m(t)\geq0$ and $0\leq\beta_m(t)<1/2$ such that for every $v\in\range(Q)$ and every integer $H\geq1$,
    \begin{equation}
    \begin{aligned}
        &\left\|J_-^{1/2}
          (\mathbb I_{\mathcal U_m}-\widehat a_m(z))^{-1}
          \widehat A_v(z)\right\|_{H^2;H}^{\,2}\\
        &\qquad\leq\beta_m(t)
          \left\||J|^{1/2}
          (\mathbb I_{\mathcal K}-z\mathbb K_t)^{-1}v
          \right\|_{H^2;H}^{\,2}
          +b_m(t)\|v\|^2.
    \end{aligned}
    \label{eq:finite-schur-backflow}
    \end{equation}
    For polynomial bounds, we require
    \begin{equation}
    b_m(t)\leq\operatorname{poly}(m,t).
\end{equation}
\begin{equation}
    (1-2\beta_m(t))^{-1}
        \leq\operatorname{poly}(m,t).
        \label{eq:finite-backflow-polynomial}
\end{equation}
\end{conjecture}

Let us go step by step in interpreting the conjecture, starting with $(\mathbb I_{\mathcal U_m}-\widehat a_m(z))^{-1}
=\sum_{r=0}^{\infty}\widehat a_m(z)^r$. Here it is instructive to look at $\widehat a_m^{2} = \sum_{n_1, n_2} a_{m,n_1}a_{m,n_2}z^{n_1 +n_2}$. This $a_{m,n_1}a_{m,n_2}$ is the two-return operator with the first return at $n_{2}$, and the second at $n_{2}+n_{1}$. The $n'$ coefficient then sums up all of the amplitudes that we have to return twice to $\mathcal{U}_{m}$, with the second return at $n'$. The sum over $r$ then adds up all numbers of returns. 

This is a general intuitive guideline to keep in mind: multiplying generating operators (or functions) is equivalent to convolution of the underlying coefficients, similar to how it is in Fourier analysis. Multiplying $(\mathbb I_{\mathcal U_m}-\widehat a_m(z))^{-1}$ with $\widehat A_{v}$ then also performs a convolution of the same sort, and counts all possible entries and returns into the backward current subspace. Finally, we weight the amplitude by $J_{-}^{1/2}$, so its squared norm measures the negative current. 

Now let us look at the right-hand side. Here we have instead $(\mathbb I_{\mathcal K}- z\mathbb{K}_{t})^{-1}$. By a similar argument as for the previous inverse, this represents summing up all possible contractive process outcomes given a vector $v$. These are all then weighted by their unsigned current. Lastly, we have a $t$-dependent allowance for backflow that is meant to account for the transient, correlated packets. 

All put together, we are trying to say that the backflow return dynamics weighted by how much the negative current is, is dominated by the unsigned mobility. In Section \ref{sec:hard-absorber}, we will use this to show that defects must travel to the contractive boundary.

\begin{figure}[t]
\centering
\begin{tikzpicture}
\begin{groupplot}[
 group style={group size=2 by 1,horizontal sep=1.3cm},
 width=0.46\textwidth,height=6.0cm,
 xmode=log,log basis x=2,
 xlabel={Horizon $H$},ylabel={$b_H$},
 xtick={1,2,4,8,16,32,64},
 xticklabels={1,2,4,8,16,32,64},
 tick label style={font=\small},label style={font=\small},
 title style={font=\small},
 ymajorgrids=true,grid style={gray!20},
 legend style={font=\scriptsize,draw=none,fill=none,
   at={(0.5,-0.27)},anchor=north,legend columns=2},
 ]
\nextgroupplot[title={(a) Locally dressed gates},xmin=1,xmax=32,
 ymin=0,ymax=0.5,ytick={0,0.1,0.2,0.3,0.4,0.5}]
\addplot+[color=blue!75!black,mark=*,mark size=2pt,thick] coordinates {(1,0.210576557264) (4,0.149035595346) (8,0.148824776802) (16,0.148824660005) (32,0.148824660005)};
\addlegendentry{$\theta=0.15$}
\addplot+[color=orange!85!black,mark=square*,mark size=2pt,thick] coordinates {(1,0.221172790143) (4,0.164062988713) (8,0.164792384258) (16,0.164792270347) (32,0.164792270321)};
\addlegendentry{$\theta=0.30$}
\addplot+[color=teal!80!black,mark=triangle*,mark size=2pt,thick] coordinates {(1,0.402352704248) (4,0.331492309648) (8,0.400102485831) (16,0.419709467174) (32,0.420790860792)};
\addlegendentry{$\theta=0.55$}
\addplot+[color=violet,mark=diamond*,mark size=2pt,thick] coordinates {(1,0.246800154098) (4,0.167867315283) (8,0.167793569913) (16,0.167793539529) (32,0.167793539529)};
\addlegendentry{Random core}
\nextgroupplot[title={(b) Approach to SWAP},xmin=1,xmax=64,
 ymin=0,ymax=14.5,ytick={0,2,4,6,8,10,12,14}]
\addplot+[color=blue!75!black,mark=*,mark size=2pt,thick] coordinates {(1,0.284615856201) (4,0.227256924895) (8,0.258585371342) (16,0.264112105042) (32,0.264211767109)};
\addlegendentry{$\theta=0.30$}
\addplot+[color=orange!85!black,mark=square*,mark size=2pt,thick] coordinates {(1,0.637964146319) (2,0.727839243755) (4,0.738520059794) (8,1.27057423742) (16,1.93872195583) (32,2.70567227935) (64,3.29091479725)};
\addlegendentry{$\theta=0.70$}
\addplot+[color=teal!80!black,mark=triangle*,mark size=2pt,thick] coordinates {(1,0.685637057655) (2,0.790055450541) (4,0.908632607798) (8,1.80656192641) (16,3.57106197403) (32,6.97801155294) (64,13.3296226903)};
\addlegendentry{$\theta=0.77$}
\addplot+[color=black,mark=diamond*,mark size=2pt,thick] coordinates {(1,0.687386354243) (2,0.792352455979) (4,0.916515138991) (8,1.83303027798) (16,3.66606055596) (32,7.33212111193)};
\addlegendentry{SWAP}
\end{groupplot}
\end{tikzpicture}
\caption{We again consider dual unitary gates of the form $u_{\theta} = \exp[-i(\frac{\pi}{4}XX) + \frac{\pi}{4}YY +\theta ZZ)]$. In (a) these are dressed with independently sampled Haar random single unitaries at each leg, while in (b), we get rid of the local dressings, and study the approach to the SWAP gate, which is solitonic. We fix $\beta = 1/4$, and look at what value of $b$ is needed to control all transients at horizon $H$, calling it $b_{H}$. The conjecture requires a single, polynomially bounded $b_m(t)$ satisfying $b_H\leq b_m(t)$ for every $H$. The numerics in (a) show that $b_H$ seems to saturate with $H$, as required. In (b), however, in the SWAP case, $b_{H}$ does not stabilize at all, and as we get closer to it, the transients take longer and longer to do so.}
\label{fig:finite-chain-recurrence-numerics}
\end{figure}

\section{Right defect dynamics of the finite chain}
\label{sec:free-walk}

In this section, we focus on the finite-chain dynamics of defects under the discrete-time process $\mathbb W$ defined in the previous section. We begin by showing that LPM forces a defect either to move or to be absorbed, and then use the first-return assumption to control how much of this motion is backwards. Our strategy for the same is as follows.

\begin{enumerate}
    \item \textbf{LPM implies mobility}. Using LPM, we show that the sum of the two-step unsigned current and the two-step absorption is lower bounded by some constant that depends on the replica number $m$. This argument proceeds in three steps: first, we show that LPM brickwork unitaries are inherently mobile, that is, microscopically, they move a fraction of defect weight from one leg to the next. This is the content of \autoref{lem:finite-tail-mobility}. We then put this in the context of the contracting walk, $\mathbb{K}_{t}$. Here, we show that the current over one iteration of the process of the \textit{unabsorbed} part of a vector is lower bounded by how much right defect weight it moves (\autoref{lem:current-sector-transfer}). Then, we show that the process cannot stall the right defect of the unabsorbed vector either over two time periods (\autoref{lem:stalled-two-step}). Putting these together gives a lower bound on the overall unsigned current in terms of the fraction that was absorbed, as well as a constant that is fixed by LPM. 
    \item {\textbf{Backwards subspace recurrence bounds negative current}}. Here we use our second dynamical conjecture to bound the fraction of the unsigned current that backflows over a block, modulo an initial transient. Using $J= |J|-2J_{-}$, this then yields a lower bound on the positive current over a block in terms of the fraction of norm that was absorbed, modulo the same initial transient.
\end{enumerate}

We retain the physical left boundary $A_L$ throughout. Recall that $t$ is odd, the sites are $1,\ldots,t$, and
\begin{equation}
    \mathbb W=\mathbb S\mathbb A,
\end{equation}
\begin{equation}
    Q=\sum_{y=1}^{t-1}P_y,
\end{equation}
\begin{equation}
    P=P_t,
\end{equation}
\begin{equation}
    \mathbb K_t=Q\mathbb WQ.
\end{equation}
The projectors $P,Q$ act on $\mathcal K$. By \autoref{lem:sector-locality},
\begin{equation}
    [\mathbb A,X]=0,
\end{equation}
\begin{equation}
    [\mathbb S,Q]=[\mathbb S,P]=0.
    \label{eq:finite-layer-commutations}
\end{equation}

\subsection{LPM gives an unsigned-current bound (and hence mobility)}
\label{sec:unsigned-current}

Recall that (by definition), the aligned layer preserves $\range(P_1)$ and each pair of sectors $\range(P_{2c}+P_{2c+1})$ for $c\in\{1,\ldots,(t-1)/2\}$, while the staggered layer preserves $\range(P_{2c+1}+P_{2c+2})$ for $c\in\{0,\ldots,(t-3)/2\}$, as well as the endpoint sector $\range(P_t)$. However, in the definition of LPM, we referenced the $\mathcal B/\mathcal G$ decomposition --- so, to use this assumption effectively, we will have to somehow translate between the
$\mathcal{B}/\mathcal{G}$ picture and the nested defect picture. To this end, let us begin by asking the following question. If we have some vector $v \in \range(P_{y})$ for some $y \leq t-1$, and apply a local two-site brickwork unitary $u^{(m)}_{y,y+1}$, what does the output look like?

First notice that if we take a vector $v \in \range(P_{y})$, and look at the $\mathcal{S}_{\mathrm{diag}}$ weight across the sites $(y,y+1)$, its support must lie in either $\mathcal{B}$ or $\mathcal{S}_{\mathrm{diag}}$. This is because $v \in \range(P_y) \subset \range(Q_{y})$ implies that $y+1$ must be in $\mathrm{span}(\{P_{\pi}, \pi \in S_{m}\})$. At first glance, it is a little curious that $v$ could locally live in $\mathcal{S}_{\mathrm{diag}}$, since $Q_{y-1} \perp P_{y}$. It is worth clarifying this subtlety. Define $F_{y,y+1}$ as the projection onto $\mathcal{S}_{\mathrm{diag}}$ on the neighboring $y,y+1$ sites. Then $F_{y,y+1}v := v_{\mathrm{S}}$ is not necessarily zero, even though $v \perp \range(Q_{y-1}) = \range(Q_{y}) \cap \range(F_{y,y+1})$, since the projection onto the local $\mathcal{S}_{\mathrm{diag}}$ could take us out of $P_{y}$.

If we had $v_{S} = 0$, then we could use the LPM property, such that we know that the defect is in the $\mathcal{B}$ sector and that a fraction of it was going to move to the mobile $\mathcal{G}$ sector. Since the $\mathcal{G}$ sector means that $y+1$ no longer has a permutation, we would have moved the right defect edge one further. This suggests bounding the projection onto $F_{y,y+1}$. To do so we use the result from Ref.~\cite{brandao2012local}

\begin{fact}
Let $F_{i,i+1}$ be the orthogonal projector onto $\mathcal{S}_{\mathrm{diag}}$ on the neighboring $i,i+1$ sites, and define the nearest-neighbor Hamiltonian $H_{\ell,m}:=\sum_{i=1}^{\ell-1}(\mathbb I-F_{i,i+1}).$ Then there exists $ 0 < \Delta_m \leq 1$, independent of $\ell\geq2$, such that
\begin{equation}
    H_{\ell,m}\geq 
   \Delta_m(\mathbb I-\Pi_{\mathcal S_{[1,\ell]}}),
\end{equation}
\begin{equation}
    \Delta_m^{-1}\leq\mathrm{poly}(m).
\label{eq:haar-comparison-gap}
\end{equation}
\end{fact}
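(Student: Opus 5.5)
The plan is to recognize $H_{\ell,m}$ as the local moment Hamiltonian of Brandão, Harrow and Horodecki and to reproduce their gap argument. Here the local dimension is $q=2$ and $m$ is the replica order.

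\textbf{Step 1: identify the projectors and the ground space.} I would first identify $F_{i,i+1}$ with the vectorized $m$-fold Haar twirl over $U(4)$ acting on sites $i,i+1$:
\begin{equation}
F_{i,i+1}=\int (g\otimes g^{*})^{\otimes m}\,\dd\mu(g).
\end{equation}
This holds because the range of that twirl is exactly $\mathcal S_{\mathrm{diag}}=\mathrm{span}\{P_\pi\otimes P_\pi\}$, by the Schur--Weyl statement recalled in Section~\ref{sec:state-designs}.

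Next I would show that the common kernel of the terms $\mathbb I-F_{i,i+1}$ is $\mathcal S_{[1,\ell]}$. One inclusion is immediate, since $\otimes_{[1,\ell]}P_\pi$ is fixed by every local twirl. For the converse, I would take a vector fixed by all $F_{i,i+1}$. It is then invariant under every $g^{(m)}_{i,i+1}$. Two-site gates on a line generate $U(2^\ell)$, so the vector is invariant under the global $m$-fold action and hence lies in $\mathcal S_{[1,\ell]}$. Consequently $H_{\ell,m}$ is frustration free with ground space $\mathcal S_{[1,\ell]}$, and the claim is precisely that its gap is bounded below uniformly in $\ell$.

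\textbf{Step 2: reduce to a finite block via Nachtergaele's martingale criterion.} Write $G_{[a,b]}:=\Pi_{\mathcal S_{[a,b]}}$. The bulk estimate I would prove is the overlap bound
\begin{equation}
\bigl\|G_{[1,\ell_0+1]}\,G_{[2,\ell_0+2]}-G_{[1,\ell_0+2]}\bigr\|\le\frac{1}{2\sqrt{\ell_0}},
\end{equation}
for a block length $\ell_0=O(\log_q m)$. Once this holds, Nachtergaele's theorem bounds the gap of $H_{\ell,m}$, for every $\ell$, from below by a constant times the gap of $H_{\ell_0,m}$ divided by $\ell_0$.

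To prove the overlap bound, I would use that for $q^{\ell_0}\gg m^2$ the operators $\{\otimes P_\pi\}$ are nearly orthonormal after rescaling by $q^{-\ell_0 m/2}$. Concretely, their Gram matrix is $\mathbb I+O(m^2q^{-\ell_0})$. Both $G$'s can then be expanded in the near-orthonormal bases of permutation strings on the two overlapping intervals. Matching the strings on the overlap shows that the product differs from $G_{[1,\ell_0+2]}$ by $O(m^2 q^{-\ell_0/2})$ in norm. Choosing $\ell_0\approx c\log_q m$ makes this smaller than $1/(2\sqrt{\ell_0})$.

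\textbf{Step 3: the finite-block gap, which I expect to be the main obstacle.} It remains to show that the gap of $H_{\ell_0,m}$ is at least $1/\mathrm{poly}(m)$. At this length the Hilbert space is only $\mathrm{poly}(m)$-sized in the relevant exponent, but a direct spectral computation is not available.

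I would follow Brandão--Harrow--Horodecki and relate the gap to the spectral gap of the local random-circuit moment operator on $\ell_0$ qubits. The detectability lemma, or equivalently $\|\prod_i F_{i,i+1}-G\|$, converts one into the other with only polynomial loss. The random-circuit gap at length $\ell_0$ is then bounded by a coupling, or path-coupling, argument on $U(2^{\ell_0})$. That argument gives mixing in $\mathrm{poly}(\ell_0)\cdot 2^{O(\ell_0)}=\mathrm{poly}(m)$ steps, and hence a gap of $1/\mathrm{poly}(m)$.

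\textbf{Conclusion.} Combining Steps 2 and 3 yields $\Delta_m^{-1}\le\mathrm{poly}(m)$, matching the explicit BHH bound of order $(\log m)^2m^5$. The bound $\Delta_m\le1$ is trivial, since each term of $H_{\ell,m}$ is a projector and an excited state of a single term supplies a test vector with energy at most $1$ relative to its ground-space projection.
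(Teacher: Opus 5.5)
Your proposal is correct and follows essentially the same route as the paper, which gives no independent argument but cites Brand\~ao--Harrow--Horodecki; you reconstruct their proof (frustration-free ground space via universality of nearest-neighbour gates, Nachtergaele's martingale bound at block length $O(\log m)$ using near-orthonormality of permutation strings, and a path-coupling bound on the finite block). Two details to tighten: first, the relation to the random-position local circuit is exact, $\frac{1}{\ell_0-1}\sum_i F_{i,i+1}=\mathbb I-H_{\ell_0,m}/(\ell_0-1)$, so no detectability lemma is needed; second, Nachtergaele's criterion needs the overlap estimate with an arbitrarily long left interval, i.e.\ a bound on $\|G_{[1,n]}G_{[n-\ell_0+1,n+1]}-G_{[1,n+1]}\|$ for all $n$ rather than only $n=\ell_0+1$ (your near-orthonormality computation gives this equally well), while chains with $\ell\le\ell_0$ are covered directly by the finite-block bound.
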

The ground space and gap follow from Lemmas~16--17 and Theorem~5 of Ref.~\cite{brandao2012local}. With this fact, we are ready to show that defects must move.

\begin{lemma}[Right-defect mobility]
\label{lem:finite-tail-mobility}
    Define $a_m:=\bigl[1-(1-\epsilon(m))^2\bigr]\Delta_m$, and $\kappa_m:=\sqrt{1-a_m}$. For $1\leq y<t$ and $x\in \range(P_y)$,
    \begin{equation}
    \|P_{y+1}u^{(m)}_{y,y+1}x\|^2\geq a_m\|x\|^2,
\end{equation}
\begin{equation}
    \|P_yu^{(m)}_{y,y+1}P_y\|\leq\kappa_m.
        \label{eq:finite-tail-mobility}
\end{equation}
\end{lemma}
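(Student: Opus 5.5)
Since $x\in\range(P_y)$ and $y<t$, the suffix $[y+1,t]$ of $x$ already lies in the matched-permutation span. The plan is to localize $x$ on the pair $(y,y+1)$ and split it with respect to the local decomposition $\mathcal S_{\mathrm{diag}}\oplus\mathcal B\oplus\mathcal G$ on those two sites.

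\textbf{Step 1 (local sectors).} Because $x\in\range(Q_y)$, site $y+1$ carries only local permutation content. Hence $x=v_S+v_B$, with $v_S:=F_{y,y+1}x$ in the local $\mathcal S_{\mathrm{diag}}$ sector and $v_B$ in the local $\mathcal B$ sector (tensored with the rest of the chain). The local $\mathcal G$ component vanishes.

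\textbf{Step 2 (bound $\|v_S\|$ using the gap).} I would bound $\|v_S\|$ by the Brandão--Harrow--Horodecki gap. The point is that $x\perp\range(Q_{y-1})$, and $\range(Q_{y-1})=\range(Q_y)\cap\range(F_{y,y+1})$. Consider the chain $[y,t]$ with the prefix $[1,y-1]$ as a spectator. On $\range(Q_y)$, all terms $\mathbb I-F_{i,i+1}$ with $i\geq y+1$ annihilate the state. The term $\mathbb I-F_{y,y+1}$ is then the only nonzero term of $H_{t-y+1,m}$. Applying the Fact on $[y,t]$ gives
\[
\|(\mathbb I-F_{y,y+1})x\|^2=\langle x|H_{t-y+1,m}|x\rangle\geq\Delta_m\|(\mathbb I-\Pi_{\mathcal S_{[y,t]}})x\|^2=\Delta_m\|x\|^2 .
\]
The last equality holds since $\Pi_{\mathcal S_{[y,t]}}$ restricted to $\range(Q_y)$ is $Q_{y-1}$, and $x\perp Q_{y-1}$. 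Therefore $\|v_B\|^2\geq\Delta_m\|x\|^2$. The main subtlety I expect lies here: checking that the spectator prefix and the identification $\Pi_{\mathcal S_{[y,t]}}Q_y=Q_{y-1}$ are compatible. One must also allow $y=1$ with the singleton site, and $\ell=t-y+1\geq2$.

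\textbf{Step 3 (apply LPM).} The gate $u^{(m)}_{y,y+1}$ fixes $v_S$. By the block structure of $(u\otimes u^*)^{\otimes m}$ it maps $v_B$ into $\mathcal B\oplus\mathcal G$, and LPM gives $\|\Pi_{\mathcal G}u^{(m)}v_B\|^2\geq[1-(1-\epsilon(m))^2]\|v_B\|^2$. Now take the local $\mathcal G$ component at $(y,y+1)$ of a vector whose suffix $[y+2,t]$ is still transparent, with matching permutations. This component is orthogonal to $\range(Q_y)$ and lies in $\range(Q_{y+1})$, so it sits in $\range(P_{y+1})$. Moreover $P_{y+1}$ annihilates the image of $v_S$ and the $\mathcal B$ part, since these lie in $\range(Q_y)$. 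Hence $\|P_{y+1}u^{(m)}x\|^2\geq\|\Pi_{\mathcal G}u^{(m)}v_B\|^2\geq a_m\|x\|^2$.

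\textbf{Step 4 (operator norm bound).} By Lemma~\ref{lem:sector-locality}, the gate preserves $\range(P_y+P_{y+1})$ and is unitary there. Therefore $\|P_yu^{(m)}x\|^2=\|x\|^2-\|P_{y+1}u^{(m)}x\|^2\leq(1-a_m)\|x\|^2$, which gives $\|P_yu^{(m)}_{y,y+1}P_y\|\leq\kappa_m$.
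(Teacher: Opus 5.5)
Your Steps 1, 2 and 4 follow the paper's proof. The gap bound on $\|(\mathbb I-F_{y,y+1})x\|^2$, LPM applied to the local $\mathcal B$ component, and unitarity on $\range(P_y+P_{y+1})$ for the operator-norm bound all appear there in the same form.

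The gap is at the end of Step 3. You claim that $v_S$ and $\Pi_{\mathcal B}u^{(m)}v_B$ lie in $\range(Q_y)$, so that $P_{y+1}$ annihilates them and $P_{y+1}u^{(m)}x=\Pi_{\mathcal G}u^{(m)}v_B$. This fails in general when $y\le t-2$. The local projections onto $\mathcal S_{\mathrm{diag}}$ and $\mathcal B$ at $(y,y+1)$ only put site $y+1$ into $\mathcal P$. Membership in $\range(Q_y)$ requires more: the permutation at $y+1$ must match the string on $[y+2,t]$. The paper warns about exactly this before the lemma, noting that $F_{y,y+1}$ can take a vector out of $\range(P_y)$.

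A concrete counterexample: take $m=2$ and write $P_{\mathrm{id}},P_\tau$ for the two permutations. Then $F_{y,y+1}(P_{\mathrm{id}}\otimes P_\tau)=\tfrac25(P_{\mathrm{id}}\otimes P_{\mathrm{id}}+P_\tau\otimes P_\tau)$. With $t=3$, the vector $x=P_{\mathrm{id}}^{\otimes3}-\tfrac{21}{4}P_{\mathrm{id}}\otimes P_\tau\otimes P_\tau+\tfrac52P_\tau^{\otimes3}$ lies in $\range(P_1)$; check $\langle P_\pi^{\otimes3},x\rangle=0$ using $\operatorname{tr}P_{\mathrm{id}}=4$ and $\operatorname{tr}P_\tau=2$. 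Yet $v_S=F_{1,2}x$ contains the term $-\tfrac{21}{10}P_{\mathrm{id}}\otimes P_{\mathrm{id}}\otimes P_\tau$, whose sites $2,3$ carry mismatched permutations. Hence $v_S\notin\range(Q_1)$ and $P_2v_S\neq0$. The same applies to the $\mathcal B$ output, which can change the permutation at site $y+1$. So the identification $P_{y+1}u^{(m)}x=\Pi_{\mathcal G}u^{(m)}v_B$ underlying your argument is false.

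The inequality you need still holds, by the paper's route. Let $L:=\mathbb I\otimes\Pi_{\mathcal P}$ act at site $y+1$. Since $\range(Q_y)\subseteq\range(L)$, we have $LQ_y=Q_yL=Q_y$, and hence $\mathbb I-Q_y\geq\mathbb I-L$. Using $u^{(m)}x\in\range(P_y+P_{y+1})$ from Lemma~\ref{lem:sector-locality},
\[
\|P_{y+1}u^{(m)}x\|^2=\|(\mathbb I-Q_y)u^{(m)}x\|^2\geq\|(\mathbb I-L)u^{(m)}x\|^2=\|\Pi_{\mathcal G}u^{(m)}v_B\|^2 .
\]
The last equality needs only that $v_S$ and $\Pi_{\mathcal B}u^{(m)}v_B$ lie in $\range(L)$, not in $\range(Q_y)$. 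Equivalently, $P_{y+1}$ commutes with $L$, so $P_{y+1}$ applied to the $\range(L)$ part is orthogonal to the $\mathcal G$ part, and Pythagoras gives the bound. With this replacement, the rest of your argument, including Step 4, goes through unchanged.
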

\begin{proof}
    For now, let us forget about the sites to the left of $y$, allowing an arbitrary tensor factor there. For simplicity of notation, let us define $F=F_{y,y+1}$, $L=\mathbb I_y\otimes\Pi_{\mathcal P_{y+1}}$, and $u=u^{(m)}_{y,y+1}$, with identities on the remaining sites.

    Since $x\in\range(P_y)$, it is also in $\range(Q_{y})$, and so we have $Lx=x$, that is, the $y+1$ site must have a permutation. Now from the discussion above, $(\mathbb I-F)x$ lies in $\mathcal B$ on the first pair, tensored with the other sites. 

    Let us evaluate $\Vert (\mathbb{I}-L)u(x)\Vert^{2}$. We can write $\Vert (\mathbb{I}-L)u(x)\Vert^{2} = \Vert (\mathbb{I}-L)u(F(x) + (\mathbb{I} -F)(x))\Vert^{2}$. Now because $Lx = x$, $F(x) \in \mathcal{S}_{\mathrm{diag}}$, and since $u$ fixes $\mathcal{S}_{\mathrm{diag}}$, this term is annihilated by $(\mathbb{I}-L)$, so we don't need to care about it anymore. Note that $(\mathbb{I}-L)$ projects onto the $\mathcal{G}$ sector, since it picks where $y+1$ is not a permutation. Thus  
    \begin{multline}
        \Vert (\mathbb{I}-L)u(x)\Vert^{2} = \Vert (\mathbb{I}-L)u(\mathbb{I}-F)x\Vert^{2} =  \Vert \Pi_{\mathcal{G}}u(\mathbb{I}-F)x\Vert^{2} =\\
        \Vert u(\mathbb{I}-F)x\Vert^{2} - \Vert \Pi_{\mathcal B} u((\mathbb{I}-F)x)\Vert^{2} \geq (1-( 1-\epsilon(m))^{2})\Vert (\mathbb{I}-F)x\Vert^{2}.
    \end{multline}

    Here, we used LPM in the final inequality. Now we will work on bounding $\Vert (\mathbb{I}-F)x\Vert^{2}$. For this we use Equation~\eqref{eq:haar-comparison-gap}, with the Hamiltonian $H_{[y,t],m}:=\sum_{j=y}^{t-1}(\mathbb I-F_{j,j+1})$, with identities on the spectator sites. Clearly every term of this Hamiltonian annihilates $x$ except for $\mathbb{I}- F$. Thus we have 

    \begin{equation}
        \Vert (\mathbb{I}-F)x\Vert^{2} \geq \Delta_m\Vert (\mathbb{I}-Q_{y-1})x\Vert^{2} = \Delta_m\Vert x\Vert^{2}.
    \end{equation}

    Since $Q_y\preceq L$ and $ux\in\operatorname{range}(P_y+P_{y+1})$ by Lemma~\ref{lem:sector-locality},
    \begin{equation}
        \|P_{y+1}ux\|^2=\|(\mathbb I-Q_y)ux\|^2
        \geq\|(\mathbb I-L)ux\|^2.
    \end{equation}
    This yields the first part of Equation \eqref{eq:finite-tail-mobility}. The second follows from locality, and the orthogonality of $P_y$ and $P_{y+1}$, since any defect weight not in $P_{y+1}$ must stay in $P_{y}$.
\end{proof}

\begin{figure}[htbp]
\centering
\begin{tikzpicture}[
    scale=1.15, font=\small,
    tensor/.style={rectangle, draw=black, thick, rounded corners=3pt,
        fill=orange!80!white, minimum size=24pt, inner sep=0pt},
    leg/.style={thick, draw=black!80},
    perm/.style={circle, draw=red!80!black, thick, fill=white,
        inner sep=1.5pt, font=\scriptsize, text=red!80!black},
    inputdefect/.style={draw=red!70!black, thick, fill=red!12,
        rounded corners=4pt, minimum height=0.5cm},
    outputdefect/.style={draw=green!60!black, thick, fill=green!12,
        rounded corners=4pt, minimum height=0.5cm}
]
    % One brickwork layer; the middle gate straddles the right defect edge.
    \node[tensor] (leftgate)  at (-2.4,0) {$m$};
    \node[tensor] (edgegate)  at (0,0)    {$m$};
    \node[tensor] (rightgate) at (2.4,0)  {$m$};
    \node at (-4.0,0) {$\cdots$};
    \node at (4.0,0) {$\cdots$};

    % Diagonal wires, as in the manuscript's brickwork diagrams.
    \foreach \g/\c in {leftgate/-2.4,edgegate/0,rightgate/2.4}{
        \draw[leg] (\g.south west) -- (\c-0.8,-0.9);
        \draw[leg] (\g.south east) -- (\c+0.8,-0.9);
        \draw[leg] (\g.north west) -- (\c-0.8,0.9);
        \draw[leg] (\g.north east) -- (\c+0.8,0.9);
    }

    % Input: arbitrary left part, with a matched permutation tail.
    \node[inputdefect, minimum width=3.1cm] at (-2.05,-1.15)
        {Defect};
    \foreach \q in {0.8,1.6,3.2}{
        \node[perm] at (\q,-0.9) {$\pi$};
    }
    \node at (-0.8,-1.65) {$y$};
    \node at (0.8,-1.65) {$y+1$};
    \node[font=\scriptsize] at (2.4,-1.65) {matched tail};
    \node[anchor=east, text=red!70!black] at (-3.8,-1.15)
        {$x\in\range(P_y)$};

    % Display the projected component whose right edge has moved to y+1.
    \node[outputdefect, minimum width=4.7cm] at (-1.25,1.15)
        {Defect};
    \foreach \q in {1.6,3.2}{
        \node[perm] at (\q,0.9) {$\pi$};
    }
    \node[anchor=east, text=green!50!black] at (-3.8,1.15)
        {$P_{y+1}ux$};
    \node at (-0.8,1.65) {$y$};
    \node at (0.8,1.65) {$y+1$};

    % The orange middle gate is the local gate in the lemma.
    \node[font=\scriptsize, anchor=west] at (0.45,0)
        {$u^{(m)}_{y,y+1}$};
    \draw[thick, -{Stealth[length=2mm]}, green!50!black]
        (-0.8,2.05) -- (0.8,2.05);
    \node[above, text=green!50!black] at (0,2.05) {right-edge transfer};

    \node[align=center, text=red!70!black] at (-0.65,-2.55)
        {Hamiltonian gap: a detectable $\mathcal B$ component\\[4pt]
         $\|(\mathbb I-F)x\|^2\geq\Delta_m\|x\|^2$};
    \node[align=center, text=green!50!black] at (-0.65,-3.8)
        {LPM: transfer from $\mathcal B$ into $\mathcal G$\\[4pt]
         $\|\Pi_{\mathcal G}ux\|^2\geq
         [1-(1-\epsilon(m))^2]\|(\mathbb I-F)x\|^2$};
    \node[draw=black!50, rounded corners=3pt, inner sep=7pt]
        at (-0.65,-5.05)
        {$\|P_{y+1}ux\|^2\geq
        \underbrace{[1-(1-\epsilon(m))^2]\Delta_m}_{a_m}\|x\|^2$};
\end{tikzpicture}
\caption{We summarize the contents of \autoref{lem:finite-tail-mobility} here. The Hamiltonian gap result guarantees  $\mathcal{B}$ weight, which by LPM becomes $\mathcal{G}$ weight, which then flows across to the next defect site.}
\label{fig:right-defect-mobility}
\end{figure}

Having used the unitary walk $\mathbb{W}$ to define the current and establish bulk mobility, let us now start analyzing the walk with a hard absorbing boundary $\mathbb{K}_{t}$. To this end, we split the staggered layer $\mathbb{S}$ into sector-preserving and sector-changing parts. Given a right defect edge $y \in [1,t]$, the part of $\mathbb{S}$ which preserves $y$ is given by $P_{y}\mathbb{S}P_{y}$. In words, we restrict the domain to vectors in $\range(P_y)$, evolve it by $\mathbb{S}$, and then again restrict the output to $\range(P_y)$. We define the sector-preserving $S_{0}$ as the sum of this quantity over $y$, that is,  

\begin{equation}
    S_0:=\sum_{y=1}^tP_y\mathbb S P_y.
    \label{eq:finite-mobile-stalled-operators}
\end{equation}
The part of the staggered layer that doesn't stall the defect is then given by 
\begin{equation}
    T:=\mathbb S-S_0.
\end{equation}

Recall that $Q$ is the projector onto the set of defects whose right edge is before $t$ (or equivalently before $A_{R}$). Then we can define another operator, $Z$, which retains the part of a vector which is not absorbed and does not change its single-site sector during the staggered layer.

\begin{equation}
     Z:=S_0 Q\mathbb A.
\end{equation}

On the vectors which survive contraction, we have $\mathbb{K}_{t} = Z +TQ\mathbb{A}$. To see this, consider $\mathbb{K}_{t}v = Q\mathbb{W}Qv = Q\mathbb{S}\mathbb{A}v$, using $Qv = v$ (surviving vector). Further, $\mathbb{S}$ does not act on the $t$ site, so $[\mathbb{S}, P_{t}] = [\mathbb{S}, Q] = 0$, and we get $\mathbb{S}Q\mathbb{A}v = (S_{0}+T)Q\mathbb{A}v = (Z+TQ\mathbb{A})v$. This second term will turn out to lower bound the unsigned current in the next Lemma. Physically, since $Z$ retains the part of a vector which does not change its sector, $TQ\mathbb{A}$ retains vectors which come out $\mathbb{K}_{t}$ with their defect edge moved.

\begin{lemma}[Current controls sector transfer]
\label{lem:current-sector-transfer}
Retain the setting from before. We then have $\|J\|\leq1$, $\|T\|\leq1$, and for every $w\in\range(\ Q)$,
\begin{equation}
    \|TQ\mathbb A w\|^2\leq\langle w,|J|w\rangle.
    \label{eq:current-sector-transfer}
\end{equation}
\end{lemma}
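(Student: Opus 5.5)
The plan is to reduce everything to the staggered layer $\mathbb S$ acting block by block, and then to a two-projector identity inside each block. First I will use $[\mathbb A,X]=0$ to write
\[
J=\mathbb W^\dagger X\mathbb W-X=\mathbb A^\dagger\bigl(\mathbb S^\dagger X\mathbb S-X\bigr)\mathbb A=:\mathbb A^\dagger J'\mathbb A .
\]
By \autoref{lem:sector-locality}, $\mathbb A$ is unitary on $\mathcal K$ and commutes with every $Q_y$, hence with $Q$ and $P_t$. Therefore $|J|=\mathbb A^\dagger|J'|\mathbb A$, and for $w\in\range(Q)$ the vector $y:=\mathbb A w=Q\mathbb A w$ again lies in $\range(Q)$. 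The claim then becomes
\[
\|Ty\|^2\le\langle y,|J'|y\rangle \quad\text{for } y\in\range(Q).
\]

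Next I decompose $\mathcal K$ into the staggered blocks $\mathcal H_c:=\range(P_{2c+1}+P_{2c+2})$, $c=0,\ldots,(t-3)/2$, together with the lone endpoint sector $\range(P_t)$. By \autoref{lem:sector-locality}, $\mathbb S$ preserves each of these, and on each block it is unitary. On $\mathcal H_c$ we have $X=c\,\mathbb I+E_c$, where $E_c:=P_{2c+2}$. Write $O_c:=P_{2c+1}$.

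Consequently $J'$ restricted to $\mathcal H_c$ equals $D_c:=R_c-E_c$ with $R_c:=\mathbb S^\dagger E_c\mathbb S$. This is a difference of two orthogonal projections, so $\|D_c\|\le1$, and $J'$ vanishes on $\range(P_t)$. Since $\mathbb A$ is unitary, $\|J\|\le1$ follows.

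Likewise, on $\mathcal H_c$ the operator $T$ is the off-diagonal part $E_c\mathbb S O_c+O_c\mathbb S E_c$, and $T=0$ on $\range(P_t)$. For $x=o+e$ with $o\in\range(O_c)$ and $e\in\range(E_c)$,
\[
\|Tx\|^2=\|E_c\mathbb S o\|^2+\|O_c\mathbb S e\|^2\le\|o\|^2+\|e\|^2 .
\]
Summing over the orthogonal blocks gives $\|T\|\le1$.

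For the main inequality I work in one block and write $y=o+e$. Unitarity of $\mathbb S$ on $\mathcal H_c$ gives two identities:
\[
\|E_c\mathbb S o\|^2=\langle o,R_c o\rangle=\langle o,D_c o\rangle,
\qquad
\|O_c\mathbb S e\|^2=\|e\|^2-\|E_c\mathbb S e\|^2=-\langle e,D_c e\rangle .
\]
Introduce the involution $\Gamma:=O_c-E_c$. Then
\[
\|Ty\|^2=\langle o,D_co\rangle-\langle e,D_ce\rangle=\tfrac12\langle y,(\Gamma D_c+D_c\Gamma)y\rangle=\operatorname{Re}\langle \Gamma y,D_c y\rangle .
\]
The key observation is the classical fact that for two projections the square $(R_c-E_c)^2$ commutes with both $R_c$ and $E_c$. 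One checks this directly: $E_c(R_c+E_c-R_cE_c-E_cR_c)=E_cR_c+E_c-E_cR_cE_c-E_cR_c$, and the right product is the same expression. Hence $|D_c|=\sqrt{D_c^2}$ commutes with $E_c$ and with $\Gamma$.

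Factoring $D_c=|D_c|^{1/2}\operatorname{sgn}(D_c)|D_c|^{1/2}$ and applying Cauchy--Schwarz then gives
\[
\|Ty\|^2\le\bigl\||D_c|^{1/2}\Gamma y\bigr\|\,\bigl\||D_c|^{1/2}y\bigr\|=\langle y,|D_c|y\rangle .
\]
Summing over the orthogonal blocks, using $y\in\range(Q)$ so that there is no $P_t$ component, yields $\|TQ\mathbb A w\|^2\le\langle w,|J|w\rangle$.

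The step I expect to be the crux is this last one: the naive bound $|D_c|\ge D_c^2$ does not obviously relate to the off-diagonal weight. What rescues the argument is rewriting $\|Ty\|^2$ as $\operatorname{Re}\langle\Gamma y,D_cy\rangle$ and exploiting the commutation of $|R_c-E_c|$ with $E_c$.
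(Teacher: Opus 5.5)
Your proof is correct in substance, but it takes a longer route than the paper. The paper (which writes $J_{\mathbb S}$ for your $J'$) proves the exact blockwise identity $J'^2=T^\dagger T$. For $x=o+e\in\mathcal H_c$ one has $D_cx=R_co-(\mathbb I-R_c)e$, and Pythagoras on the orthogonal ranges of $R_c$ and $\mathbb I-R_c$ gives $\|D_cx\|^2=\|E_c\mathbb So\|^2+\|O_c\mathbb Se\|^2=\|Tx\|^2$. Then $\|D_c\|\le1$ gives $D_c^2\le|D_c|$, which finishes the proof and yields $\|T\|\le1$ for free.

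So the step you flag as the crux is not one: your own intermediate expression already equals $D_c^2$. Since $\Gamma=\mathbb I-2E_c$ on $\mathcal H_c$,
\[
\tfrac12(\Gamma D_c+D_c\Gamma)=D_c-(E_cD_c+D_cE_c)=R_c+E_c-R_cE_c-E_cR_c=D_c^2 ,
\]
so $\operatorname{Re}\langle\Gamma y,D_cy\rangle=\|D_cy\|^2$, and the ``naive'' bound $D_c^2\le|D_c|$ relates directly to the off-diagonal weight. Your commutation-plus-Cauchy--Schwarz step is valid. It is the general fact that $\operatorname{Re}\langle\Gamma y,Dy\rangle\le\langle y,|D|y\rangle$ whenever $|D|$ commutes with a self-adjoint unitary $\Gamma$. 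Here, though, it only buys generality you do not need, and it hides the exact identity $\|Tx\|=\|J'x\|$.

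One claim in your reduction is false and should be removed: $\mathbb A$ does not commute with $Q$ or with $P_t$. Lemma~\ref{lem:sector-locality} gives commutation of the aligned gate on $(2c,2c+1)$ with $Q_y$ only for $y\neq 2c$. Moreover, the gate $u^{(m)}_{t-1,t}$ inside $\mathbb A$ genuinely moves weight from $\range(P_{t-1})$ into $\range(P_t)$; by Lemma~\ref{lem:finite-tail-mobility} with $y=t-1$, it moves at least an $a_m$ fraction. This is exactly the absorption term $\|P_t\mathbb Av\|^2$, so $\mathbb Aw$ need not lie in $\range(Q)$.

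You never actually need that claim. $T$ and $J'$ are block diagonal over $\mathcal K=\bigoplus_c\mathcal H_c\oplus\range(P_t)$, and both vanish on $\range(P_t)$. Hence your blockwise inequality $\|Tx\|^2\le\langle x,|J'|x\rangle$ holds for every $x\in\mathcal K$, while $TP_t=0$ gives $TQ\mathbb Aw=T\mathbb Aw$. Therefore $\|TQ\mathbb Aw\|^2=\|T\mathbb Aw\|^2\le\langle\mathbb Aw,|J'|\mathbb Aw\rangle=\langle w,|J|w\rangle$. This is how the paper closes the argument, and your final sentence should be replaced accordingly.
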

\begin{proof}
Since the aligned layer preserves $X$ (recall that we had defined current as being over two cells in the aligned pair notation), we can immediately write
\begin{equation}
    J=\mathbb A^\dagger J_{\mathbb S}\mathbb A,
\end{equation}
\begin{equation}
    J_{\mathbb S}:=\mathbb S^\dagger X\mathbb S-X.
\end{equation}
Fix a staggered block at location $(2c+1, 2c+2)$ for some $c\in\{0,\ldots,(t-3)/2\}$, and consider the set of operators with a right defect edge here, that is $\range(P_{2c+1}+P_{2c+2})$. For simplicity of notation, define $s$ to be the restriction of $\mathbb{S}$ to this block, and let $\Pi$ be $P_{2c+2}|_{\range(P_{2c+1}+P_{2c+2})}$.

    The two positions differ by one, since the two sites in a staggered block have different positions, so $X |_{\range(P_{2c+1}+P_{2c+2})}= c\mathbb{I} + \Pi$. This is especially clear if we write it out in matrix form --- $X$ has value $\lfloor (2c+1)/2\rfloor$ at $2c+1$, whereas it has value $\lfloor (2c+2)/2\rfloor$, so these are the diagonal entries. The $\Pi$ compensates for the extra $1$. As a result, a simple substitution yields
    
    \begin{equation}
        J_{\mathbb S}=s^\dagger\Pi s-\Pi
    \end{equation}
    on this block. Put $R=s^\dagger\Pi s$. This is clearly an orthogonal projector, since $R^{\dag} = R$, and $R^2 = R$. For $x=x_1+x_2$, $x_1 \in \range(P_{2c+1})$, and $x_{2} \in \range(P_{2c+2})$,
    \begin{equation}
    \begin{aligned}
        \Vert J_{\mathbb{S}}x \Vert^2 =\Vert(R-\Pi)x\Vert^2 &= \Vert(R-\Pi)(x_1 +x_2)\Vert^2\\
        &= \Vert Rx_1 +Rx_2 -x_2 \Vert^{2} = \|Rx_1\|^2+\|(\mathbb I-R)x_2\|^2\\
        &=\|\Pi sx_1\|^2+\|(\mathbb I-\Pi)sx_2\|^2
         =\|Tx\|^2.
    \end{aligned}
    \end{equation}
    The expansion uses $\Pi P_{2c+1}=0$, Pythagoras' theorem on the orthogonal ranges of $R$ and $\mathbb{I}-R$, and the definition of $R$. Thus $J_{\mathbb S}^2=T^\dagger T$ on each staggered block. Both operators vanish on the endpoint sector $\range(P_t)$, so the identity holds on $\mathcal K$. A difference of orthogonal projectors has norm at most one, so $|J_{\mathbb S}| \geq J_{\mathbb S}^2$ and $\|T\|\leq1$.
    
    Now $TP=0$ (the staggered layer doesn't touch the endpoint), which implies $T\mathbb A w=TQ\mathbb A w$. But then since $T^{\dag}T = J_{\mathbb{S}}^{2}$, we have $\Vert T\mathbb{A}w\Vert^2 = \langle \mathbb{A}w, T^{\dag} T\mathbb{A}w\rangle = \langle \mathbb{A}w, J_{\mathbb{S}}^{2}\mathbb{A}w\rangle \leq \langle \mathbb{A}w, \vert J_{\mathbb{S}} \vert \mathbb{A}w\rangle = \langle w, |J|w\rangle$. This completes our proof.
\end{proof}

In \autoref{lem:finite-tail-mobility}, we showed that a fraction of every defect microscopically moves right under the action of a brickwork unitary, while in \autoref{lem:current-sector-transfer}, we connect the motion of this defect right edge with the unsigned current. In the next Lemma we will show that the fraction that stalls cannot do so forever; in particular, the stalled branch must contract over a two-step period.

\begin{lemma}[Two-step contraction of the stalled part]
\label{lem:stalled-two-step}
    On $\range(Q)$,
    \begin{equation}
    \|Z^2\|^2\leq1-\delta_m,
    \label{eq:stalled-two-step}
\end{equation}
\begin{equation}
    \delta_m:=\frac{a_m^2}{5}.
\end{equation}
\end{lemma}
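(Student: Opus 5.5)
The plan is to track how norm is lost over one step of $Z=S_0Q\mathbb A$, sorting defect weight by the parity of its single-site sector. I will call $\range(P_y)$ with $y$ odd (including the singleton $y=1$) the \emph{staggered-lower} sectors, and $y$ even the \emph{aligned-lower} sectors. The two layers then play complementary roles.

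The staggered gate on $(2c+1,2c+2)$ is a gate with $y=2c+1$ as its lower site. So \autoref{lem:finite-tail-mobility} gives $\|P_{2c+1}sP_{2c+1}\|\leq\kappa_m$, and hence $S_0$ contracts every odd sector by $\kappa_m$. It may act isometrically on the even sectors $P_{2c+2}$, since there the gate sits to the left of the edge. Symmetrically, the aligned gate on $(2c,2c+1)$ has $y=2c$ as its lower site. \autoref{lem:finite-tail-mobility} then says that for $e\in\range(P_{2c})$ we have $\|P_{2c+1}\mathbb A e\|^2\geq a_m\|e\|^2$. When $2c=t-1$ the image sector is $P_t$, and that weight is removed by $Q$; I count this absorbed weight together with odd weight.

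The argument uses $\sum_y P_y=\mathbb I_{\mathcal K}$ (\autoref{lem:defect-decomposition}) and the fact that both layers are block diagonal in these sectors (\autoref{lem:sector-locality}). Because of this, every norm below splits orthogonally over sectors. I will treat absorbed weight as if it were odd weight, since it is lost as well.

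For the first step, take a normalized $w\in\range(Q)$. Write $Q\mathbb A w=o+e$, with $o$ the odd-sector part and $e$ the even-sector part, and let $\ell$ denote the absorbed part. Since $S_0$ preserves sectors,
\begin{equation}
1-\|Zw\|^2\;\geq\;\|\ell\|^2+(1-\kappa_m^2)\|o\|^2\;\geq\;a_m\|o\|^2 .
\end{equation}
Hence $Zw=o'+e'$ with $\|o'\|\leq\|o\|$ and $\|e'\|\leq\|e\|$. If $\|o\|^2\geq\eta$ for a threshold $\eta$ to be chosen, one step already loses $a_m\eta$. Because $\|Z\|\leq1$, the second step cannot undo this, so $\|Z^2w\|^2\leq 1-a_m\eta$.

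Otherwise $\|e'\|^2\geq\|Zw\|^2-\|o\|^2$, and we may assume $\|Zw\|^2\geq 1-a_m\eta$ (if not, we are again done). Then $\|e'\|^2\geq 1-(1+a_m)\eta$. In the second step I bound the weight pushed into odd or absorbed sectors by the triangle inequality within each aligned block:
\begin{equation}
\|(P_{\mathrm{odd}}+P_t)\mathbb A Zw\|\;\geq\;\sqrt{a_m}\,\|e'\|-\|o'\|\;\geq\;\sqrt{a_m}\sqrt{1-2\eta}-\sqrt{\eta}.
\end{equation}
The second $S_0$ then removes at least an $a_m$-fraction of the square of this quantity, exactly as in the first step. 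Thus $\|Z^2w\|^2\leq 1-a_m\bigl(\sqrt{a_m(1-2\eta)}-\sqrt\eta\bigr)_+^2$.

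Finally I will choose $\eta$ of order $a_m/4$. This balances the two cases and should yield a loss of at least $a_m^2/5$; for instance $\eta=a_m/4$ gives roughly $a_m\cdot a_m(1-\tfrac12)^2\cdot$ const. I would tune the constant so that both branches exceed $a_m^2/5$, using $a_m\leq1$.

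I expect the main obstacle to be the second-step bound: coherent interference inside an aligned block between $\mathbb A e'$ and $\mathbb A o'$. Since $\mathbb A$ is unitary on $\range(P_{2c}+P_{2c+1})$, the odd input $o'$ could partially cancel the odd output of $e'$. The crude triangle-inequality estimate above handles this only because $\|o'\|$ is forced to be small in that branch, and the constant $1/5$ has to absorb this loss. I also need to check the boundary cases. The singleton $P_1$ lies only in the staggered gate $(1,2)$, where it counts as odd and is contracted. Weight in $P_{t-1}$ is even for the aligned gate $(t-1,t)$, and its mobility goes straight into absorption; this is harmless, since it only increases the loss.
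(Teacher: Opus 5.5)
Your proposal follows the same route as the paper. The paper restricts to one aligned cell $\range(P_{2c}+P_{2c+1})$, on which $Z$ is block diagonal, and treats the singleton and last sectors separately, while you work globally. The ingredients are otherwise identical: the two-case split on the odd weight after $\mathbb A$, the push $\|P_{2c+1}\mathbb A e'\|\geq\sqrt{a_m}\|e'\|$ by the second aligned layer, and the triangle inequality against the surviving odd weight.

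The step that fails as written is the final tuning of $\eta$. Your two bounds are $1-a_m\eta$ and $1-a_m\bigl(\sqrt{a_m(1-2\eta)}-\sqrt\eta\bigr)_+^2$. The first branch forces $\eta\geq a_m/5$. At $a_m=1$ the second branch then gives only $(\sqrt{3/5}-\sqrt{1/5})^2\approx0.107<1/5$. In fact it falls short for every $a_m>1/2$, and your suggested $\eta=a_m/4$ gives about $0.043\,a_m^2$ at $a_m=1$. Since only $a_m\leq1$ is available, the best uniform constant your estimate gives is $\delta_m=a_m^2/6$, at $\eta=a_m/6$. That is weaker than the stated $a_m^2/5$, which also feeds into $c(m)=a_m^4/2000$.

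The loss comes from weakening $\|o'\|\leq\kappa_m\|o\|$, which you had already shown, to $\|o'\|\leq\|o\|$. Keep $\kappa_m$. In the remaining branch you then get $\|e'\|^2\geq\|Zw\|^2-\kappa_m^2\|o\|^2\geq1-(a_m+\kappa_m^2)\eta=1-\eta$. So the second-step loss is at least $a_m\bigl(\sqrt{a_m(1-\eta)}-\kappa_m\sqrt\eta\bigr)_+^2$. With $\eta=a_m/5$ this equals $a_m^2\bigl(\sqrt{1-a_m/5}-\sqrt{(1-a_m)/5}\bigr)^2$. The bracket is increasing in $a_m$, so this is at least $a_m^2(1-1/\sqrt5)^2>a_m^2/5$, and both branches now give $\delta_m=a_m^2/5$.

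This is what the paper does by carrying $\|y_{2c+1}\|\leq\kappa\sqrt p$ into the second step. Its threshold $\theta=a_m/((1+\kappa)^2+a_m)$ actually yields $a_m^2/4$, since $(1+\kappa_m)^2+a_m=2+2\kappa_m\leq4$.
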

\begin{proof}
It is useful to describe the idea of this proof in a figure before actually writing it out fully. 

\begin{figure}[h!]
    \centering
    \begin{tikzpicture}[
    baseline=(current bounding box.center),
    scale=1.02,
    tensor/.style={
        rectangle, draw=black, thick, rounded corners=3pt,
        fill=orange!80!white, minimum size=22pt,
        font=\small\bfseries
    },
    defect/.style={
        rectangle, draw=red!80!black, thick, fill=white,
        rounded corners=3pt, inner sep=4pt,
        font=\small\bfseries, text=red!80!black
    },
    redcirc/.style={
        circle, draw=red!80!black, thick, fill=white,
        inner sep=1.2pt, font=\scriptsize,
        text=red!80!black
    },
    leg/.style={thick, draw=black!80},
    capsuleB/.style={
        draw=red!60!pink, thick, fill=red!20,
        fill opacity=0.5, rounded corners=6pt
    },
    capsuleY/.style={
        draw=green!60!black, thick, fill=green!20,
        fill opacity=0.5, rounded corners=6pt
    },
    % bubbles centered directly on circuit legs
    rbubble/.style={
        ellipse, draw=red!75!black, thick,
        fill=white, minimum width=16pt,
        minimum height=9pt, inner sep=0pt
    },
    rbubblesmall/.style={
        ellipse, draw=red!75!black, thick,
        fill=white, minimum width=10pt,
        minimum height=6pt, inner sep=0pt
    },
    gbubble/.style={
        ellipse, draw=green!65!black, thick,
        fill=white, minimum width=16pt,
        minimum height=9pt, inner sep=0pt
    },
    lbl/.style={font=\scriptsize}
]

\begin{scope}[xshift=-3.75cm]

    \node[font=\small\bfseries] at (-0.5,4.00)
        {(a) Large $p$};

    \node[font=\scriptsize] at (-0.5,3.62)
        {$\|Zx\|<1$};

    \node[tensor] (LA1) at (-1.8,-0.6) {$m$};
    \node[tensor] (LA2) at ( 0.6,-0.6) {$m$};

    % bottom legs = neighboring physical sites
    \draw[leg]
        (LA1.south west) -- ++(-0.6,-0.6)
        coordinate[pos=1] (Lsite0);

    \draw[leg]
        (LA1.south east) -- ++(0.6,-0.6)
        coordinate[pos=1] (Lsite1);

    \draw[leg]
        (LA2.south west) -- ++(-0.6,-0.6)
        coordinate[pos=1] (Lsite2);

    \draw[leg]
        (LA2.south east) -- ++(0.6,-0.6)
        coordinate[pos=1] (Lsite3);

    % site labels
    \node[lbl, below=1pt] at (Lsite0) {$2c$};
    \node[lbl, xshift = -10pt, below=1pt] at (Lsite1) {$2c+1$};
    \node[lbl, xshift = 10pt, below=1pt] at (Lsite2) {$2c+2$};
    \node[lbl, below=1pt] at (Lsite3) {$2c+3$};

    \node[tensor] (LS1) at (-0.6,0.6) {$m$};

    % outer left leg
    \draw[leg] (LA1.north west) -- ++(-0.6,0.6);

    % P_{2c+1} leg -- LARGE p
    \draw[leg]
        (LA1.north east)
        --
        coordinate[pos=0.50] (LpLarge)
        (LS1.south west);

    % P_{2c+2} incoming leg
    \draw[leg]
        (LA2.north west)
        --
        (LS1.south east);

    % outer right leg
    \draw[leg] (LA2.north east) -- ++(0.6,0.6);

    % large p bubble exactly on leg
    \node[rbubble] at (LpLarge) {};
    \node[
        text=red!75!black,
        font=\scriptsize\bfseries,
        xshift=17pt, yshift=-12pt
    ] at (LpLarge) {$p$ large};

    \node[tensor] (LA3) at (-1.8,1.8) {$m$};
    \node[tensor] (LA4) at ( 0.6,1.8) {$m$};

    % P_{2c+1} stalled output
    \draw[leg]
        (LS1.north west)
        --
        (LA3.south east);

    % P_{2c+2} output -- small surviving component
    \draw[leg]
        (LS1.north east)
        --
        coordinate[pos=0.50] (Lsmall)
        (LA4.south west);

    % small bubble exactly on P_{2c+2} leg
    \node[rbubblesmall, opacity=0.55] at (Lsmall) {};

    % other incoming legs to aligned gates
    \draw[leg] (LA3.south west) -- ++(-0.6,-0.6);
    \draw[leg] (LA4.south east) -- ++( 0.6,-0.6);

    % outgoing legs
    \draw[leg] (LA3.north west) -- ++(-0.6,0.6);
    \draw[leg] (LA3.north east) -- ++( 0.6,0.6);
    \draw[leg] (LA4.north west) -- ++(-0.6,0.6);
    \draw[leg] (LA4.north east) -- ++( 0.6,0.6);

    % layer labels
    \node[lbl] at (2.15,-0.6) {$\mathbb A$};
    \node[lbl, text=blue!70!black] at (2.15,0.6) {$S_0$};

\end{scope}

\begin{scope}[xshift=3.75cm]

    \node[font=\small\bfseries] at (-0.5,4.70)
        {(b) Small $p$};

    \node[font=\scriptsize] at (-0.5,4.32)
        {$\|Z^2x\|<1$};

    \node[tensor] (RA1) at (-1.8,-0.6) {$m$};
    \node[tensor] (RA2) at ( 0.6,-0.6) {$m$};

    % bottom legs
    \draw[leg]
        (RA1.south west) -- ++(-0.6,-0.6)
        coordinate[pos=1] (Rsite0);

    \draw[leg]
        (RA1.south east) -- ++(0.6,-0.6)
        coordinate[pos=1] (Rsite1);

    \draw[leg]
        (RA2.south west) -- ++(-0.6,-0.6)
        coordinate[pos=1] (Rsite2);

    \draw[leg]
        (RA2.south east) -- ++(0.6,-0.6)
        coordinate[pos=1] (Rsite3);

    % site labels
    \node[lbl, below=1pt] at (Rsite0) {$2c$};
    \node[lbl, xshift = -10pt, below=1pt] at (Rsite1) {$2c+1$};
    \node[lbl, xshift=10pt, below=1pt] at (Rsite2) {$2c+2$};
    \node[lbl, below=1pt] at (Rsite3) {$2c+3$};

    \node[tensor] (RS1) at (-0.6,0.6) {$m$};

    % outer left leg
    \draw[leg] (RA1.north west) -- ++(-0.6,0.6);

    % small p on P_{2c+1} leg
    \draw[leg]
        (RA1.north east)
        --
        coordinate[pos=0.50] (RpSmall)
        (RS1.south west);

    \draw[leg]
        (RA2.north west)
        --
        (RS1.south east);

    % outer right leg
    \draw[leg] (RA2.north east) -- ++(0.6,0.6);

    % small-p bubble centered
    \node[rbubblesmall] at (RpSmall) {};

    \node[
        text=red!75!black,
        font=\scriptsize\bfseries,
        xshift=17pt, yshift=-11pt
    ] at (RpSmall) {$p$ small};

    \node[tensor] (RA3) at (-1.8,1.8) {$m$};
    \node[tensor] (RA4) at ( 0.6,1.8) {$m$};

    % central staggered outputs
    \draw[leg]
        (RS1.north west)
        --
        (RA3.south east);

    \draw[leg]
        (RS1.north east)
        --
        (RA4.south west);

    % incoming P_{2c} leg for next A:
    % mostly-even surviving component lives here
    \draw[leg]
        (RA3.south west)
        --
        coordinate[pos=0.50] (Rgreen)
        ++(-0.6,-0.6);

    % green bubble centered exactly on the leg
    \node[gbubble] at (Rgreen) {};

    % other outer leg
    \draw[leg] (RA4.south east) -- ++(0.6,-0.6);

    \node[tensor] (RS2) at (-0.6,3.0) {$m$};

    % outer left
    \draw[leg] (RA3.north west) -- ++(-0.6,0.6);

    % Lemma 5 re-creates odd-sector weight here
    \draw[leg]
        (RA3.north east)
        --
        coordinate[pos=0.50] (Rrecreated)
        (RS2.south west);

    % other input to staggered gate
    \draw[leg]
        (RA4.north west)
        --
        (RS2.south east);

    % outer right
    \draw[leg] (RA4.north east) -- ++(0.6,0.6);

    % recreated odd weight centered on leg
    \node[rbubble] at (Rrecreated) {};

    % outgoing legs from second staggered gate
    \draw[leg] (RS2.north west) -- ++(-0.6,0.6);
    \draw[leg] (RS2.north east) -- ++( 0.6,0.6);

    % layer labels
    \node[lbl] at (2.15,-0.6) {$\mathbb A$};
    \node[lbl, text=blue!70!black] at (2.15,0.6) {$S_0$};
    \node[lbl] at (2.15,1.8) {$\mathbb A$};
    \node[lbl, text=blue!70!black] at (2.15,3.0) {$S_0$};

\end{scope}

\end{tikzpicture}
\caption{Take a vector initially in $\range(P_{2c}+P_{2c+1})$, that is, restrict to just one aligned cell. Note that $Z=S_0Q\mathbb A=S_0\mathbb A$ on a full surviving cell. After evolution by $\mathbb{A}$, call the amount of defect that lands in $P_{2c+1}$ as $p$. There are two cases now. In the first case, shown in (a), $p$ is large, wherein the staggered layer (by \autoref{lem:finite-tail-mobility}) moves at least some of the weight to $2c+2$. If $p$ is small, as in (b), a large fraction of the norm must have stayed at $2c$. Then, by \autoref{lem:finite-tail-mobility}, a substantial amount must move to $2c+1$, where we can apply the same argument as in (a). Further, existing right defect weight at $2c+1$ cannot destructively interfere with all of it, since $p$ was small in the first place. That is, the branch which does not get absorbed and stalls must contract over two periods.}
\end{figure}

Let us now write this out. Take $x \in \range(P_{2c}+P_{2c+1})$ with $\|x\|=1$ and $1\leq c<(t-1)/2$, write $(\mathbb A x)_j:=P_j\mathbb A x$, and set $\Vert (\mathbb A x)_{2c+1}\Vert^{2}=p$. Then 
\begin{equation}
    \Vert(\mathbb{A}x)_{2c}\Vert = \sqrt{1 - p}.
\end{equation}
Let us apply the sector-preserving staggered layer. Set $S_{0}\mathbb{A}x := y$, $y_j:=P_jy$, and $\kappa:=\kappa_m$ (for simplicity of notation). We have (trivially) $\|y_{2c}\|\leq\sqrt{1-p}$ and (by \autoref{lem:finite-tail-mobility}) $\|y_{2c+1}\|\leq\kappa\sqrt p$. Then 
\begin{equation}
\label{eq:normloss}
    \Vert y \Vert^{2}  = \Vert y_{2c} \Vert^2 + \Vert y_{2c+1}\Vert^{2} \leq 1-p + \kappa^2 p = 1-a_{m}p.
\end{equation}
Note that we didn't have to worry about $y_{2c+2}$, since we are only considering $Z$, which is comprised of $S_0$. This is the norm that is lost to $Z$, and it is the source of the `contraction'.

Let us get back to the analysis. If $p \geq a_{m}/5$, we are done, so we must now treat the case where $p$ is small. Set $\Vert y_{2c} \Vert =s$. From the figure, it is clear what we want --- large weight in $2c+1$ after $\mathbb{A}$. Now, the norm of the $2c+1$ component after $\mathbb{A}$ is $\Vert P_{2c+1}\mathbb{A}y\Vert$. Again, by \autoref{lem:finite-tail-mobility}, we have $\Vert P_{2c+1}\mathbb{A}y_{2c}\Vert \geq \sqrt{a_{m}} s$, and trivially, $\Vert P_{2c+1}\mathbb{A}y_{2c+1}\Vert \leq \kappa \sqrt{p}$, so
\begin{equation}
    \Vert P_{2c+1}\mathbb{A}y\Vert \geq (\sqrt{a_m}s-\kappa\sqrt{p})_+,
\end{equation}
\begin{equation}
    q_+:=\max\{q,0\}.
\end{equation}
Of this, again by the same argument as Equation \eqref{eq:normloss}, we get that an $a_{m}$ fraction of this squared norm will be lost in the staggered layer. For brevity, we do not repeat this argument. Altogether we get 
\begin{equation}
    \Vert Z^{2}x\Vert^{2} \leq 1- a_{m}(p+(\sqrt{a_{m}}s - \kappa \sqrt{p})_+^{2}).
\end{equation}
Now, we just need a uniform lower bound on this. Choose $\theta = \frac{a_m}{(1+\kappa)^2+a_m}$. Then if $p+(\sqrt{a_{m}}s - \kappa \sqrt{p})_+^{2} \geq \theta$, $\Vert Z^{2}x\Vert^{2} \leq 1- a_{m}\theta \leq 1- a_{m}^{2}/5$, since the denominator is at most $5$. If $p+(\sqrt{a_{m}}s - \kappa \sqrt{p})_+^{2} < \theta$, one can check with some algebra that 
\begin{equation}
    \Vert y\Vert^{2} = \Vert y_{2c}\Vert^2 + \Vert y_{2c+1}\Vert^2 < \frac{\theta(1+\kappa)^2}{a_m} + \kappa^2 \theta = 1-a_m\theta,
\end{equation}
which by our choice of $\theta$ and the contractivity of $Z$ yields the result on a full surviving cell. The other gates, including the unitary $A_L$, preserve the relevant sectors and do not change the bounds.

For the singleton sector $\operatorname{range}P_1$ when $t\geq3$, $\mathbb A$ preserves the sector and its norm. Lemma~\ref{lem:finite-tail-mobility} applied to the staggered gate $u^{(m)}_{1,2}$ then gives
\begin{equation}
    \|ZP_1\|\leq\kappa_m,
\end{equation}
\begin{equation}
    \|Z^2P_1\|^2\leq\kappa_m^4
    =(1-a_m)^2\leq1-a_m^2/5.
\end{equation}
In the last surviving sector $\operatorname{range}P_{t-1}$, $Q\mathbb A$ has norm at most $\kappa_m$, since the forward component is absorbed. Thus $\|Z^2|_{\operatorname{range}P_{t-1}}\|^2\leq(1-a_m)^2\leq1-a_m^2/5$. Since $Z$ preserves the full surviving cells and these two endpoint sectors, taking their maximum completes the proof. For $t=1$, $Q=0$ and the claim is vacuous.

\end{proof}

Now, what do we get from this? If we take an arbitrary surviving defect $w$, and neither of its first two steps carried much unsigned current, then both steps must be close to the stalled evolution. But the stalled part contracts in two steps! Thus, either some norm has been absorbed or the unsigned current must have been appreciable. 

\begin{proposition}[Mobility or absorption]
\label{prop:unsigned-current}
    Retain the setting from above, and let $u$ be $\epsilon(m)$-LPM. Then, for every $w\in\range(Q)$,
    \begin{equation}
        \langle w,|J|w\rangle
        +\langle\mathbb K_t w,|J|\mathbb K_t w\rangle +\|P\mathbb A w\|^2
        +\|P\mathbb A\mathbb K_t w\|^2
        \geq c(m)\|w\|^2,
    \label{eq:mobility-or-absorption}
    \end{equation}
    where $c(m):=\frac{a_m^4}{2000}$. Under \autoref{conj:lpm}, $c(m)^{-1}\leq\mathrm{poly}(m)$.
\end{proposition}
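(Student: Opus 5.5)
We argue by decomposing each step of $\mathbb K_t$ into its stalled part and a remainder, and then apply Lemma~\ref{lem:stalled-two-step} to the stalled part over two steps. For $w\in\range(Q)$ we write
\begin{equation}
    \mathbb K_t w = Q\mathbb S\mathbb A w = Zw + TQ\mathbb A w,
\end{equation}
which uses $[\mathbb S,Q]=0$ and $Qw=w$. Set $e_1:=TQ\mathbb A w$, and let $w_1:=\mathbb K_t w\in\range(Q)$. Similarly, $\mathbb K_t w_1 = Zw_1 + e_2$ with $e_2:=TQ\mathbb A w_1$. Lemma~\ref{lem:current-sector-transfer} gives
\begin{equation}
    \|e_1\|^2\le\langle w,|J|w\rangle,\qquad \|e_2\|^2\le\langle w_1,|J|w_1\rangle .
\end{equation}
Iterating the decomposition, we obtain
\begin{equation}
    \mathbb K_t^2 w = Z^2 w + Z e_1 + e_2 .
\end{equation}
Since $\|Z\|\le 1$ and $Z^2$ is a contraction on $\range(Q)$ with $\|Z^2\|^2\le 1-\delta_m$, the triangle inequality yields
\begin{equation}
    \|\mathbb K_t^2 w\|\le \sqrt{1-\delta_m}\,\|w\| + \|e_1\|+\|e_2\|.
\end{equation}
Here we need $e_1\in\range(Q)$, or at least that $Z$ acts contractively on it; this holds because $Z=S_0Q\mathbb A$ is a product of contractions on all of $\mathcal K$.

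Next we account for absorption. Each step of $\mathbb K_t$ loses exactly the absorbed weight:
\begin{equation}
    \|w\|^2-\|w_1\|^2=\|P\mathbb A w\|^2,\qquad \|w_1\|^2-\|\mathbb K_t w_1\|^2=\|P\mathbb A w_1\|^2 .
\end{equation}
The first identity uses the displayed identity $\|v\|^2-\|\mathbb K_tv\|^2=\|P_t\mathbb Av\|^2$ together with the fact that $\mathbb S$ is unitary and commutes with $Q$. Writing $\alpha$ for the total absorbed weight $\|P\mathbb A w\|^2+\|P\mathbb A\mathbb K_t w\|^2$, we get $\|\mathbb K_t^2w\|^2=\|w\|^2-\alpha$.

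It remains to compare the two bounds, and this is where the constant is set. Normalize $\|w\|=1$ and let $\gamma:=\langle w,|J|w\rangle+\langle w_1,|J|w_1\rangle$, so that $\|e_1\|+\|e_2\|\le\sqrt{2\gamma}$. Combining the previous two paragraphs,
\begin{equation}
    \sqrt{1-\alpha}\le\sqrt{1-\delta_m}+\sqrt{2\gamma}.
\end{equation}
We split into cases. If $\alpha\ge\delta_m/4$ we are done, since $\delta_m/4\ge c(m)$. Otherwise $\sqrt{1-\alpha}-\sqrt{1-\delta_m}\ge(\delta_m-\alpha)/2\ge 3\delta_m/8$, which forces $\gamma\ge 9\delta_m^2/128$. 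Substituting $\delta_m=a_m^2/5$ gives $\gamma\ge 9a_m^4/3200\ge a_m^4/2000$, and in either case $\gamma+\alpha\ge c(m)$. The only delicate point is this bookkeeping: the loss is controlled in norms, whereas the conclusion must be stated in squared norms. The price is the squaring of $\delta_m$, which is why $c(m)$ scales as $a_m^4$. Finally, under Conjecture~\ref{conj:lpm} and $\Delta_m^{-1}\le\mathrm{poly}(m)$ we have $a_m^{-1}\le\mathrm{poly}(m)$, and therefore $c(m)^{-1}\le\mathrm{poly}(m)$.
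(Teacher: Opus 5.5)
Your proof is correct and follows the paper's argument. You compare $\mathbb K_t^2$ with the stalled evolution $Z^2$, bound the difference by the mobile components controlled through Lemma~\ref{lem:current-sector-transfer}, and split on whether the absorbed weight exceeds a fixed fraction of $\delta_m$. The only difference is that you telescope as $\mathbb K_t^2-Z^2=ZB+B\mathbb K_t$ (with $B=TQ\mathbb A$), which gives the error $\|Bw\|+\|B\mathbb K_t w\|$ instead of the paper's $2\|Bw\|+\|B\mathbb K_t w\|$ from $\mathbb K_tB+BZ$, and so yields a slightly better constant than $a_m^4/2000$.
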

\begin{proof}

The proof idea is straightforward. Compare the process $\mathbb{K}_{t}^{2}$ with $Z^{2}$, where we recall that $Z$ is $\mathbb{K}_{t}$ excluding the portion which moves the right-edge of the defect it acts on. Now if there is a lot of mobility, then we expect the current terms to be large in Equation \eqref{eq:mobility-or-absorption}. If there is little mobility, then $\mathbb{K}_{t}^{2}$ and $Z^{2}$ should be very similar by construction. But we know $Z^{2}$ contracts, so then $\mathbb{K}_{t}$ must as well, and the only way for $\mathbb{K}_t$ to contract is via absorption! Hence in this case the absorption terms in Equation \eqref{eq:mobility-or-absorption} will be large. Thus, we will be able to lower bound the left-hand side.

All that remains is to do the mathematics. Normalize $\|w\|=1$, and let us define $B := TQ\mathbb A$, $x_1:=\|B w\|$, $x_2:=\|B \mathbb K_t w\|$, $d_0 :=\|P\mathbb A w\|^2$, $d_1:=\|P\mathbb A\mathbb K_t w\|^2$. Each of these quantities has a physical interpretation: $x_{1}$ gives the norm of the mobile component of $w$, $x_{2}$ the same after one application of $\mathbb{K}_{t}$, $d_0$ gives the squared norm absorbed in the first step from $w$ (there could be an $\mathbb{S}$ added in, but this commutes with $P$ anyway), and $d_1$ the squared norm absorbed in the second step. $B$ is simply $\mathbb{K}_t-Z$ on $\range(Q)$ (see the discussion above \autoref{lem:current-sector-transfer}.) Now 
\begin{equation}
    \mathbb{K}_t^{2} - Z^2 = \mathbb{K}_t(\mathbb{K}_t- Z) + (\mathbb{K}_t-Z)Z = \mathbb{K}_tB +BZ
\end{equation}
 
and  $\Vert K_{t}Bw \Vert \leq \Vert Bw \Vert =x_{1}$, while $\Vert BZw \Vert \leq \Vert B^{2}w\Vert + \Vert B\mathbb{K}_{t}w\Vert \leq x_1 +x_2$ by the triangle inequality. Altogether,
\begin{equation}
        \|\mathbb K_t^2w-Z^2w\|\leq2x_1+x_2.
\end{equation}
Unitarity and Equation~\eqref{eq:finite-layer-commutations} give $\|\mathbb K_t^2w\|^2=1-d_0-d_1$. This just expresses that the only way for $K_{t}$ to lose norm is via absorption. Now, if $d_0+d_1\geq\delta_m/2$, absorption alone proves the claim, since these are two of the terms on the left-hand side.

Otherwise, \autoref{lem:stalled-two-step} and the reverse triangle inequality (since $\Vert \mathbb{K}_{t}^2 w\Vert^{2} \geq 1-\delta_{m}/2$, and $\Vert Z^2 w\Vert^{2} \leq 1-\delta_{m}$) give
    \begin{equation}
        2x_1+x_2
        \geq\sqrt{1-\delta_m/2}-\sqrt{1-\delta_m}
        \geq\frac{\delta_m}{4}.
    \end{equation}

Then by Cauchy--Schwarz, $x_1^2+x_2^2\geq\delta_m^2/80=c(m)$. But \autoref{lem:current-sector-transfer} is exactly built to provide bounds on $x_1$ and $x_2$, and so applying it to $w$ and $\mathbb K_t w$ completes our proof.
\end{proof}

\subsection{First-return recurrence bounds negative current}
\label{sec:negative-current}

We now use our second conjecture to bound the flow of negative current in our setup. We recall quickly that the absorbing process is generated by $\mathbb K_t$, and that its backward-current subspace is $\mathcal U_m$. We first show the following identity.

\begin{lemma}[Renewal identity for the absorbing process]
\label{lem:finite-renewal}
    For every $v\in\mathcal K$ and $|z|<1$,
    \begin{equation}
        \sum_{n\geq0}z^n
       \Pi_{\mathcal U_m}\mathbb K_t^n v
           =(\mathbb I_{\mathcal U_m}-\widehat a_m(z))^{-1}
             \widehat A_v(z).
        \label{eq:finite-renewal}
    \end{equation}
\end{lemma}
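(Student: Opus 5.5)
The plan is to prove the identity at the level of Taylor coefficients, via a first-arrival/last-return decomposition of the trajectory $\mathbb K_t^n v$ relative to the subspace $\mathcal U_m$. I will work with formal power series and justify convergence at the end. Since $\|\mathbb K_t\|\le1$ and $\|C\mathbb K_t\|\le1$, every series in sight has coefficients of norm at most $\|v\|$ (resp.\ one), so all converge absolutely for $|z|<1$.

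Write $\Pi:=\Pi_{\mathcal U_m}$ and $C=\mathbb I_{\mathcal K}-\Pi$. The first step is to insert $\mathbb I=\Pi+C$ between every pair of factors in $\mathbb K_t^n$, and to group the terms by the last time $j\le n$ at which the path passes through $\Pi$. This is the standard last-exit decomposition, written as an operator identity:
\begin{equation}
    \mathbb K_t^n=C(\mathbb K_tC)^{n}\,\,+\,\,\sum_{j=0}^{n} (C\mathbb K_t)^{n-j}\,\Pi\,\mathbb K_t^{j}.
\end{equation}
I will verify it by induction on $n$, using $\mathbb K_t^{n+1}=\mathbb K_t^n(\Pi+C)\mathbb K_t$.

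I would rather use the \emph{first}-arrival form, since the renewal structure is cleanest there. Decompose $v=\Pi v+Cv$. Split the $Cv$ part according to the first time $s\ge1$ it enters $\mathcal U_m$: that contribution is $\Pi\mathbb K_t(C\mathbb K_t)^{s-1}Cv$, which is exactly the coefficient of $z^s$ in $\widehat A_v(z)$, while the $z^0$ coefficient is $\Pi v$.

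Next I will show that the projected evolution of a vector $u\in\mathcal U_m$ obeys the renewal structure. Setting $\mu_n:=\Pi\mathbb K_t^n|_{\mathcal U_m}$, first-return splitting gives $\mu_n=\sum_{r=1}^n a_{m,r}\mu_{n-r}$ for $n\ge1$ and $\mu_0=\mathbb I_{\mathcal U_m}$. To get this, write $\mathbb K_t^n\Pi=\mathbb K_t^{n-r}\,\Pi\mathbb K_t(C\mathbb K_t)^{r-1}\Pi+\ldots$, decomposing according to the first return time $r$ and keeping the residual term that never returns; that residual is killed by the outer $\Pi$ only once $r$ reaches $n$. Here I note that $\widetilde{\mathbb K}_t^{\,r-1}\Pi=(C\mathbb K_t)^{r-1}\Pi$ matches the definition of $a_{m,r}$. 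In generating functions this reads $(\mathbb I-\widehat a_m(z))\widehat\mu(z)=\mathbb I$. The same argument applied to a general $v$, splitting at the first arrival time $s$, gives $\Pi\mathbb K_t^n v=\sum_{s=0}^n\mu_{n-s}A_s$, where $A_s$ is the $z^s$ coefficient of $\widehat A_v$. This holds because after the first arrival the vector lies in $\mathcal U_m$ and evolves freely under $\mathbb K_t^{n-s}$. Multiplying by $z^n$ and summing turns this convolution into $\widehat\mu(z)\widehat A_v(z)$.

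The remaining step, and the one I expect to be the main technical obstacle, is invertibility of $\mathbb I_{\mathcal U_m}-\widehat a_m(z)$ for $|z|<1$, so that $\widehat\mu=(\mathbb I-\widehat a_m)^{-1}$. Formally, $(\mathbb I-\widehat a_m)\widehat\mu=\mathbb I$ holds as power series. For small $|z|$ one has $\|\widehat a_m(z)\|<1$, so the identity holds analytically there. It extends to all $|z|<1$ by analytic continuation, provided $\widehat\mu$ is analytic on the disc (it is, since $\|\mu_n\|\le1$) and provided invertibility holds, which is the point to check. Because $\mathcal U_m$ is finite-dimensional, a one-sided inverse is two-sided, so $\det(\mathbb I-\widehat a_m(z))\widehat\mu(z)$-type reasoning gives $\det(\mathbb I-\widehat a_m(z))\ne0$ wherever $\widehat\mu$ is analytic. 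More directly, $(\mathbb I-\widehat a_m(z))\widehat\mu(z)=\mathbb I$ holds on the whole disc by the identity theorem, which forces invertibility. Both sides of \eqref{eq:finite-renewal} are then analytic on the disc and share Taylor coefficients, which proves the lemma.
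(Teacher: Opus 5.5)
Your argument is correct, but it is built on a different decomposition from the paper's. Write $\Pi=\Pi_{\mathcal U_m}$ and $\widehat M_v(z):=\sum_{n\ge0}z^n\Pi\mathbb K_t^nv$.

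The paper splits $\Pi\mathbb K_t^n v$ according to the \emph{last} visit to $\mathcal U_m$ at a time in $\{0,\dots,n-1\}$. This gives $\Pi\mathbb K_t^n v=\Pi\mathbb K_t(C\mathbb K_t)^{n-1}Cv+\sum_{j=1}^n a_{m,j}\Pi\mathbb K_t^{n-j}v$, and hence $\widehat M_v=\widehat A_v+\widehat a_m\widehat M_v$ in one step. Invertibility is then read off from the case $v\in\mathcal U_m$: there $\widehat A_v=v$, so $\mathbb I_{\mathcal U_m}-\widehat a_m(z)$ is surjective.

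You instead split at the \emph{first} arrival, factor $\widehat M_v=\widehat\mu\,\widehat A_v$, and identify $\widehat\mu=(\mathbb I_{\mathcal U_m}-\widehat a_m)^{-1}$ through a separate renewal equation on $\mathcal U_m$. Your invertibility step is the same observation as the paper's, since $\widehat M_u=\widehat\mu\,u$ for $u\in\mathcal U_m$. The paper's route is slightly more economical: one decomposition and no auxiliary $\widehat\mu$. Yours makes explicit the ``first arrival, then return dynamics'' convolution that the paper invokes informally when interpreting Conjecture~\ref{conj:first-return-gap}. In fact, the last-exit identity you wrote down and then abandoned is exactly the paper's decomposition once you apply it to $\mathbb K_t^{n-1}$ and multiply on the left by $\Pi\mathbb K_t$.

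One correction. The operators $a_{m,r}$ and $\mu_{n-r}$ on $\mathcal U_m$ need not commute. The first-return splitting you describe produces $\Pi\mathbb K_t^{n-r}\Pi\cdot\Pi\mathbb K_t(C\mathbb K_t)^{r-1}\Pi=\mu_{n-r}a_{m,r}$. It therefore yields $\mu_n=\sum_{r=1}^n\mu_{n-r}a_{m,r}$, i.e.\ $\widehat\mu(\mathbb I-\widehat a_m)=\mathbb I$, not the order you state.

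This slip is harmless for two reasons. The order you state is also true: it is the paper's last-visit identity restricted to $\mathcal U_m$. And a one-sided inverse on the finite-dimensional $\mathcal U_m$ is two-sided anyway.

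Similarly, the analytic-continuation detour is unnecessary. All coefficient operators have norm at most one, so the Cauchy-product identity holds pointwise on the whole disc, and finite-dimensionality then gives invertibility.
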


\begin{proof}
Write $\Pi=\Pi_{\mathcal U_m}$ and
$C=\mathbb I_{\mathcal K}-\Pi$. For $n\geq1$, take $\Pi \mathbb{K}_{t}^{n}v$ and repeatedly insert $\Pi + C$ in between the various applications of $\mathbb{K}$. This yields
\begin{equation}
    \Pi\mathbb K_t^n v
    =\Pi\mathbb K_t(C\mathbb K_t)^{n-1}Cv
     +\sum_{j=1}^{n}
      a_{m,j}\Pi\mathbb K_t^{n-j}v,
\end{equation}
which can be checked by induction. Physically, the first term describes the case where the first entrance from outside of $\mathcal U_m$ happens at time step $n$, and each term in the sum describes an earlier visit to $\mathcal U_m$ at time $n-j$, followed by another return after $j$ steps. Now, let us multiply by $z^{n}$ and sum over $n$. If we write $\widehat M_v(z):= \sum_{n\geq0}z^n\Pi\mathbb K_t^n v,$ then we get 
\begin{equation}
    \widehat M_v(z)
    =\widehat A_v(z)+\widehat a_m(z)\widehat M_v(z).
\end{equation}
We do not have to worry about convergence, since $|z|<1$, and $\mathbb K_t$ is a contraction.

We only need to establish that
$\mathbb I_{\mathcal U_m}-\widehat a_m(z)$ is invertible. For this, restrict to $\mathcal{U}_{m}$ and notice that $\widehat A_v(z)=v$ for all $v \in \mathcal{U}_{m}$. Then the above equation reduces to 
\begin{equation}
    (\mathbb{I}_{\mathcal{U}_{m}} - \widehat a_m(z))\widehat M_v(z)
    =v,
\end{equation}
which means that $(\mathbb{I}_{\mathcal{U}_{m}} - \widehat a_m(z))$ must be invertible, since it is surjective and maps the finite-dimensional space $\mathcal{U}_{m}$ to itself. This completes the proof.
\end{proof}

\begin{proposition}[Accumulated negative-current bound]
\label{prop:finite-negative-current}
Assume \autoref{conj:first-return-gap}. Then for every $v\in\range(Q)$ and every integer $H\geq1$,
\begin{equation}
    \sum_{n=0}^{H-1}
      \bra{\mathbb K_t^n v}J_-\ket{\mathbb K_t^n v} \leq\beta_m(t)
      \sum_{n=0}^{H-1}
      \bra{\mathbb K_t^n v}|J|\ket{\mathbb K_t^n v}
      +b_m(t)\|v\|^2.
\label{eq:finite-backflow-bound}
\end{equation}
\end{proposition}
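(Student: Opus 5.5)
The plan is to read the conjectured inequality \eqref{eq:finite-schur-backflow} coefficientwise, using Lemma~\ref{lem:finite-renewal} to identify the analytic function on its left-hand side. The right-hand side is handled by a Neumann expansion of the resolvent.

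\textbf{Left-hand side.} Lemma~\ref{lem:finite-renewal} gives, for $|z|<1$,
\begin{equation}
    (\mathbb I_{\mathcal U_m}-\widehat a_m(z))^{-1}\widehat A_v(z)=\sum_{n\geq0}z^n\,\Pi_{\mathcal U_m}\mathbb K_t^n v .
\end{equation}
Applying $J_-^{1/2}$, the Taylor coefficients of the left-hand function are $J_-^{1/2}\Pi_{\mathcal U_m}\mathbb K_t^n v$.

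\textbf{Right-hand side.} Since $\mathbb K_t$ is a contraction, for $|z|<1$ we have $(\mathbb I_{\mathcal K}-z\mathbb K_t)^{-1}v=\sum_{n\geq0}z^n\mathbb K_t^n v$. The coefficients of the right-hand function are therefore $|J|^{1/2}\mathbb K_t^n v$.

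\textbf{Applying the truncated Hardy norm.} By the Parseval form of \eqref{eq:finite-horizon-hardy}, the truncated norms are
\begin{equation}
    \sum_{n=0}^{H-1}\|J_-^{1/2}\Pi_{\mathcal U_m}\mathbb K_t^n v\|^2
    \qquad\text{and}\qquad
    \sum_{n=0}^{H-1}\langle \mathbb K_t^n v,|J|\,\mathbb K_t^n v\rangle .
\end{equation}
The second is already the right-hand sum of \eqref{eq:finite-backflow-bound}.

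\textbf{Removing the projector.} It remains to show that $J_-^{1/2}\Pi_{\mathcal U_m}=J_-^{1/2}$. By Definition~\ref{def:backward-current-subspace}, $\Pi_{\mathcal U_m}=\mathbf 1_{(-\infty,0)}(J)$ is a spectral projector of $J$. In addition, $J_-=(|J|-J)/2$ is a function of $J$ that vanishes on the spectral subspace $[0,\infty)$. Functional calculus then gives $J_-^{1/2}=J_-^{1/2}\mathbf 1_{(-\infty,0)}(J)$. Hence
\begin{equation}
    \|J_-^{1/2}\Pi_{\mathcal U_m}\mathbb K_t^n v\|^2=\langle \mathbb K_t^n v,J_-\,\mathbb K_t^n v\rangle .
\end{equation}
Substituting both identifications into \eqref{eq:finite-schur-backflow} yields \eqref{eq:finite-backflow-bound}.

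\textbf{Main obstacle.} The step needing most care is making sure the operators match domains. $J$ and $J_-$ are defined on the full operator space through $\mathbb W$, while $\mathcal U_m$ and the vectors $\mathbb K_t^n v$ live in $\mathcal K$. The paper states that all currents and backward-current subspaces are defined on $\mathcal K$, so this reduces to checking that $J$ preserves $\mathcal K$. It does, because $\mathbb W$ and $X$ both preserve $\mathcal S$ and $\mathcal K$. With that settled, the spectral-projector argument and the interchange of the finite Hardy truncation with the power series are routine, since all the series converge absolutely for $|z|<1$.
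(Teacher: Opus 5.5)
Your proposal is correct and is essentially the paper's proof. Both arguments expand the left-hand function via Lemma~\ref{lem:finite-renewal} and the right-hand resolvent via its Neumann series, read off the truncated Hardy norms by Parseval, and drop the projector using $J_-^{1/2}\Pi_{\mathcal U_m}=J_-^{1/2}$; your functional-calculus justification of that identity, which the paper only asserts, is a welcome addition.
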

\begin{proof}
First, we use \autoref{lem:finite-renewal} to expand out $J_-^{1/2}(\mathbb I_{\mathcal U_m}-\widehat a_m(z))^{-1}\widehat A_v(z)$ as $\sum_{n\geq0}z^nJ_-^{1/2}\mathbb K_t^n v$, where we have used $J_-^{1/2}\Pi_{\mathcal U_m}=J_-^{1/2}$. Parseval's identity in Equation~\eqref{eq:finite-horizon-hardy} then gives
\begin{equation}
    \left\|J_-^{1/2}
      (\mathbb I_{\mathcal U_m}-\widehat a_m(z))^{-1}
      \widehat A_v(z)\right\|_{H^2;H}^{\,2} =\sum_{n=0}^{H-1}
      \|J_-^{1/2}\mathbb K_t^n v\|^2
      =\sum_{n=0}^{H-1}
      \bra{\mathbb K_t^n v}J_-\ket{\mathbb K_t^n v}.
\label{eq:hardy-negative-current}
\end{equation}
We can similarly expand $(\mathbb I_{\mathcal K}-z\mathbb K_t)^{-1}v$
and apply the same identity to get
\begin{equation}
    \left\||J|^{1/2}
    (\mathbb I_{\mathcal K}-z\mathbb K_t)^{-1}v
    \right\|_{H^2;H}^{\,2}
    =\sum_{n=0}^{H-1}
      \bra{\mathbb K_t^n v}|J|\ket{\mathbb K_t^n v}.
    \label{eq:hardy-unsigned-current}
\end{equation}
Combining these results immediately yields, by  Equation~\eqref{eq:finite-schur-backflow}, our result.
\end{proof}

This justifies in retrospect the weighting of the response energy with the current, which seemed a little ad hoc back then, but now is very useful in giving us the form that we need to bound $J_{-}$.

\subsection{Positive accumulated current and surviving defects}
\label{sec:positive-current}

Having placed bounds on $|J|$ and $J_-$, we now use the simple identity $J=|J|-2J_-$ to show that if not much norm of a vector is absorbed over a block, then after accounting for the initial transient, we must have positive current over that block.

\begin{proposition}
\label{prop:positive-current}
Assume \autoref{conj:lpm} and \autoref{conj:first-return-gap}, and take $v\in\range(Q)$. Let us also define $w_n :=\mathbb K_t^n v$ and $L_H=\|v\|^2-\|w_H\|^2$. Then, for every integer $H\geq2$,
\begin{equation}
    \sum_{n=0}^{H-1}\langle w_n,Jw_n\rangle
    \geq\nu_m(t)\sum_{n=0}^{H-2}\|w_n\|^2
      -(1-2\beta_m(t))L_H-2b_m(t)\|v\|^2,
\label{eq:finite-positive-current}
\end{equation}
with $\nu_m(t)=\frac{(1-2\beta_m(t))c(m)}{2}$. Further, under our previous assumptions for $\beta_{m}$ and $c(m)$, $\nu_m(t)^{-1}\leq\mathrm{poly}(m,t)$.
\end{proposition}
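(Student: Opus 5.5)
The plan is to combine the two one-sided bounds already established: the accumulated negative-current bound of \autoref{prop:finite-negative-current}, which is the form \autoref{conj:first-return-gap} takes after the renewal identity, and the mobility-or-absorption bound of \autoref{prop:unsigned-current}. They are glued together by the operator identity $J=|J|-2J_-$, which follows from \autoref{def:negative-current}. Summing $\langle w_n,Jw_n\rangle=\langle w_n,|J|w_n\rangle-2\langle w_n,J_-w_n\rangle$ over $n=0,\ldots,H-1$ and inserting \eqref{eq:finite-backflow-bound} gives
\begin{equation}
    \sum_{n=0}^{H-1}\langle w_n,Jw_n\rangle
    \geq(1-2\beta_m(t))\sum_{n=0}^{H-1}\langle w_n,|J|w_n\rangle-2b_m(t)\|v\|^2 .
\end{equation}
Here $1-2\beta_m(t)>0$. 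The task is therefore reduced to a lower bound on the accumulated unsigned current.

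For that lower bound, I note that each $w_n=\mathbb K_t^n v$ lies in $\range(Q)$, since $\mathbb K_t=Q\mathbb WQ$. Hence \autoref{prop:unsigned-current} applies to $w=w_n$ for every $n=0,\ldots,H-2$, with $\mathbb K_t w_n=w_{n+1}$:
\begin{equation}
    \langle w_n,|J|w_n\rangle+\langle w_{n+1},|J|w_{n+1}\rangle+\|P\mathbb A w_n\|^2+\|P\mathbb A w_{n+1}\|^2\geq c(m)\|w_n\|^2 .
\end{equation}
I sum these inequalities over $n=0,\ldots,H-2$. In the sum, each current term $\langle w_j,|J|w_j\rangle$ with $0\leq j\leq H-1$ appears at most twice, and likewise each absorption term $\|P\mathbb A w_j\|^2$.

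The absorption terms telescope. The identity $\|v\|^2-\|\mathbb K_t v\|^2=\|P_t\mathbb A v\|^2$, valid on $\range(Q)$, gives $\|P\mathbb A w_j\|^2=\|w_j\|^2-\|w_{j+1}\|^2$. Therefore $\sum_{j=0}^{H-1}\|P\mathbb A w_j\|^2=\|v\|^2-\|w_H\|^2=L_H$. This yields
\begin{equation}
    2\sum_{n=0}^{H-1}\langle w_n,|J|w_n\rangle+2L_H\geq c(m)\sum_{n=0}^{H-2}\|w_n\|^2,
\end{equation}
that is, $\sum_{n=0}^{H-1}\langle w_n,|J|w_n\rangle\geq\tfrac{c(m)}{2}\sum_{n=0}^{H-2}\|w_n\|^2-L_H$. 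Multiplying by $1-2\beta_m(t)$ and substituting into the display of the first paragraph gives exactly \eqref{eq:finite-positive-current} with $\nu_m(t)=(1-2\beta_m(t))c(m)/2$.

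The polynomial statement then follows directly. We have $\nu_m(t)^{-1}=2c(m)^{-1}(1-2\beta_m(t))^{-1}$. The first factor is $\mathrm{poly}(m)$ under \autoref{conj:lpm}, as recorded in \autoref{prop:unsigned-current}. The second is $\mathrm{poly}(m,t)$ by \eqref{eq:finite-backflow-polynomial}.

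The only delicate step I expect is the bookkeeping of the two-step windows. Each window covers indices $n$ and $n+1$, so the sum over $n\leq H-2$ uses current terms up to index $H-1$ and absorption terms only up to index $H-1$. This is precisely why the right-hand side involves $\sum_{n=0}^{H-2}\|w_n\|^2$ and the full loss $L_H$, and why the overcounting factor $2$ appears, matching the $1/2$ in $\nu_m(t)$. Beyond this index check, the argument is a direct combination of the two earlier propositions.
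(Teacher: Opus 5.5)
Your proposal is correct and follows the paper's proof essentially step for step. You combine $J=|J|-2J_-$ with the accumulated negative-current bound, then sum the mobility-or-absorption inequality over $n=0,\ldots,H-2$, using nonnegativity to bound the doubly counted terms, telescope the absorption terms into $L_H$, and substitute with $1-2\beta_m(t)>0$.
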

\begin{proof}
Consider an arbitrary $v\in\range(Q)$. Now use $J=|J|-2J_-$. \autoref{prop:finite-negative-current} gives immediately 
\begin{equation}
    \sum_{n=0}^{H-1}\langle w_n,Jw_n\rangle
    \geq(1-2\beta_m(t))
         \sum_{n=0}^{H-1}\langle w_n,|J|w_n\rangle
         -2b_m(t)\|v\|^2.
    \label{eq:current-from-backflow}
\end{equation}
Now $\mathbb K_t$ maps $\range(Q)$ to itself, so we can apply \autoref{prop:unsigned-current} to each $w_n$. Let us work this out in some detail. We have $\langle w_{n},|J|w_n\rangle +\langle w_{n+1},|J| w_{n+1}\rangle +\|P\mathbb A w_n\|^2 +\|P\mathbb A w_{n+1}\|^2 \geq c(m)\|w_{n}\|^2$, so summing this over $n$ from $0$ to $H-2$ gives two of every $\|P\mathbb A w_n\|^2$ and $\langle w_{n},|J|w_n\rangle$ except the first and the last, so that
\begin{equation}
    c(m)\sum_{n=0}^{H-2}\|w_n\|^2
    \leq2\sum_{n=0}^{H-1}\langle w_n,|J|w_n\rangle+2\sum_{n=0}^{H-1} \|P\mathbb A w_n\|^2.
\end{equation}

Now $\|P\mathbb A w_n\|^2$ is the squared norm lost by $w_n$ after one application of $\mathbb{K}_{t}$, so $\|P\mathbb A w_n\|^2 = \Vert w_{n}\Vert^2 - \Vert w_{n+1}\Vert^2$. This sum then telescopes, and we get 
\begin{equation}
    c(m)\sum_{n=0}^{H-2}\|w_n\|^2
    \leq2\sum_{n=0}^{H-1}\langle w_n,|J|w_n\rangle+2L_{H}.
\end{equation}
Substituting this back into Equation~\eqref{eq:current-from-backflow} then finishes our proof.
\end{proof}

\section{Contraction of the hard absorbing process}
\label{sec:hard-absorber}

In \autoref{sec:free-walk}, we related the two dynamical assumptions that we made to the flow of current in the finite chain. In particular, we established that the flow of current over a block is lower bounded by the total cumulative amount of norm that is left unabsorbed over the block, minus the amount of norm that is absorbed (with an allowance for an initial transient). This morally tells us that current is lower bounded by something related to how much defect weight persists over the block evolution. Now consider a defect that survives in $\mathbb{K}_{t}$ for a long time, so that the unabsorbed norm dominates that which is absorbed, as well as the transient allowance. Then, \autoref{prop:positive-current} says that the current in the chain over the block must also be large. 

In this section, we build on this and show that a large current over a block must also mean a large right-edge motion after absorption. This is a drift-or-hit dichotomy --- either we have that the norm contracts (hit), or we have a correspondingly large change in the unnormalized shifted position (drift). Lastly, we show that the `drift' eventually leads to `hit' as well, since the right edge position cannot increase beyond a certain point.

Our strategy is as follows.

\begin{enumerate}
    \item {\it The unabsorbed mobility over a block cannot be too large.} In \autoref{lem:finite-current-budget}, we show that the finite chain current corresponds directly to a change of position, accounting explicitly for absorption. Further, after a point, there is nowhere for the right defect weight to go. \autoref{prop:drift-or-hit} then combines the first part of \autoref{lem:finite-current-budget} with \autoref{prop:positive-current} to give a drift-or-hit bound. That is, where \autoref{prop:positive-current} showed that survival (with little absorption) implied current, we now show that survival (with the same little absorption) implies right defect edge motion. 
    
    \item {\it The absorber is hit.} Finally, we will use the second part of \autoref{lem:finite-current-budget}, the bounded right edge position, along with \autoref{prop:positive-current}, to show that the absorber is hit by the defect. This is \autoref{thm:hard-absorber-hitting}.
\end{enumerate}

\subsection{Hard absorption and the drift-or-hit bound}
\label{sec:drift-or-hit}

\begin{lemma}[Finite-chain current budget]
\label{lem:finite-current-budget}
Let $w_n=\mathbb K_t^nv$ for $v\in\range(Q)$, and recall our notation $d_n:=\|w_n\|^2-\|w_{n+1}\|^2.$ Then
\begin{equation}
    \sum_{n=0}^{H-1}\langle w_n,Jw_n\rangle
    =\langle w_H,Xw_H\rangle-\langle v,Xv\rangle
      +X_{\max}(\|v\|^2-\|w_H\|^2).
\label{eq:finite-current-budget}
\end{equation}
This further implies that $\sum_{n=0}^{H-1}\langle w_n,Jw_n\rangle \leq X_{\max}\|v\|^2,
    \  X_{\max}=\frac{t-1}{2}$.
\end{lemma}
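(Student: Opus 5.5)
The plan is a direct telescoping computation. I will rewrite each one-step current expectation as a difference of position expectations, and account separately for the piece of the walk that is absorbed at the right end.

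Fix $n$ and note that $w_n\in\range(Q)\subseteq\mathcal K$. Since $\mathbb W$ commutes with $\Pi_{\mathcal S}$, it preserves $\mathcal K$. By \autoref{def:m-step-current},
\begin{equation}
\langle w_n,Jw_n\rangle=\langle \mathbb W w_n,X\mathbb W w_n\rangle-\langle w_n,Xw_n\rangle .
\end{equation}
Split $\mathbb W w_n=Q\mathbb W w_n+P\mathbb W w_n$. The first piece is $Q\mathbb W Q w_n=\mathbb K_t w_n=w_{n+1}$. The key structural point is that $X=\sum_y\lfloor y/2\rfloor P_y$ is block diagonal in the sectors $\range(P_y)$, so $X$ commutes with $P=P_t$ and $Q$. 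Hence the cross terms vanish and
\begin{equation}
\langle \mathbb W w_n,X\mathbb W w_n\rangle=\langle w_{n+1},Xw_{n+1}\rangle+\langle P\mathbb W w_n,XP\mathbb W w_n\rangle .
\end{equation}
On $\range(P_t)$, $X$ acts as $\lfloor t/2\rfloor=(t-1)/2=X_{\max}$, because $t$ is odd. So the last term equals $X_{\max}\|P\mathbb W w_n\|^2$.

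Next, by \eqref{eq:finite-layer-commutations} and the unitarity of $\mathbb S$, we have $\|P\mathbb W w_n\|=\|\mathbb S P\mathbb A w_n\|=\|P\mathbb A w_n\|$. Since $\mathbb W$ is unitary, $\|w_n\|^2=\|Q\mathbb W w_n\|^2+\|P\mathbb W w_n\|^2$. Together these give $\|P\mathbb A w_n\|^2=\|w_n\|^2-\|w_{n+1}\|^2=d_n$, which is the absorption identity already noted after \eqref{eq:finite-hard-walk}. Thus
\begin{equation}
\langle w_n,Jw_n\rangle=\langle w_{n+1},Xw_{n+1}\rangle-\langle w_n,Xw_n\rangle+X_{\max}d_n .
\end{equation}
Summing over $n=0,\ldots,H-1$, both the position terms and the $d_n$ telescope, which gives \eqref{eq:finite-current-budget}.

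For the inequality, I use $0\le X\le X_{\max}\mathbb I_{\mathcal K}$, which holds since $X$ is a nonnegative combination of orthogonal projectors with coefficients at most $X_{\max}$. This gives $\langle w_H,Xw_H\rangle\le X_{\max}\|w_H\|^2$ and $\langle v,Xv\rangle\ge0$. Substituting these into \eqref{eq:finite-current-budget}, the $\|w_H\|^2$ terms cancel and leave the bound $X_{\max}\|v\|^2$.

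I do not expect a real obstacle here. The only step needing care is justifying that the cross terms vanish and that the absorbed piece sits exactly at position $X_{\max}$. Both follow from $[X,P_t]=0$ and the parity convention for $t$.
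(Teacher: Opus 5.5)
Your proposal is correct and follows the paper's proof essentially step for step. Like the paper, you split $\mathbb W w_n$ into the surviving part $w_{n+1}=Q\mathbb W Q w_n$ and the absorbed part $P_t\mathbb W w_n$, use $[X,P_t]=[X,Q]=0$ to kill the cross terms and $XP_t=X_{\max}P_t$ together with $d_n=\|P_t\mathbb W w_n\|^2$ (from Pythagoras and unitarity of $\mathbb W$), then telescope and bound using $0\le X\le X_{\max}\mathbb I_{\mathcal K}$. Your detour through $\|P\mathbb A w_n\|$ and the explicit remark that $\mathbb W$ preserves $\mathcal K$ are harmless extras.
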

\begin{proof}
Let us first decompose $\mathbb{W}\mathbb{K}^{i}_{t}v$, using $Q\mathbb{K}^{i}_{t}v = \mathbb{K}^{i}_{t}v$, and $\mathbb{W}\mathbb{K}^{i}_{t}v = Q\mathbb{W}\mathbb{K}^{i}_{t}v+P\mathbb{W}\mathbb{K}^{i}_{t}v$ to get
\begin{equation}
    \mathbb W\mathbb K_t^iv
    =\mathbb K_t^{i+1}v+P_t\mathbb W\mathbb K_t^iv,
\end{equation}
where we have used that $Q\mathbb{W}Q = \mathbb{K}_{t}$. By Pythagoras' theorem and the unitarity of $\mathbb W$, we have
\begin{equation}
    \|\mathbb K_t^iv\|^2
    =\|\mathbb K_t^{i+1}v\|^2
      +\|P_t\mathbb W\mathbb K_t^iv\|^2,
\end{equation}
so that $d_{n} = \Vert w_{n} \Vert^{2} - \Vert w_{n+1}\Vert^{2} = \Vert w_{n}\Vert^{2} - \Vert\mathbb{K}_{t}w_{n}\Vert^{2}$ is equal to $ \Vert P\mathbb{W}w_{n}\Vert^{2}$. Let us now consider the evolution $\mathbb{W}w_{n}$. Since $w_{n}$ belongs to $\range(Q)$, we can expand $\mathbb{W}w_{n} = \mathbb{W}Qw_{n} = Q\mathbb{W}Qw_{n} + P\mathbb{W}Qw_{n}$.  Identifying $\mathbb{K}_{t} = Q\mathbb{W}Q$, we have $\mathbb{W}w_{n} = w_{n+1} + P\mathbb{W}w_{n}$. With this decomposition, the current 
\begin{equation}
    \langle w_{n}, Jw_{n} \rangle =  \langle w_{n}, (\mathbb{W}^{\dag}X\mathbb{W} - X)w_{n} \rangle = \langle \mathbb{W}w_{n}, X \mathbb{W}w_{n}\rangle - \langle w_{n}, Xw_{n}\rangle
\end{equation}
can be written as 
\begin{equation}
    \langle w_{n}, Jw_{n} \rangle = \langle w_{n+1}, Xw_{n+1}\rangle + \langle P\mathbb{W}w_{n}, XP\mathbb{W}w_{n}\rangle - \langle w_n,Xw_n\rangle.
\end{equation}
Here, the cross terms vanish, since $X$ commutes with both $P$ and $Q$, and the two summands are orthogonal. Since $XP_t=X_{\max}P_t$, and recalling that $d_{n} = \Vert w_{n} \Vert^{2} - \Vert \mathbb{K}_{t} w_{n} \Vert^{2}$, we further have 
\begin{equation}
    \langle w_n,Jw_n\rangle
    =\langle w_{n+1},Xw_{n+1}\rangle
     -\langle w_n,Xw_n\rangle+X_{\max}d_n.
\end{equation}
The sum over $n$ then telescopes, leading exactly to Equation \eqref{eq:finite-current-budget}. To see the upper bound, notice that $0\leq X\leq X_{\max}\mathbb I_{\mathcal K}$, so 
\begin{equation}
    \sum_{n=0}^{H-1}\langle w_n,Jw_n\rangle \leq X_{\max}\Vert w_{H}\Vert^{2} +X_{\max}(\|v\|^2-\|w_H\|^2) = X_{\max}\|v\|^2.
\end{equation}
\end{proof}

Recall $X^{\mathrm{end}}=X-X_{\max}\mathbb I_{\mathcal K}$. Equation~\eqref{eq:finite-current-budget} identifies the accumulated current exactly with the change in this shifted position. Combining it with \autoref{prop:positive-current} gives the drift-or-hit bound in our finite setting.

\begin{proposition}
\label{prop:drift-or-hit}
Assume \autoref{conj:lpm} and \autoref{conj:first-return-gap} and retain the setting from the current Section. Then for $v\in\range(Q)$ and $H \geq 2$,
\begin{equation}
    \langle\mathbb K_t^Hv,X^{\mathrm{end}}\mathbb K_t^Hv\rangle
      -\langle v,X^{\mathrm{end}}v\rangle
\geq\nu_m(t)\sum_{n=0}^{H-2}\|\mathbb K_t^nv\|^2
      -2b_m(t)\|v\|^2
    -(1-2\beta_m(t))
             \bigl(\|v\|^2-\|\mathbb K_t^Hv\|^2\bigr).
\label{eq:finite-drift-or-hit}
\end{equation}
\end{proposition}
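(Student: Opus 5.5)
The plan is to combine the exact current budget of \autoref{lem:finite-current-budget} with the lower bound of \autoref{prop:positive-current}. Both results concern the same accumulated current $\sum_{n=0}^{H-1}\langle w_n,Jw_n\rangle$ with $w_n=\mathbb K_t^n v$. So the statement should follow by rewriting the left-hand side of Equation~\eqref{eq:finite-current-budget} in terms of the shifted position $X^{\mathrm{end}}$.

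The first step is to express the budget through $X^{\mathrm{end}}=X-X_{\max}\mathbb I_{\mathcal K}$. Since every $w_n$ lies in $\range(Q)\subset\mathcal K$, we have $\langle w,X^{\mathrm{end}}w\rangle=\langle w,Xw\rangle-X_{\max}\|w\|^2$. Therefore
\begin{equation}
\langle w_H,X^{\mathrm{end}}w_H\rangle-\langle v,X^{\mathrm{end}}v\rangle
=\langle w_H,Xw_H\rangle-\langle v,Xv\rangle+X_{\max}\bigl(\|v\|^2-\|w_H\|^2\bigr).
\end{equation}
The right-hand side is exactly the right-hand side of Equation~\eqref{eq:finite-current-budget}. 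The change in $X^{\mathrm{end}}$ therefore equals the accumulated current $\sum_{n=0}^{H-1}\langle w_n,Jw_n\rangle$ exactly.

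The second step is to apply \autoref{prop:positive-current} to this accumulated current for $H\geq2$. It bounds the current below by
\begin{equation}
\nu_m(t)\sum_{n=0}^{H-2}\|w_n\|^2-(1-2\beta_m(t))L_H-2b_m(t)\|v\|^2,
\end{equation}
where $L_H=\|v\|^2-\|\mathbb K_t^Hv\|^2$. Substituting this into the identity from the first step gives Equation~\eqref{eq:finite-drift-or-hit} term by term.

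No step is a genuine obstacle. The only points to check are that $X^{\mathrm{end}}$ is used on $\mathcal K$, where $\mathbb I_{\mathcal K}$ acts trivially on the $w_n$, and that the sign conventions match: the absorption term enters with coefficient $-(1-2\beta_m(t))$, which is nonpositive because $\beta_m(t)<1/2$. Both are immediate from the definitions.
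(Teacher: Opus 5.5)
Your proposal is correct and is essentially the paper's own argument: use $w_n\in\range(Q)\subset\mathcal K$ to rewrite the right-hand side of Equation~\eqref{eq:finite-current-budget} as the change in $\langle\cdot,X^{\mathrm{end}}\cdot\rangle$, then substitute the lower bound of \autoref{prop:positive-current}. Your remark that $-(1-2\beta_m(t))$ is nonpositive is not needed for the inequality, but it does no harm.
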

\begin{proof}
    Notice that $\langle w_H,Xw_H\rangle-\langle v,Xv\rangle
          +X_{\max}(\|v\|^2-\|w_H\|^2) = \langle w_H,X^{\mathrm{end}}w_H\rangle-\langle v,X^{\mathrm{end}}v\rangle$.
    Then, the proposition directly follows from substituting Equation~\eqref{eq:finite-current-budget} into Equation~\eqref{eq:finite-positive-current}.
\end{proof}

\subsection{Polynomial hitting time for the hard absorber}
\label{sec:hard-hitting-time}

We now combine the previous drift-or-hit bound with the fact that the defect position is bounded. Finally, we will have a hitting time for our hard absorbing process!

\begin{theorem}
\label{thm:hard-absorber-hitting}
Assume \autoref{conj:lpm} and \autoref{conj:first-return-gap}. Then, for every $v \in \range(Q)$, 
\begin{equation}
    \sum_{n=0}^{\infty}\|\mathbb K_t^nv\|^2
    \leq S_m(t)\|v\|^2,
    \label{eq:finite-total-survival}
\end{equation}
where $S_m(t):=\frac{2C_m(t)+2}{c(m)}$, and $C_m(t):=\frac{X_{\max}+2b_m(t)}{1-2\beta_m(t)}$. 

Consequently, for every integer $N\geq1$, we have
\begin{equation}
    \|\mathbb K_t^N\|_{\mathcal K\to\mathcal K}^2
    \leq\frac{S_m(t)}{N+1}.
    \label{eq:finite-hard-hitting-time}
\end{equation}
In particular,  choosing $N=\lceil4S_m(t)\rceil$ gives $\|\mathbb K_t^N\|\leq1/2$, with $N\leq\mathrm{poly}(m,t)$ under our previous assumptions.
\end{theorem}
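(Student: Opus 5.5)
The plan is to use the drift-or-hit bound, \autoref{prop:drift-or-hit}, together with the boundedness of the shifted position $X^{\mathrm{end}}$. Normalize $\|v\|=1$, write $w_n=\mathbb K_t^nv$, and fix $H\geq2$. Since $0\leq X\leq X_{\max}\mathbb I_{\mathcal K}$, we have $-X_{\max}\mathbb I_{\mathcal K}\leq X^{\mathrm{end}}\leq 0$. The left-hand side of Equation~\eqref{eq:finite-drift-or-hit} is therefore at most $0-(-X_{\max})=X_{\max}$. On the right-hand side, the absorbed mass $\|v\|^2-\|w_H\|^2$ is at most $1$, so its contribution is at least $-(1-2\beta_m(t))$. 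Rearranging gives
\begin{equation}
    \nu_m(t)\sum_{n=0}^{H-2}\|w_n\|^2\leq X_{\max}+2b_m(t)+(1-2\beta_m(t)).
\end{equation}
Dividing by $\nu_m(t)=(1-2\beta_m(t))c(m)/2$ yields
\begin{equation}
    \sum_{n=0}^{H-2}\|w_n\|^2\leq\frac{2(C_m(t)+1)}{c(m)}=S_m(t),
\end{equation}
uniformly in $H$. Letting $H\to\infty$ gives Equation~\eqref{eq:finite-total-survival}, because the partial sums are monotone and bounded. The only bookkeeping to check is that the transient term $2b_m(t)$ and the absorption term both enter with the correct sign, and that $1-2\beta_m(t)>0$, which holds since $\beta_m(t)<1/2$ by assumption.

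For the operator-norm bound, I would use that $\mathbb K_t$ is a contraction, so $n\mapsto\|w_n\|$ is nonincreasing. Then
\begin{equation}
    (N+1)\|w_N\|^2\leq\sum_{n=0}^{N}\|w_n\|^2\leq S_m(t).
\end{equation}
This holds for every $v\in\range(Q)$. The statement is phrased on $\mathcal K$, so I need to handle $v\in\range(P)$ as well. Here $\mathbb K_t=Q\mathbb WQ$ annihilates $\range(P)$, so $\mathbb K_t^Nv=\mathbb K_t^NQv$. Hence the norm over $\mathcal K$ is governed by the norm on $\range(Q)$, and Equation~\eqref{eq:finite-hard-hitting-time} follows.

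Choosing $N=\lceil4S_m(t)\rceil$ gives $\|\mathbb K_t^N\|^2\leq S_m(t)/(4S_m(t)+1)<1/4$, so $\|\mathbb K_t^N\|\leq1/2$. For polynomiality, $X_{\max}=(t-1)/2$, $b_m(t)\leq\mathrm{poly}(m,t)$, and $(1-2\beta_m(t))^{-1}\leq\mathrm{poly}(m,t)$ by \autoref{conj:first-return-gap}. Moreover, $c(m)^{-1}\leq\mathrm{poly}(m)$ by \autoref{prop:unsigned-current} under \autoref{conj:lpm}, and these combine to give $S_m(t)\leq\mathrm{poly}(m,t)$.

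The only genuinely delicate step is the first one: correctly bounding the sign-indefinite endpoint terms in Equation~\eqref{eq:finite-drift-or-hit}. I expect this to go through directly from $-X_{\max}\leq X^{\mathrm{end}}\leq0$, so the proof should be short.
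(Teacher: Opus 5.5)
Your proposal is correct and follows essentially the same route as the paper. You bound the endpoint terms of Equation~\eqref{eq:finite-drift-or-hit} by $X_{\max}\|v\|^2$; the paper does this by citing the upper bound in \autoref{lem:finite-current-budget}, which is the same computation as your $-X_{\max}\mathbb I_{\mathcal K}\leq X^{\mathrm{end}}\leq 0$. You then bound the absorbed mass by $\|v\|^2$, divide by $\nu_m(t)$, let $H\to\infty$, and pass to the operator norm on $\mathcal K$ via monotonicity of $\|\mathbb K_t^n v\|$ and $\mathbb K_t^N v=\mathbb K_t^N Qv$ with $\|Qv\|\leq\|v\|$, exactly as the paper does.
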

\begin{proof}
Using the inequality in \autoref{lem:finite-current-budget}, we can bound the left-hand side of Equation~\eqref{eq:finite-drift-or-hit} by $X_{\max}\|v\|^2$. Further, $\|v\|^2-\|\mathbb K_t^Hv\|^2\leq\|v\|^2$. Therefore,
\begin{equation}
    \nu_m(t)\sum_{n=0}^{H-2}\|\mathbb K_t^nv\|^2
    \leq\bigl[X_{\max}+2b_m(t)+1-2\beta_m(t)\bigr]\|v\|^2.
\end{equation}
Substituting in the form of $\nu_m(t)$ from \autoref{prop:positive-current} then yields
\begin{equation}
    \sum_{n=0}^{H-2}\|\mathbb K_t^nv\|^2
    \leq\frac{2C_m(t)+2}{c(m)}\|v\|^2,
\end{equation}
and then the limit $H\to\infty$ proves Equation~\eqref{eq:finite-total-survival}. Since $\mathbb K_t$ is a contraction, $\Vert K_{t}^{i}v\Vert \geq \Vert K_{t}^{i+1}v\Vert$ for all $i$, and hence
\begin{equation}
    (N+1)\|\mathbb K_t^Nv\|^2
    \leq\sum_{n=0}^N\|\mathbb K_t^nv\|^2
    \leq S_m(t)\|v\|^2.
\end{equation}
This yields Equation \eqref{eq:finite-hard-hitting-time} on $\range(Q)$. For arbitrary $v\in\mathcal K$, use $\mathbb K_t^Nv=\mathbb K_t^NQv$ for $N\geq1$ and $\|Qv\|\leq\|v\|$.
\end{proof}

\section{Right defect dynamics of the transfer matrix, $\mathbb T^{(m)}$}
\label{sec:transfer-matrix}

Having shown that defects reach the hard absorber within $\mathrm{poly}(m,t)$ time, we now transition back to the physical transfer matrix $\mathbb T^{(m)}$. The physical left boundary has already been included in our analysis. It remains to compare the hard absorber to the physical right boundary $A_R$.

\subsection{The physical left boundary}
\label{sec:left-boundary}

Recall that solvability makes $A_L$ unitary and ensures that it fixes every local permutation $P_\pi$. Thus it preserves the orthogonal defect sectors and their norms, as used in \autoref{lem:sector-locality}. This is also sufficient for the singleton-sector case in \autoref{lem:stalled-two-step}. \autoref{conj:first-return-gap} concerns the finite process containing this same $A_L$, including repeated interactions with it. 

With this in mind, we record the consequence of \autoref{thm:hard-absorber-hitting} that we will use below.

\begin{corollary}
\label{cor:finite-hard-absorber}
Retain the preceding assumptions and let
\begin{equation}
    N(m,t):=\left\lceil4S_m(t)\right\rceil.
    \label{eq:physical-hard-contraction-time}
\end{equation}
Then, for every physical defect vector $v\in\mathcal K$,
\begin{equation}
    \|\mathbb K_t^{N(m,t)}v\|\leq\frac12\|v\|,
\end{equation}
\begin{equation}
    N(m,t)\leq\mathrm{poly}(m,t).
\end{equation}
\end{corollary}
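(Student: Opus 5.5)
This corollary is a direct specialization of \autoref{thm:hard-absorber-hitting}, so I would only set the constant and check the operator-norm bound on all of $\mathcal K$. First I set $N=N(m,t)=\lceil 4S_m(t)\rceil$. Equation~\eqref{eq:finite-hard-hitting-time} then gives
\begin{equation}
    \|\mathbb K_t^{N}\|^2_{\mathcal K\to\mathcal K}\leq\frac{S_m(t)}{N+1}<\frac{S_m(t)}{4S_m(t)}=\frac14 .
\end{equation}
Hence $\|\mathbb K_t^{N}v\|\leq\frac12\|v\|$ for every $v\in\mathcal K$. To apply \eqref{eq:finite-hard-hitting-time} I need $N\geq1$. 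This holds because $c(m)\leq1$ and $C_m(t)\geq0$, so $S_m(t)\geq2$.

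The extension from $\range(Q)$ to arbitrary physical defects $v\in\mathcal K$ is already handled at the end of the proof of \autoref{thm:hard-absorber-hitting}. Since $\mathbb K_t=Q\mathbb WQ$, we have $\mathbb K_t^Nv=\mathbb K_t^NQv$ for $N\geq1$, and also $\|Qv\|\leq\|v\|$. I would restate this briefly. The physical left boundary $A_L$ is already part of $\mathbb W$, so no additional argument is needed here; this is just the remark of \autoref{sec:left-boundary}.

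For the polynomial bound, I would unpack
\begin{equation}
    S_m(t)=\frac{2C_m(t)+2}{c(m)},\qquad C_m(t)=\frac{X_{\max}+2b_m(t)}{1-2\beta_m(t)} .
\end{equation}
The individual factors are controlled as follows:
\begin{itemize}
    \item $X_{\max}=(t-1)/2$ is linear in $t$.
    \item $b_m(t)\leq\mathrm{poly}(m,t)$ and $(1-2\beta_m(t))^{-1}\leq\mathrm{poly}(m,t)$ by \autoref{conj:first-return-gap}.
    \item $c(m)^{-1}=2000/a_m^4$, with $a_m=[1-(1-\epsilon(m))^2]\Delta_m\geq\epsilon(m)\Delta_m$.
    \item $\epsilon(m)^{-1}$ is polynomially bounded under \autoref{conj:lpm}, and $\Delta_m^{-1}\leq\mathrm{poly}(m)$ by \eqref{eq:haar-comparison-gap}.
\end{itemize}
Products and sums of polynomials are polynomials, and the ceiling adds at most one, so $N(m,t)\leq\mathrm{poly}(m,t)$.

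No step is a real obstacle. The only care needed is to confirm that the strict inequality $N+1>4S_m(t)$ gives the factor $\frac12$, and that the polynomial-dependence bookkeeping invokes exactly the assumptions already stated.
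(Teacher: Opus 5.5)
Your proposal is correct and matches the paper's proof, which consists only of substituting $N=N(m,t)$ into Equation~\eqref{eq:finite-hard-hitting-time}. Your additional checks are all valid and simply spell out what the paper leaves implicit: $N\geq1$ because $S_m(t)\geq 2/c(m)\geq 2$; the passage from $\range(Q)$ to all of $\mathcal K$ via $\mathbb K_t^N v=\mathbb K_t^N Qv$; and the bound $S_m(t)\leq\mathrm{poly}(m,t)$ from the stated assumptions.
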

\begin{proof}
Substitute $N=N(m,t)$ into Equation~\eqref{eq:finite-hard-hitting-time}.
\end{proof}

\subsection{Soft absorption and state-design formation}
\label{sec:soft-absorption}

At this point, we are almost back to $\mathbb T^{(m)}$. We still have the hard absorber at $t$, while the physical transfer matrix has a soft absorber. If the soft absorber, together with the intervening bulk evolution, contracts boundary defect weight over a polynomial number of steps, then we show that this introduces only a polynomial cost in the contraction time. In what follows, we use the layer convention
\begin{equation}
    \mathbb T^{(m)}=A_R\mathbb W.
    \label{eq:physical-soft-walk}
\end{equation}
In the auxiliary-column ordering of Section~\ref{sec:finite-bath-setting}, the transfer matrix is $\mathbb A\mathbb S A_R$. Since $A_R$ acts on site $t$, it commutes with $\mathbb S$, so
\begin{equation}
    \mathbb A^\dagger(\mathbb A\mathbb S A_R)\mathbb A
    =A_R\mathbb S\mathbb A=A_R\mathbb W.
    \label{eq:auxiliary-time-origin}
\end{equation}
This unitary conjugation fixes $\mathcal S$ pointwise and preserves the induced operator norms used below.

\begin{theorem}
\label{thm:soft-absorber}
Let $A_R$ be a contraction fixing $\mathcal S$ pointwise, with $A_RQ=Q$. For an integer $q\geq1$, define $B_q:=A_R(\mathbb W A_R)^{q-1}.$
Further assume that, on $\mathcal K$, for some $0<g(m)\leq1$, we have the operator inequality
\begin{equation}
    \mathbb I_{\mathcal K}-B_q^\dagger B_q\geq  g(m)P_t.
    \label{eq:right-boundary-contraction-assumption}
\end{equation}
Now if, for some integer $N\geq1$ and $0\leq\eta<1$, $\|\mathbb K_t^N\|_{\mathcal K\to\mathcal K}\leq\eta$, then for every $v\in\mathcal K$,
\begin{equation}
    \| (\mathbb T^{(m)})^{N+q}v\|^2
    \leq\left(1-
      \frac{g(m)(1-\eta)^2}{4q(N+1)}
      \right)\|v\|^2.
    \label{eq:hard-to-soft-contraction}
\end{equation}
\end{theorem}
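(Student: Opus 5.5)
The plan is to compare the soft walk $\mathbb T^{(m)}=A_R\mathbb W$ with the hard walk $\mathbb K_t$ step by step. The discrepancy between the two at each step is controlled by the defect weight that reaches the endpoint sector, $\|P\mathbb W x_j\|$ with $x_j:=(\mathbb T^{(m)})^jv$. The boundary assumption \eqref{eq:right-boundary-contraction-assumption} converts exactly that weight into norm loss. The result is a dichotomy. Either little weight reaches site $t$, in which case the soft walk tracks $\mathbb K_t$ and inherits the contraction $\|\mathbb K_t^N\|\le\eta$. Or a lot of weight reaches site $t$, in which case $B_q$ destroys a $g(m)/q$ fraction of it.

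\textbf{Reduction to $\mathcal K$.} Since $A_R$ is a contraction fixing $\mathcal S$ pointwise, so is $A_R^\dagger$ (for a contraction, $Tx=x$ forces $T^\dagger x=x$). The same holds for the unitary $\mathbb W$. Hence $\mathbb T^{(m)}$ leaves $\mathcal K$ invariant, and we may take $v\in\mathcal K$ with $\|v\|=1$. Using $A_RQ=Q$ and $P+Q=\mathbb I_{\mathcal K}$, I would write
\[
x_{j+1}=A_R\mathbb W x_j=Q\mathbb W x_j+A_RP\mathbb W x_j=\mathbb K_t x_j+e_j,
\qquad
e_j:=Q\mathbb W Px_j+A_RP\mathbb W x_j .
\]
For $j\ge1$ we have $Px_j=PA_RP\mathbb Wx_{j-1}$, because $PA_RQ=PQ=0$. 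So $\|Px_j\|\le\|P\mathbb Wx_{j-1}\|$, and therefore $\|e_j\|\le s_{j-1}+s_j$ with $s_j:=\|P\mathbb Wx_j\|$.

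\textbf{Step 1 (hit).} Observe that $(\mathbb T^{(m)})^q=(A_R\mathbb W)^q=B_q\mathbb W$. Applying \eqref{eq:right-boundary-contraction-assumption} to $\mathbb W x_j$ and using unitarity of $\mathbb W$ gives
\[
\|x_j\|^2-\|x_{j+q}\|^2\ \ge\ g(m)\,s_j^2 .
\]
Norms $\|x_j\|$ are nonincreasing. Summing over $j=0,\dots,N$, each one-step decrement $\|x_i\|^2-\|x_{i+1}\|^2$ with $i<N+q$ appears at most $q$ times. This yields
\[
1-\|x_{N+q}\|^2\ \ge\ \frac{g(m)}{q}\sum_{j=0}^{N}s_j^2 .
\]

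\textbf{Step 2 (drift to the hard walk).} Unrolling the recursion gives $x_N=\mathbb K_t^Nv+\sum_{j<N}\mathbb K_t^{N-1-j}e_j$. Since $\mathbb K_t$ is a contraction, $\|x_N\|\le\eta+\sum_j\|e_j\|$. By Cauchy--Schwarz, this sum is at most about $2\sqrt{N+1}\,(\sum_{j\le N}s_j^2)^{1/2}$. Now split into cases. If $\sum_j s_j^2\ge(1-\eta)^2/(4(N+1))$, Step 1 already gives the claimed loss. Otherwise $\|x_{N+q}\|\le\|x_N\|$ is bounded away from $1$ by a multiple of $1-\eta$, and squaring gives a loss at least as large as the claimed one, since $g/(4q(N+1))\le1$. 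The constants should be arranged so that both branches produce $g(m)(1-\eta)^2/(4q(N+1))$.

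\textbf{Main obstacle.} The delicate point is the $j=0$ error term $Q\mathbb W Pv$. The initial vector may carry endpoint weight $Pv$ that is not preceded by an $s_{-1}$. I would handle it with the same boundary inequality applied directly to $Pv$ through $B_q$. Alternatively, I would reorganize the comparison so that the hard walk starts from $x_1$, using $\mathbb K_t^N x_1$ together with $\|\mathbb K_t^N\|\le\eta$ on all of $\mathcal K$. In that version the $j=0$ term becomes $A_RP\mathbb Wv$, which is bounded by $s_0$. This shifts the horizon from $N$ to $N+1$, which the $q\ge1$ slack in $N+q$ absorbs. Getting the constant factor $4$ exactly right will need care in this bookkeeping.
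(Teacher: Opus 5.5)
Your route is the paper's. The $q$-fold overlap count in Step~1 is identical. Your second remedy for the $j=0$ term, starting the comparison at $x_1$ and using $\|x_{N+q}\|\le\|x_{N+1}\|$, is exactly what the paper does. Your first remedy would not work, because $B_qv=(\mathbb T^{(m)})^{q-1}A_Rv$ is not on the trajectory of $v$; the trajectory point is $x_q=B_q\mathbb Wv$.

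The genuine gap is in the error accounting of Step~2, and it is what blocks the constant $4$. The bound $\|e_j\|\le s_{j-1}+s_j$ counts every $s_j$ twice, so you only get $\|x_{N+1}\|\le\eta+2\sqrt{N+1}\,(\sum_{j\le N}s_j^2)^{1/2}$. At your split value $\sum_j s_j^2=(1-\eta)^2/(4(N+1))$, this gives $\|x_{N+1}\|\le\eta+(1-\eta)=1$. So your ``otherwise'' branch yields no loss at all. Moving the threshold only produces $g(m)(1-\eta)^2/(16q(N+1))$, not the stated bound.

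The missing observation has two parts. First, $\mathbb K_t=Q\mathbb WQ$ annihilates $\range(P)$. Second, $A_RP\mathbb Wx_j\in\range(P)$: from $A_RQ=Q$ and contractivity you get $A_R^\dagger Q=Q$, hence $QA_R=Q$ and $QA_RP=0$. Therefore, in your unrolled sum, every endpoint piece $\mathbb K_t^{N-j}A_RP\mathbb Wx_j$ with $j<N$ vanishes, and only the last one survives.

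Equivalently, do as the paper does. Propagate only the surviving component, $Qx_{j+1}=\mathbb K_tQx_j+Q\mathbb WPx_j$, and add $Px_{N+1}$ once at the end:
\[
\|x_{N+1}\|\le\eta\|Qx_1\|+\sum_{j=1}^{N+1}\|Px_j\|\le\eta+\sum_{j=0}^{N}s_j\le\eta+\Bigl(\frac{q(N+1)}{g(m)}\bigl(1-\|x_{N+q}\|^2\bigr)\Bigr)^{1/2}.
\]
Now each $s_j$ enters once. The dichotomy $\|x_{N+q}\|\lessgtr(1+\eta)/2$ then gives at least $g(m)(1-\eta)^2/(4q(N+1))$ in both branches, which is the stated bound.
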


The assumption allows a defect to preserve its norm in one boundary encounter, provided that the intervening bulk evolution exposes it to contraction within $q$ encounters. It is imposed for arbitrary input vectors, including correlations with the other sites.

\begin{proof}
Let $v\in\mathcal K$ and normalize $\Vert v\Vert=1$. For convenience, we suppress the replica-order dependence in the equations below. Write $g=g(m)$.

Since $A_R$ is a contraction fixing $\mathcal S$ pointwise, $\mathcal S$ and $\mathcal K$ are reducing subspaces, that is $A_{R}$ cannot map anything from $\mathcal S$ into $\mathcal{K}$, and vice versa. We can apply the same argument to $Q$: since $A_R$ acts as the identity on $\range(Q)$, this subspace is reducing for $A_R$: if $x\in\range(Q)$, then $A_Rx=x$, and $A_R^\dagger A_R\leq I$ implies $A_R^\dagger x=x$. Hence, $QA_RP_t=0$ and $P_tA_RQ=0$.

We define $v_{i} = \mathbb{T}^{(m)}v_{i-1}$, $v_0=v$, and $w_{i} = \mathbb W(v_{i})$, where $\mathbb W$ is the finite unitary walk on sites $1,\ldots,t$. Notice that $v_{i+1} = A_Rw_{i}$ and $B_qw_i=v_{i+q}$. Now since $\mathbb W$ is unitary, $\Vert w_{i} \Vert = \Vert v_{i} \Vert$. By assumption,
\begin{equation}
    \Vert v_i\Vert^2-\Vert v_{i+q}\Vert^2
    =\langle w_i,(\mathbb I-B_q^\dagger B_q)w_i\rangle
    \geq g\Vert P_t\mathbb Wv_i\Vert^2.
\end{equation}
Summing both sides from $i=0$ to $N$, each one-step squared-norm loss occurs at most $q$ times. Since these losses are nonnegative,
\begin{equation}
\begin{aligned}
    g\sum_{i=0}^N\Vert P_t\mathbb Wv_i\Vert^2
    &\leq\sum_{i=0}^N\sum_{j=i}^{i+q-1}
       \bigl(\Vert v_j\Vert^2-\Vert v_{j+1}\Vert^2\bigr)\\
    &\leq q\bigl(\Vert v\Vert^2-\Vert v_{N+q}\Vert^2\bigr).
\end{aligned}
\label{eq:soft-loss-sum}
\end{equation}

Now we must relate this to the hard absorbing walk. The absorbing walk respects the permutation sector decomposition, so we instead split $v_{i} = P_{t}v_{i} + Qv_{i} := p_{i} + q_{i}$. Now $q_{i+1} = QA_R\mathbb Wq_{i} + QA_R\mathbb Wp_{i}$, so from the definition of $\mathbb{K}_{t}$ and the fact that $A_R$ acts as the identity on $\range(Q)$,
\begin{equation}
    q_{i+1} = \mathbb{K}_{t}q_{i} + Q\mathbb Wp_{i}.
\end{equation}
which we can iterate to get 
\begin{equation}
    q_{N+1} = \mathbb{K}^{N}_{t}q_{1} + \sum_{i=1}^{N} \mathbb{K}_{t}^{N-i}Q\mathbb Wp_{i}.
\end{equation}
Now, since $\mathbb{K}_{t}$ is a contraction, and $\Vert \mathbb{K}^{N}_{t}\Vert \leq \eta$, by the triangle inequality
\begin{equation}
    \Vert q_{N+1} \Vert \leq  \eta \Vert q_{1} \Vert + \sum_{i=1}^{N} \Vert p_{i} \Vert.
\end{equation}
At the same time, $\Vert p_{i+1}\Vert = \Vert P_{t}A_R\mathbb Wv_{i}\Vert \leq \Vert P_{t}\mathbb Wv_{i}\Vert$. Thus,
\begin{equation}
    \Vert v_{N+1} \Vert \leq \Vert q_{N+1} \Vert + \Vert p_{N+1}\Vert \leq \eta \Vert q_{1} \Vert + \sum_{i=0}^{N} \Vert P_{t}\mathbb Wv_{i} \Vert.
\end{equation}
Proceeding, by the Cauchy-Schwarz inequality,
\begin{equation}
    \sum_{i=0}^{N} \Vert P_{t}\mathbb Wv_{i} \Vert \leq \Big( (N+1) \sum_{i=0}^{N} \Vert P_{t}\mathbb Wv_{i} \Vert^{2}\Big)^{1/2},
\end{equation}
which we can simplify using Equation~\eqref{eq:soft-loss-sum}, $\Vert v \Vert =1$, and $\Vert q_1\Vert\leq1$. Since $q\geq1$ and $\mathbb T^{(m)}$ is a contraction, we get
\begin{equation}
    \Vert v_{N+q} \Vert \leq \eta + \Big( q(N+1)\frac{(1 - \Vert v_{N+q}\Vert^{2})}{g}\Big)^{1/2}.
\end{equation}
Either $\Vert v_{N+q}\Vert\leq(1+\eta)/2$, in which case
\begin{equation}
    1-\Vert v_{N+q}\Vert^2
    \geq 1-\frac{(1+\eta)^2}{4}
    \geq\frac{(1-\eta)^2}{4}
    \geq\frac{g(1-\eta)^2}{4q(N+1)},
\end{equation}
or $\Vert v_{N+q}\Vert>(1+\eta)/2$. In this case, 
\begin{equation}
   \frac{1-\eta}{2} <  \Big( q(N+1)\frac{(1 - \Vert v_{N+q}\Vert^{2})}{g}\Big)^{1/2},
\end{equation}
or, completing our proof, 
\begin{equation}
    \Vert v_{N+q}\Vert^{2} \leq 1-\frac{g(1-\eta)^{2}}{4q(N+1)}.
\end{equation}
\end{proof}

We have shown that $\mathbb T^{(m)}$ contracts defect vectors by an inverse-polynomial amount in a time scale $N+q$. Recall \autoref{cor:finite-hard-absorber}: the time scale required to have a constant contraction of $\mathbb K_t$ is polynomial in $m,t$, so choosing $N$ on this scale gives contraction on the order of $g(m)/(qN)$. As a result, if $g(m)$ itself is at least inverse polynomial and $q$ is polynomially bounded, then so is the contraction. How typical is the assumption on $q$? Well, since $A_{R}$ is solvable, it must have the permutations as part of its $+1$ eigenspace; generically, it could have other directions as well in there. The point is that the interaction of $A_{R}$ with the bulk, which is comprised of generic dual unitaries, should be unlikely to align these `bad' directions into themselves, while by design these bulk dual unitaries leave permutations unchanged. This is very similar to LPM, and can be thought of as a boundary analogue of the same.

We also have the following corollary.

\begin{corollary}
\label{cor:design-time}
    Choose $N=N(m,t)$ as in Equation~\eqref{eq:physical-hard-contraction-time}, write $q=q(m)$, and define
    \begin{equation}
        \tau_m(t):=\frac{32q(m)(N(m,t)+1)(N(m,t)+q(m))}{g(m)}.
        \label{eq:replica-contraction-timescale}
    \end{equation}
    Then, for every integer $\ell\geq0$ and every physical defect vector $v\in\mathcal K$,
    \begin{equation}
        \|(\mathbb T^{(m)})^\ell v\|
        \leq2\exp\left(-\frac{\ell}{\tau_m(t)}\right)\|v\|.
        \label{eq:physical-defect-contraction}
    \end{equation}
    If $q(m),g(m)^{-1}\leq\mathrm{poly}(m)$ and the boundary assumption holds uniformly in $t$, then $\tau_m(t)\leq\mathrm{poly}(m,t)$, so a polynomial number of auxiliary steps provides us with constant worst-case contraction of defects.
\end{corollary}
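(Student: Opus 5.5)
The plan is to obtain a one-block contraction from \autoref{thm:soft-absorber} and then iterate it. Set $N=N(m,t)$. By \autoref{cor:finite-hard-absorber}, $\|\mathbb K_t^{N}\|_{\mathcal K\to\mathcal K}\leq 1/2$, so \autoref{thm:soft-absorber} applies with $\eta=1/2$. For a block length $L:=N+q$ this gives
\begin{equation}
    \|(\mathbb T^{(m)})^{L}v\|^2\leq\Bigl(1-\frac{g(m)}{16q(N+1)}\Bigr)\|v\|^2
\end{equation}
for every $v\in\mathcal K$.

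Next, $\mathbb T^{(m)}$ maps $\mathcal K$ into itself. The reason is that $A_R$ fixes $\mathcal S$ pointwise and is a contraction, so $\mathcal S$ and $\mathcal K$ reduce it, and $\mathbb W$ is unitary and also fixes $\mathcal S$. The block bound can therefore be iterated. Writing $\gamma:=g/(16q(N+1))\leq 1$ and using $\sqrt{1-\gamma}\leq e^{-\gamma/2}$, we get for $\ell=jL+r$ with $0\leq r<L$:
\begin{equation}
    \|(\mathbb T^{(m)})^{\ell}v\|\leq e^{-j\gamma/2}\|v\|.
\end{equation}
The remainder steps cost nothing because $\mathbb T^{(m)}$ is a contraction.

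To convert this into the stated form, use $j\geq \ell/L-1$:
\begin{equation}
    e^{-j\gamma/2}\leq e^{\gamma/2}\,e^{-\ell\gamma/(2L)}\leq 2\,e^{-\ell\gamma/(2L)}.
\end{equation}
The last step uses $e^{1/2}<2$. The rate is $\gamma/(2L)=g/(32q(N+1)(N+q))=1/\tau_m(t)$, which is exactly \eqref{eq:physical-defect-contraction}.

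For the polynomial claim, $\tau_m(t)$ is a product of $q(m)$, $g(m)^{-1}$, and factors that are polynomial in $N(m,t)$. By \autoref{cor:finite-hard-absorber}, $N(m,t)\leq\mathrm{poly}(m,t)$, so $\tau_m(t)\leq\mathrm{poly}(m,t)$. The only point needing care is checking that the hypotheses of \autoref{thm:soft-absorber} hold: that $A_RQ=Q$ and that $A_R$ fixes $\mathcal S$ pointwise. Both come from solvability and the layer reordering in \eqref{eq:auxiliary-time-origin}. The boundary inequality \eqref{eq:right-boundary-contraction-assumption} is assumed uniformly in $t$. Everything else is routine bookkeeping of the exponent.
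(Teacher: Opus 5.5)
Your proposal is correct and follows the paper's proof essentially step for step. You apply \autoref{thm:soft-absorber} with $\eta=1/2$, take square roots via $\sqrt{1-x}\leq e^{-x/2}$, split $\ell$ into $\lfloor \ell/(N+q)\rfloor$ full blocks plus a remainder controlled by contractivity, and absorb the leftover factor $e^{g(m)/(32q(N+1))}<2$; your explicit check that $\mathcal K$ is invariant under $\mathbb T^{(m)}$, which is needed to iterate the block bound and which the paper leaves implicit, is correctly justified.
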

\begin{proof}
    The proof follows by simple substitution. Choose $\eta=1/2$. Then,
    \begin{equation}
        \|(\mathbb T^{(m)})^{N+q}v\|^2
        \leq\left(1-\frac{g(m)}{16q(N+1)}\right)\|v\|^2.
    \end{equation}
    Taking the square root and using the inequality $\sqrt{1-x}\leq e^{-x/2}$ gives
    \begin{equation}
        \|(\mathbb T^{(m)}|_{\mathcal K})^{N+q}\|
        \leq\exp\left(-\frac{g(m)}{32q(N+1)}\right).
    \end{equation}
    Now we take $b=\lfloor\ell/(N+q)\rfloor$. We don't have to worry about any remaining fewer than $N+q$ steps by contractivity, so
    \begin{equation}
    \begin{aligned}
        \|(\mathbb T^{(m)})^\ell v\|
        &\leq\exp\left(-\frac{g(m)b}{32q(N+1)}\right)\|v\|\\
        &\leq\exp\left(\frac{g(m)}{32q(N+1)}\right)
          \exp\left(-\frac{\ell}{\tau_m(t)}\right)\|v\|.
    \end{aligned}
    \end{equation}
    The first exponential is less than $2$, completing the proof.
\end{proof}

The dependence of $\tau_m(t)$ on $t$ stems from the hard-absorber time $N(m,t)$, which in turn comes from us accounting for the width-dependent transient, see Theorem~\ref{thm:hard-absorber-hitting}.

Now recall the $m$-moment Haar twirling operator $\Pi^{(m)}$ (\autoref{sec:state-designs}). Its range on the $m$-replica auxiliary chain is the global permutation sector $\mathcal S$, and its kernel is $\mathcal K$. Thus, the contraction of physical defects by an operator is equivalent to its convergence to the Haar twirling operator. Let us make this equivalence explicit.

\begin{lemma}[Defect contraction and the induced norm]
\label{lem:defect-contraction-equivalence}
Let $\mathbb T^{(m)}$ be a contraction fixing $\mathcal S$ pointwise, and let $\Pi^{(m)}=\Pi_{\mathcal S}$. Then, for every integer $\ell\geq0$,
\begin{equation}
    \|(\mathbb T^{(m)})^\ell-\Pi^{(m)}\|
    =\| (\mathbb T^{(m)}|_{\mathcal K})^\ell\|
    =\sup_{\substack{v\in\mathcal K\\\|v\|=1}}
         \|(\mathbb T^{(m)})^\ell v\|.
    \label{eq:defect-contraction-equivalence}
\end{equation}
\end{lemma}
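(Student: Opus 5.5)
The plan is to exploit the orthogonal decomposition of the operator space into $\mathcal S\oplus\mathcal K$ and show that $\mathbb T^{(m)}$ is block diagonal with respect to it. The first step is that $\mathcal S$ and $\mathcal K$ are reducing subspaces for $\mathbb T:=\mathbb T^{(m)}$. The argument repeats the one given in the proof of \autoref{thm:soft-absorber} for $A_R$. Since $\mathbb T$ is a contraction with $\mathbb Tx=x$ for every $x\in\mathcal S$, we have
\begin{equation}
    \|\mathbb T^\dagger x-x\|^2=\|\mathbb T^\dagger x\|^2-2\,\mathrm{Re}\langle x,\mathbb Tx\rangle+\|x\|^2\leq 2\|x\|^2-2\|x\|^2=0 .
\end{equation}
Hence $\mathbb T^\dagger$ also fixes $\mathcal S$ pointwise. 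It follows that $\mathbb T$ maps $\mathcal K=\mathcal S^\perp$ into itself: for $v\in\mathcal K$ and $x\in\mathcal S$, $\langle x,\mathbb Tv\rangle=\langle\mathbb T^\dagger x,v\rangle=\langle x,v\rangle=0$. Therefore $\mathbb T\Pi_{\mathcal S}=\Pi_{\mathcal S}=\Pi_{\mathcal S}\mathbb T$ and $\mathbb T=\Pi_{\mathcal S}\oplus\mathbb T|_{\mathcal K}$.

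Next, by induction, $\mathbb T^\ell=\Pi_{\mathcal S}\oplus(\mathbb T|_{\mathcal K})^\ell$ for every $\ell\geq0$. For $\ell=0$ this reads $\mathbb I=\Pi_{\mathcal S}+\mathbb I_{\mathcal K}$, which is consistent with the conventions. Subtracting $\Pi^{(m)}=\Pi_{\mathcal S}$ gives
\begin{equation}
    \mathbb T^\ell-\Pi^{(m)}=0\oplus(\mathbb T|_{\mathcal K})^\ell .
\end{equation}
The induced norm of a block-diagonal operator is the maximum of the norms of its blocks, so $\|\mathbb T^\ell-\Pi^{(m)}\|=\|(\mathbb T|_{\mathcal K})^\ell\|$. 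For $\ell=0$ both sides equal $1$ whenever $\mathcal K\neq0$. The final equality in Equation~\eqref{eq:defect-contraction-equivalence} is just the definition of the operator norm on $\mathcal K$.

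The only step with any content is the first one, namely that $\mathbb T^\dagger$ fixes $\mathcal S$. It uses contractivity in an essential way, and it is exactly where the hypothesis that $\mathbb T$ is a contraction enters; without it, $\mathcal K$ need not be invariant. Everything after that step is bookkeeping. The identification $\Pi^{(m)}=\Pi_{\mathcal S}$ is taken as stated, with the permutation sector as range and $\mathcal K$ as kernel.
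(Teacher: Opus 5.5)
Your proposal is correct and follows the same route as the paper: you show $\mathcal S$ and $\mathcal K$ are reducing subspaces for $\mathbb T^{(m)}$, write $(\mathbb T^{(m)})^\ell-\Pi^{(m)}=0_{\mathcal S}\oplus(\mathbb T^{(m)}|_{\mathcal K})^\ell$, and take the induced norm. The only difference is that you spell out the reducing-subspace step (contractivity forces $(\mathbb T^{(m)})^\dagger$ to fix $\mathcal S$ as well), which the paper simply cites from the proof of \autoref{thm:soft-absorber}.
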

\begin{proof}
We have, by an argument similar to that in the proof of \autoref{thm:soft-absorber}, that $\mathcal S$ and $\mathcal K$ are reducing subspaces for $\mathbb{T}^{(m)}$, so
\begin{equation}
    (\mathbb T^{(m)})^\ell-\Pi^{(m)}
    =0_{\mathcal S}\oplus
      (\mathbb T^{(m)}|_{\mathcal K})^\ell.
\end{equation}
Taking the induced operator norm on both sides yields the result; under vectorization this is the $2\to2$ norm.
\end{proof}

\begin{corollary}
\label{cor:twirling-convergence}
    Retain the setting from \autoref{cor:design-time}. Then
    \begin{equation}
    \|(\mathbb T^{(m)})^\ell-\Pi^{(m)}\|_{2\to2}
        \leq2\exp\left(-\frac{\ell}{\tau_m(t)}\right),\ \ell\geq0.
        \label{eq:twirling-exponential-contraction}
\end{equation}
\end{corollary}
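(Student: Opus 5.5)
This corollary combines \autoref{lem:defect-contraction-equivalence} with \autoref{cor:design-time}. No new estimate is needed; the work is checking that the hypotheses of the lemma hold for the physical transfer matrix in the setting of \autoref{cor:design-time}.

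\textbf{Step 1: hypotheses of \autoref{lem:defect-contraction-equivalence}.} The lemma asks for two things, and we verify each in turn.
\begin{enumerate}
\item \emph{$\mathbb T^{(m)}$ is a contraction.} In the convention of Equation~\eqref{eq:physical-soft-walk}, $\mathbb T^{(m)}=A_R\mathbb W$. Here $\mathbb W$ is unitary, and $A_R$ is a contraction as assumed in \autoref{thm:soft-absorber}. A product of a contraction and a unitary is a contraction.
\item \emph{$\mathbb T^{(m)}$ fixes $\mathcal S$ pointwise.} Each bulk gate and the solvable $A_L$ fix every matched permutation string, so $\mathbb W$ acts as the identity on $\mathcal S$. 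By assumption $A_R$ also fixes $\mathcal S$ pointwise.
\end{enumerate}
If one insists on the auxiliary-column ordering $\mathbb A\mathbb S A_R$ instead, Equation~\eqref{eq:auxiliary-time-origin} relates it to $A_R\mathbb W$ by conjugation with $\mathbb A$. This conjugation is unitary and fixes $\mathcal S$ pointwise, so it commutes with $\Pi^{(m)}$. Hence
\[
\|(\mathbb A\mathbb S A_R)^\ell-\Pi^{(m)}\|=\|(A_R\mathbb W)^\ell-\Pi^{(m)}\|,
\]
and the conclusion does not depend on which ordering is used.

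\textbf{Step 2: reduce to defects.} \autoref{lem:defect-contraction-equivalence} then gives
\[
\|(\mathbb T^{(m)})^\ell-\Pi^{(m)}\|_{2\to2}=\sup_{v\in\mathcal K,\ \|v\|=1}\|(\mathbb T^{(m)})^\ell v\|.
\]
As noted in that lemma, the induced operator norm on vectorized operators is exactly the $2\to2$ norm of Equation~\eqref{eq:moment-norm-conventions}.

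\textbf{Step 3: apply the decay bound.} Equation~\eqref{eq:physical-defect-contraction} of \autoref{cor:design-time} bounds each term in the supremum by $2\exp(-\ell/\tau_m(t))$, uniformly over unit vectors $v\in\mathcal K$. Taking the supremum yields Equation~\eqref{eq:twirling-exponential-contraction}.

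\textbf{Edge case and main obstacle.} At $\ell=0$ the left-hand side equals $\|\mathbb I-\Pi^{(m)}\|\le 1\le 2$, so the bound holds trivially. The only point needing care is Step 1, specifically ensuring that the ordering convention and the boundary hypotheses from \autoref{thm:soft-absorber} carry over unchanged. Beyond that the corollary is direct substitution.
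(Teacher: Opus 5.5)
Your proposal is correct and matches the paper's proof, which simply applies \autoref{lem:defect-contraction-equivalence} to Equation~\eqref{eq:physical-defect-contraction}. Your explicit check of the lemma's hypotheses (contractivity and pointwise fixing of $\mathcal S$) and your remark that the conjugation by $\mathbb A$ leaves the norm unchanged are both correct, and they fill in points the paper leaves implicit.
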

\begin{proof}
    Apply \autoref{lem:defect-contraction-equivalence} to Equation~\eqref{eq:physical-defect-contraction}.
\end{proof}

This supplies the moment-operator estimate that we will use in the next section, to deal with the normalization in the projected ensemble, for a general replica order $m$.

\section{From contraction of $\mathbb{T}^{(m)}$ to projected ensemble state design}
\label{sec:normalization-to-design}

Given the contraction bounds on $\mathbb{T}^{(m)}$, we have to now show that the corresponding moment operator of the projected ensemble state distribution is close to the Haar distribution. This is actually a subtle issue, since (as discussed in \autoref{sec:preliminaries}), the contraction of $\mathbb{T}^{(n+k)}$ only immediately gives us bounds on operators of the form $\tilde{\rho}^{(n,k)}_{\mathcal{E}} = \sum_{i=1}^{2^{N_{B}}} (p_{i})^{n}\ketbra*{\tilde{\psi}^{\mathcal{E}}_{i}}^{\otimes k}$ for unnormalized $\ket*{\tilde{\psi}}$ and $n \geq 0$, $k \geq 1$. However, normalization means that we actually need this quantity for $n=1-k$. This is a real issue. In the infinite bath case, the result in \cite{Claeys2022emergentquantum} is obtained via a `replica calculation', where for every nonnegative integer $n$, one finds $\tilde{\rho}^{(n,k)}_{\mathcal{E}} \propto \rho^{(k)}_{\mathrm{Haar}}$ in the limit, so they conclude that this must follow for $n=1-k$. We are not sure such an idea is extendible to the finite bath case, and instead we will carefully try to control this negative exponent.

Our approach will be to perform an approximation of $p_{i}^{1-k}$ as a polynomial, using a weighted error. The key observation is that we do \textit{not} need a uniformly accurate approximation to $p_{i}^{1-k}$. This would be hopeless for $k \geq 2$ in the limit $p_{i} \rightarrow 0$, where the function diverges, while any fixed polynomial remains bounded, and we might as well have not tried. We only need the approximation to be good relative to the frequency with which the $p_{i}$ contributes to the actual, physical state, and in what follows, we will find that the singularity as $p_{i} \rightarrow 0$ is compensated for by the fact that our approximation there is allowed to be poor.

\subsection{Set-up for the approximation}
Recall (from Equation~\eqref{eq:physical-projected-moment}) $\rho^{(k)}_{\mathcal{E}} =\sum_{i=1}^{2^{N_{B}}} p_{i}(\ketbra{{\psi}^{\mathcal{E}}_{i}}{{\psi}^{\mathcal{E}}_{i}})^{\otimes k}:= \sum_{i=1}^{2^{N_{B}}} p_{i}\rho_{i}^{\otimes k}$, where the $\ket{{\psi}^{\mathcal{E}}_{i}}$ are normalized, and $\rho_{i}$ refers to the normalized density matrix corresponding to the measurement outcome $i$. Writing this in terms of the unnormalized $\ket{\tilde{\psi}^{\mathcal{E}}_{i}}$ gets us
\begin{equation}
    {\rho}^{(k)}_{\mathcal{E}} = \sum_{i=1}^{2^{N_{B}}} (p_{i})^{1-k} \ketbra*{\tilde{\psi}^{\mathcal{E}}_{i}}^{\otimes k} := \sum_{i=1}^{2^{N_{B}}} \frac{(X_{i})^{\otimes k}}{p_{i}^{k-1}},
\label{eq:unnormalized-physical-moment}
\end{equation}
with $X_{i} = \ketbra*{\tilde{\psi}^{\mathcal{E}}_{i}}$, and $p_{i} = \mathrm{tr}(X_{i})$. Terms with $p_i=0$ contribute zero. From solvability, we have the following decomposition of the unnormalized state.
\begin{equation}
     \ket{\tilde{\psi}^{\mathcal{E}}_{i}} = \sqrt{\frac{\nu_{i}}{P}}V\ket{\phi_{i}},
\end{equation}

\begin{equation}
     \Vert \ket{\phi_{i}}\Vert=1, \ VV^{\dag} = \mathbb{I}_{d}, \ P = \frac{d}{D}, \ \nu_{i} = 2^{-N_{B}}.
\end{equation}

Here $V := 2^{(N_{A}-t_{\mathrm{phys}})/2}F_{A}$, $\ket{\phi_{i}} := 2^{-(t_{\mathrm{phys}-2})/2}U_{\xi_{\ell}}U_{\xi_{\ell-1}}\dots U_{\xi_1}\ket{r_a}$. The properties above are easily verified just by dimension counting. We assume $k \geq 2$ and $D>d$ below; the cases $k=1$ and $D=d$ do not require this approximation.

Here, the states $\ket{\phi_{i}}$ are those corresponding to the evolution of normalized auxiliary seeds under $\ell=(N_B-1)/2$ periods of the auxiliary unitaries, in the paired-column convention, with their measurement outcomes fixed to those in the string $i$; the remaining endpoint outcome fixes the seed. The map $V$ is a coisometry that maps this state into the output of the projected ensemble corresponding to the measurement outcome $i$, up to the scalar factor above. Note also that $\nu_{i} = 2^{-N_{B}}$ implies that $\nu$ is the uniform distribution on the $2^{N_{B}}$ basis strings. Define $Y_{i} = V\ketbra{\phi_i}{\phi_i}V^{\dag}$, and $s_{i} = \mathrm{tr}(Y_{i})$. Then $\ketbra{\tilde{\psi}^{\mathcal{E}}_{i}}{\tilde{\psi}^{\mathcal{E}}_{i}} = X_{i} = \frac{\nu_{i}}{P}Y_{i}$. Taking the trace of both sides, we have $p_{i} = \frac{\nu_{i}}{P}s_{i}$. Since $\sum_i p_i=1$, we also have $\mathbb{E}_{\nu}s=P$. Moreover, $Y_i=s_i\rho_i$ and $0\leq s_i\leq 1$, with zero contributions when $s_i=0$. Substituting these expressions in Equation~\eqref{eq:unnormalized-physical-moment}, we get

\begin{equation}
    {\rho}^{(k)}_{\mathcal{E}} = \frac{1}{P} \sum_{i=1}^{2^{N_{B}}} \nu_{i} \frac{Y_i^{\otimes k}}{s_i^{k-1}} = \frac{1}{P}\mathbb{E}_{\nu}\Big(\frac{Y^{\otimes k}}{s^{k-1}}\Big).
\end{equation}

Notice that physically $p_{i}$ is the Born probability, while $\nu_{i}$ is uniform, so the entirety of the variation in the Born probabilities is encapsulated in $s_{i}$, or equivalently the interaction of $V$ with the auxiliary state $\ket{\phi_{i}}$. The reason we have written our $\rho^{(k)}_{\mathcal{E}}$ in terms of these $s_{i}$ and $Y_{i}^{\otimes k}$ is that the quantity that we have control over is the contraction of $\mathbb{T}^{(k)}$, which controls $\mathbb{E}_{\nu}(Y^{\otimes k})$ after contraction with the boundary maps; in the normalized definition of $\mathbb{T}^{(k)}$, we sum over the measurement outcomes with equal probability.

Let us now proceed further. We define $r=k-1$ and want to find a real degree-$L$ polynomial $P_{L}(s)$ such that the error

\begin{equation}
    e_{L} = \Big\Vert\frac{1}{P}\mathbb{E}_{\nu}\Big(s^{-r}Y^{\otimes k} - P_{L}(s)Y^{\otimes k}\Big)\Big\Vert_1
\end{equation}

is small. Here $\Vert\cdot\Vert_1$ is the trace norm. Let us simplify this error term with the simple upper bound

\begin{equation}
    e_{L} \leq \frac{1}{P}\mathbb{E}_{\nu}\Big( \Vert s^{-r}Y^{\otimes k} - P_{L}(s)Y^{\otimes k}\Vert_1 \Big).
\end{equation}

Now

\begin{equation}
    e_{L} \leq \frac{1}{P}\mathbb{E}_{\nu}\Big( \Vert s^{-r}Y^{\otimes k} - P_{L}(s)Y^{\otimes k}\Vert_1 \Big) = \frac{1}{P}\mathbb{E}_{\nu}\Big( \Vert s^{-r}s^{k} \rho^{\otimes k} - P_{L}(s)s^{k} \rho^{\otimes k}\Vert_1 \Big),
\end{equation}

which we simplify, using $\Vert \rho^{\otimes k} \Vert_1 = 1$, to

\begin{equation}
    e_{L} \leq \frac{1}{P}\mathbb{E}_{\nu}\Big( \Vert s\rho^{\otimes k} - P_{L}(s)s^{r+1} \rho^{\otimes k}\Vert_1 \Big) = \frac{1}{P}\mathbb{E}_{\nu}\Big( |s-s^{r+1}P_{L}(s)| \Big).
\end{equation}

Then, by Cauchy--Schwarz and $\mathbb{E}_{\nu}s=P$,

\begin{equation}
    e_{L} \leq \Big( \frac{1}{P}\mathbb{E}_{\nu}\big( s(1-s^{r}P_{L}(s))^{2}\big)\Big)^{1/2}.
\end{equation}

\subsection{Bounding the least-square error for Haar}
\label{sec:haar-least-squares}
Now, how do we approach this minimization problem? We need to pick a $P_{L}$ such that this squared error bound is minimized, with the expectation carried out over $\nu$. However, the distribution $s(\nu)$ does not have a simple form. Recall that $s_{i}$ determines the Born probability corresponding to the auxiliary state $\ket{\phi_{i}}$, which is in general very complicated. Our approach is to instead consider the $P_{L}$ which minimizes the above upper bound on $e_{L}^{2}$ with $\phi_{i}$ sampled instead over the Haar distribution --- after all, we showed that $\mathbb{T}^{(m)}$ contracts all the vectors outside the span of permutations, so their polynomial moments can be controlled by the contraction bounds at the corresponding replica orders, and, further, since the Haar distribution is `simple', we should have more analytical tools to bound the square error. Once we do this, we will show that we do not lose ``too much" in going back to $\nu$. The squared-error polynomial has degree at most $R=2k+2L-1$, so this step will use the contraction bound on $\mathbb{T}^{(R)}$.

We now focus on the case where $\ket{\phi} \sim \mathrm{Haar}$. $D$-dimensional Haar random states are achieved by sampling $D$ i.i.d. complex Gaussians $\mathcal{N}_{\mathbb{C}}(0,1)$ and normalizing the resulting vector, and the squared norm of the projection onto the smaller subspace of dimension $d$ is distributed as $\sum_{i=1}^{d} |g_i|^{2}/\sum_{i=1}^{D} |g_i|^{2}$, where the same $g_i\sim\mathcal{N}_{\mathbb{C}}(0,1)$ appear in both sums, so that $s \sim \mathrm{Beta}(d,D-d)$ (for a more detailed discussion, see \cite{Zyczkowski2001Induced}).

Let us now write out the squared error bound in full by inserting the density function corresponding to the $\mathrm{Beta}(d,D-d)$ distribution:

\begin{equation}
    \tilde{e}_{L}^{2} := \frac{1}{P}\int_{0}^{1}\frac{s^{d-1}(1-s)^{D-d-1}}{B(d,D-d)}\,s(1-s^{r}P_{L}(s))^{2}\,\dd s
    = \int_{0}^{1}\frac{s^{d+2r}(1-s)^{D-d-1}}{B(d+1,D-d)}(s^{-r}-P_{L}(s))^{2}\,\dd s.
\end{equation}

Here we used $P B(d,D-d)=B(d+1,D-d)$. The tilde over the $e_{L}$ is simply to emphasize that this is the squared error bound for the Haar distribution. Now $B(d+1,D-d)$ is a constant independent of $s$, so we can drop it when finding the minimizing polynomial. Identifying $w(s)=s^{d+2r}(1-s)^{D-d-1}$, we have the standard $w(s)$-weighted least-squares minimization of $P_{L}(s)$ with square error

\begin{equation}
    \tilde{e}_{L}^{2} = \frac{1}{B(d+1,D-d)}\int_{0}^{1} w(s)(s^{-r}-P_{L}(s))^{2}\,\dd s,
\end{equation}

with an additional factor $1/B(d+1,D-d)$ which we can think of as a physical normalization. We can now rely directly on results from approximation theory. Following the proof of Theorem 2.4 of \cite{Xu2018Jacobi}, in particular the second-last equation in the proof, we have the bound on the weighted least-square error of $f$

\begin{equation}
    \tilde{e}_{L}^{2} \leq \frac{\int_{0}^{1} w(s)(1-s)s(f'(s))^{2}\,\dd s}{B(d+1,D-d)\lambda_{L+1}},
\end{equation}

where $\lambda_j=j(j+D+2r)$, so $\lambda_{L+1}=(L+1)(L+D+2r+1)$, and for us $f=s^{-r}$. The suggestive notation $\lambda$ is due to the fact that this is an eigenvalue corresponding to the Sturm--Liouville theory of Jacobi polynomials. Now we compute the right hand side. The integral

\begin{equation}
    \int_{0}^{1} w(s)(1-s)s(f'(s))^{2}\,\dd s = \int_{0}^{1} w(s)(1-s)s(rs^{-r-1})^{2}\,\dd s = r^{2}B(d,D-d+1),
\end{equation}

and so we have
\begin{equation}
    \tilde{e}_{L}^{2} \leq \frac{r^{2}B(d,D-d+1)}{B(d+1,D-d)\lambda_{L+1}} = \frac{(k-1)^{2}(D-d)}{d(L+1)(L+D+2k-1)} \leq \frac{(k-1)^2}{d(L+1)}.
\label{eq:haar-error-bound}
\end{equation}

That is, to ensure $\tilde{e}_{L} \leq \alpha/2$ for some $O(1)$ constant $\alpha>0$, we need $L+1 \geq 4(k-1)^{2}/(d\alpha^{2})$, which is still at most $\mathrm{poly}(k)$ for fixed $\alpha$. That is, we established that the squared least-squares error is bounded as $O(k^2/[d(L+1)])$ for a degree-$L$ polynomial.

\subsection{Projected ensembles form approximate state designs}

Now we transition back to the projected ensemble setting. We have to show that this forms an approximate state $k$-design. We begin by bounding the trace norm of the difference of the $k$-moments of the two distributions.

\begin{proposition}
\label{prop:normalization-conversion}
Assume \autoref{thm:soft-absorber} and the setting of this section. Define $\sigma^{\nu}_{m} := \mathbb{E}_{\nu}(\ketbra{\phi}^{\otimes m})$, and $\sigma^{H}_{m} := \mathbb{E}_{H}(\ketbra{\phi}^{\otimes m})$, where $H$ is the Haar measure. Also let $P_{L} =\sum_{i=0}^{L} c_{i}s^{i}$ be the least-squares approximation to $s^{1-k}$ for the Haar measure as described in the previous section, and $R=2k+2L-1$. Then with
\begin{equation}
     \alpha_{m} := \frac{1}{2}\Vert \sigma^{\nu}_{m} - \sigma^{H}_{m}\Vert_{1}, 
\end{equation}
\begin{equation}
    S_{L} := \sum_{i=0}^{L} |c_{i}|,
\end{equation}
we have the bound
\begin{equation}
    D_{k} := \Vert \rho_{\mathcal{E}}^{(k)} - \rho_{H}^{(k)}\Vert_{1} \leq 2\tilde{e}_{L} + (1+S_{L})\sqrt{\frac{2\alpha_{R}}{P}} + \frac{2S_{L}}{P}\alpha_{R}.
\end{equation}
\end{proposition}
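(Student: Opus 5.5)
The plan is to insert the polynomial surrogate $\frac1P\mathbb E(P_L(s)Y^{\otimes k})$, once under $\nu$ and once under Haar, and use the triangle inequality:
\begin{equation*}
D_k\leq \underbrace{\Big\Vert\rho^{(k)}_{\mathcal E}-\tfrac1P\mathbb E_\nu\big(P_L(s)Y^{\otimes k}\big)\Big\Vert_1}_{(\mathrm{I})}
+\underbrace{\tfrac1P\Big\Vert\mathbb E_\nu\big(P_L(s)Y^{\otimes k}\big)-\mathbb E_H\big(P_L(s)Y^{\otimes k}\big)\Big\Vert_1}_{(\mathrm{II})}
+\underbrace{\Big\Vert\tfrac1P\mathbb E_H\big(P_L(s)Y^{\otimes k}\big)-\rho^{(k)}_H\Big\Vert_1}_{(\mathrm{III})}.
\end{equation*}

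\textbf{Term (III).} When $\ket\phi$ is Haar on $\mathbb C^D$, the normalized states $V\ket\phi/\sqrt s$ are Haar on $\mathbb C^d$. This holds by invariance under unitaries that act on $\range V^\dagger$ and its complement separately, since such unitaries leave $s$ fixed. Also $\mathbb E_H s=P$. Together these give $\frac1P\mathbb E_H(s\rho^{\otimes k})=\rho_H^{(k)}$. The Cauchy--Schwarz argument of the previous subsection then applies verbatim with Haar in place of $\nu$, so $(\mathrm{III})\leq\tilde e_L$.

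\textbf{Term (II).} Write $s^iY^{\otimes k}=\operatorname{tr}_{i}\big[(V^{\otimes(k+i)})\ketbra{\phi}^{\otimes(k+i)}(V^{\dagger\otimes(k+i)})\big]$, tracing out the last $i$ copies. This is a trace-norm contraction applied to $\ketbra\phi^{\otimes(k+i)}$. It follows that
\begin{equation*}
(\mathrm{II})\leq\frac1P\sum_i|c_i|\,2\alpha_{k+i}.
\end{equation*}
Partial traces give $\alpha_{m}\leq\alpha_{m'}$ for $m\leq m'$, and $k+L\leq R$, so $(\mathrm{II})\leq 2S_L\alpha_R/P$.

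\textbf{Term (I).} By the reduction of the set-up subsection, $(\mathrm{I})\leq\big(\frac1P\mathbb E_\nu g(s)\big)^{1/2}$ with $g(s)=s(1-s^rP_L(s))^2$. Since $\frac1P\mathbb E_H g=\tilde e_L^2$, it suffices to show
\begin{equation*}
\big|\mathbb E_\nu g-\mathbb E_H g\big|\leq 2\alpha_R(1+S_L)^2 .
\end{equation*}
Then $\sqrt{a+b}\leq\sqrt a+\sqrt b$ gives $(\mathrm{I})\leq\tilde e_L+(1+S_L)\sqrt{2\alpha_R/P}$. Summing (I), (II) and (III) yields exactly the claimed bound.

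\textbf{Main obstacle.} The delicate step is writing $g(s)$ as $\operatorname{tr}(O\,\ketbra\phi^{\otimes R})$ with $\Vert O\Vert_\infty\leq(1+S_L)^2$, which is needed because $\deg g=1+2r+2L=R$. Bounding $\sup_{[0,1]}|g|$ alone is not enough, because the operator realizing a polynomial can have larger norm than its values on $[0,1]$.

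I would build $O$ explicitly. Let $\Pi:=V^\dagger V$ be the projector with $\bra\phi\Pi\ket\phi=s$. On $r+L$ copies of $\mathbb C^D$, set
\begin{equation*}
h:=\mathbb I-\sum_i c_i\,\Pi^{\otimes(r+i)}\otimes\mathbb I^{\otimes(L-i)} .
\end{equation*}
Each term is a projector, so $\Vert h\Vert_\infty\leq1+S_L$, and $\langle\phi^{\otimes(r+L)}|h|\phi^{\otimes(r+L)}\rangle=1-s^rP_L(s)$.

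The squared factor is the difficulty: $\langle h\rangle^2$ is not $\langle h^\dagger h\rangle$. So I would instead use two independent blocks of copies plus one extra $\Pi$, and take $O=\Pi\otimes h\otimes h$ on $1+2(r+L)=R$ copies. This $O$ factorizes on product states, giving $\operatorname{tr}(O\,\ketbra\phi^{\otimes R})=g(s)$, and its norm is at most $(1+S_L)^2$. The bound $|\operatorname{tr}(O(\sigma^\nu_R-\sigma^H_R))|\leq\Vert O\Vert_\infty\Vert\sigma^\nu_R-\sigma^H_R\Vert_1$ then finishes the argument. I would double-check that this tensor factorization is legitimate; it is, since the expectations are over $\ketbra\phi^{\otimes R}$ and the operator is evaluated linearly before averaging.
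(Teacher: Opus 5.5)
Your proposal is correct and follows the paper's proof essentially step for step. It uses the same three-term triangle inequality, the same partial-trace realization $s^jY^{\otimes k}=\mathrm{tr}_{j}\bigl(V^{\otimes(k+j)}\ketbra{\phi}^{\otimes(k+j)}(V^\dagger)^{\otimes(k+j)}\bigr)$ with monotonicity $\alpha_{k+j}\le\alpha_R$, and the same $(1+S_L)^2$ constant for the residual term. The only difference is cosmetic. The paper simply expands $s(1-s^rP_L(s))^2=\sum_i a_is^i$ with $\sum_i|a_i|\le(1+S_L)^2$ and compares monomials one at a time, so your factorized operator $\Pi\otimes h\otimes h$ is valid but works around what is not really an obstacle. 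Your Haar-invariance justification of $\rho_H^{(k)}=\frac{1}{P}\mathbb{E}_H(s\rho^{\otimes k})$ also spells out a step the paper leaves implicit.
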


\begin{proof}
For simplicity of notation, define $A^{\nu}_{k,L} = \frac{1}{P}\mathbb{E}_{\nu}(Y^{\otimes k}P_{L}(s))$, and similarly as well $A^{H}_{k,L}$ for the Haar distribution. We have by the triangle inequality

\begin{equation}
    D_{k} \leq \Vert \rho_{\mathcal{E}}^{(k)} - A^{\nu}_{k,L}\Vert_1 + \Vert A^{\nu}_{k,L} - A^{H}_{k,L} \Vert_1 + \Vert A^{H}_{k,L} - \rho_{H}^{(k)} \Vert_1.
\end{equation}

By the discussion in \autoref{sec:haar-least-squares}, we immediately have $\Vert A^{H}_{k,L} - \rho_{H}^{(k)} \Vert_1 \leq \tilde{e}_{L}$. We now need to control the other two terms. 

\begin{enumerate}
    \item \textbf{Term 1: $\Vert A^{\nu}_{k,L} - A^{H}_{k,L} \Vert_1$}. Begin with the useful identity

\begin{equation}
    Y^{\otimes k}s^{j} = \mathrm{tr}_{k+1, k+2 \dots k+j}Y^{\otimes(k+j)},
\end{equation}

which simply follows from the definition of $s^{j}$. Now we take the expectation value on both sides with respect to $\nu$, use the fact that the trace is linear, and expand out the definition of $Y$ to get 

\begin{equation}
    \mathbb{E}_{\nu}(Y^{\otimes k}s^{j}) = \mathrm{tr}_{k+1, k+2 \dots k+j} (V^{\otimes(k+j)}\mathbb{E}_{\nu}(\ketbra{\phi}^{\otimes(k+j)})(V^\dagger)^{\otimes(k+j)}).
\end{equation}

We can rewrite this as 
\begin{equation}
    \mathbb{E}_{\nu}(Y^{\otimes k}s^{j}) = \Lambda^{(k,j)}(\mathbb{E}_{\nu}(\ketbra{\phi}^{\otimes(k+j)})),
\end{equation}

where $\Lambda^{(k,j)}$ is trace non-increasing and contractive, since it consists of an orthogonal projection and a partial trace. Now 

\begin{equation}
    \Vert A^{\nu}_{k,L} - A^{H}_{k,L} \Vert_1 =  \Vert \frac{1}{P}\sum_{j=0}^{L} c_{j}\mathbb{E}_{\nu}(Y^{\otimes k}s^{j})  - \frac{1}{P}\sum_{j=0}^{L} c_{j}\mathbb{E}_{H}(Y^{\otimes k}s^{j})\Vert_1,
\end{equation}

which we can write, by the triangle inequality, as

\begin{multline}
    \Vert A^{\nu}_{k,L} - A^{H}_{k,L} \Vert_1 \leq \frac{1}{P}\sum_{j=0}^{L} |c_{j}| \Vert \mathbb{E}_{\nu}(Y^{\otimes k}s^{j})-\mathbb{E}_{H}(Y^{\otimes k}s^{j})\Vert_1
    \\= \frac{1}{P}\sum_{j=0}^{L} |c_{j}| \Vert \Lambda^{(k,j)}(\mathbb{E}_{\nu}(\ketbra{\phi}^{\otimes(k+j)}))-\Lambda^{(k,j)}(\mathbb{E}_{H}(\ketbra{\phi}^{\otimes(k+j)}))\Vert_1.
\end{multline}

But since $\Lambda^{(k,j)}$ is contractive, we can drop it immediately, and then get the simple bound (since $R>k+L$),

\begin{equation}
    \Vert A^{\nu}_{k,L} - A^{H}_{k,L} \Vert_1 \leq \frac{2S_{L}\alpha_{R}}{P}.
\label{eq:polynomial-moment-error}
\end{equation}

\item\textbf{Term 2: $\Vert \rho_{\mathcal{E}}^{(k)} - A^{\nu}_{k,L} \Vert_1 $}. From the preceding discussion we immediately have 

\begin{equation}
    \Vert \rho_{\mathcal{E}}^{(k)} - A^{\nu}_{k,L} \Vert_1  \leq \Big(\frac{1}{P} \mathbb{E}_{\nu}[s(1-s^{r}P_{L}(s))^{2}]\Big)^{1/2}.
\end{equation}

The issue is that we never actually calculated this quantity; we instead replaced the state distribution $\nu$  here with the Haar distribution, and then bounded the right hand side. The key observation we use is that the squared residual is itself a polynomial in $s$, so we can apply a similar strategy as the previous term. We expand 

\begin{equation}
    s(1-s^{r}P_{L}(s))^{2} = s-2s^{r+1}P_{L}(s) + s^{2r+1}(P_{L}(s))^{2},
\label{eq:squared-residual-expansion}
\end{equation}

which, as noted before, is a polynomial of degree at most $R$. Let us expand this as $s(1-s^{r}P_{L}(s))^{2} = \sum_{i=0}^{R} a_{i}s^{i}$ for some set of coefficients $\{ a_{i} \}$. Now 

\begin{equation}
\begin{aligned}
 \Vert \rho_{\mathcal{E}}^{(k)} - A^{\nu}_{k,L} \Vert_1
 &\leq \Big(\frac{1}{P}\left|\mathbb{E}_{\nu}
   [\sum_{i=0}^{R} a_{i}s^{i}]\right|\Big)^{1/2}\\
 &= \Big(\frac{1}{P}\left|\mathbb{E}_{H}
   [\sum_{i=0}^{R} a_{i}s^{i}]
   +\Big(\mathbb{E}_{\nu}[\sum_{i=0}^{R} a_{i}s^{i}]
   -\mathbb{E}_{H}[\sum_{i=0}^{R} a_{i}s^{i}]\Big)
   \right|\Big)^{1/2},
\end{aligned}
\end{equation}

so we can write 

\begin{equation}
 \Vert \rho_{\mathcal{E}}^{(k)} - A^{\nu}_{k,L} \Vert_1 \leq  \bigg(\tilde{e}_{L}^{2} + \frac{1}{P}\left|\mathbb{E}_{\nu}[\sum_{i=0}^{R} a_{i}s^{i}]- \mathbb{E}_{H}[\sum_{i=0}^{R} a_{i}s^{i}]\right|\bigg)^{1/2}, 
\end{equation}

and the last term on the right-hand side is exactly of the same form as in Equation~\eqref{eq:polynomial-moment-error}, so we can immediately bound, by the same argument,

\begin{equation}
 \Vert \rho_{\mathcal{E}}^{(k)} - A^{\nu}_{k,L} \Vert_1 \leq  \Big(\tilde{e}_{L}^{2} + \frac{1}{P}\sum_{i=1}^{R}|a_{i}|2\alpha_{R}\Big)^{1/2}
\end{equation}

Now the sum of absolute coefficients can be bounded using the expansion in terms of $P_{L}$, that is, Equation~\eqref{eq:squared-residual-expansion} as 

\begin{equation}
    \sum_{i=1}^{R}|a_{i}| \leq 1+2S_{L} + S_{L}^{2} = (S_{L}+1)^{2}.
\end{equation}

Altogether,

\begin{equation}
     \Vert \rho_{\mathcal{E}}^{(k)} - A^{\nu}_{k,L} \Vert_1 \leq \bigg(\tilde{e}_{L}^{2} + \frac{2(S_{L}+1)^{2}\alpha_{R}}{P} \bigg)^{1/2}.
\end{equation}

Putting the three pieces together and taking $(y^{2}+x)^{1/2} \leq y+ \sqrt{x}$ yields our result.
\end{enumerate}
\end{proof}

The current result is a little opaque. Where are the parameters $N_{B}$ and $k$ that we are trying hard to control? First, we have a bound on $\tilde{e_{L}}$ already in terms of $k, L$, and the fixed parameter $d$, and we have discussed that we can keep this term $O(k^{2})$ while picking $L \leq \mathrm{poly}(k)$. Now, what about $S_{L}$ and $\alpha_{R}$? It turns out (see Appendix \ref{app:jacobi-coefficients}) that we can bound $S_{L}$ as 
\begin{equation}
    S_L \leq (L+1)(2M)^L(D+2M)^{k-1+2L},
\label{eq:coefficient-bound}
\end{equation}

with $M = k-1+L$.  We will see how this works in the proof of \autoref{thm:normalized-state-design}, and for now move on to $\alpha_{R}$. For simplicity of notation, let $\ell=(N_B-1)/2$ be the number of auxiliary transfer steps. Define

\begin{equation}
    \delta_R(\ell)
    :=
    \left\|(\mathbb{T}^{(R)})^\ell-\Pi_R\right\|_{2\to 2} \leq 2\exp(-\ell/\tau_R(t)),
\end{equation}

where $\Pi_R=\Pi^{(R)}$ is the Haar twirling projector, and in the inequality we have used \autoref{cor:twirling-convergence}. $\tau_R(t)$ is the contraction timescale at replica order $R$. Writing $\ket{\phi_a^{\mathrm{in}}}:=\ket{r_a}/\|r_a\|$, the input moment is
\begin{equation}
    \sigma_m^{\mathrm{in}}:=\frac12\sum_{a=0}^1
       (\ketbra{\phi_a^{\mathrm{in}}})^{\otimes m},
\end{equation}
\begin{equation}
    \sigma_m^\nu=(\mathbb T^{(m)})^\ell[\sigma_m^{\mathrm{in}}],
\end{equation}
\begin{equation}
    \sigma_m^H=\Pi^{(m)}[\sigma_m^{\mathrm{in}}].
\end{equation}
Here the auxiliary-column convention is understood; the unitary change of convention in \autoref{sec:soft-absorption} preserves the contraction bound. Since the input moment has Hilbert--Schmidt norm at most one, and both auxiliary moments are supported on the symmetric subspace,
\begin{equation}
    \alpha_R \leq \frac{1}{2} \binom{D+R-1}{R}^{1/2} \delta_R(\ell).
\end{equation}

It seems that $S_{L}$ and $\alpha_{R}$ have the dimension factor $D$ in them, which would kill our argument. However, for $S_{L}$, the dependence only enters as a logarithm in the actual design time, and with $L \leq \mathrm{poly}(k)$ we get $\log(1+S_L) = O\big( \mathrm{poly}(k) \log(D+\mathrm{poly}(k))\big)$. For $\alpha_{R}$, the exponentially decreasing $\delta_{R}(\ell)$ can compensate for the exponential dimension for some $N_{B}\leq\mathrm{poly}(k,t)$ at fixed accuracy. This is the content of \autoref{thm:normalized-state-design}, which shows state design formation in $N_{B}\leq\mathrm{poly}(k,t)$ at fixed accuracy.

\begin{theorem}
\label{thm:normalized-state-design}
Retain the setting and notation from before. Take $k \geq 2$ and $0 < \varepsilon \leq 1$. The projected ensemble forms an $\varepsilon$-approximate state $k$-design in trace norm, that is, 
\begin{equation}
    D_{k} \leq \varepsilon,
\end{equation}
with a bath size 
\begin{equation}
    N_{B} = O\Big(1 + \tau_R(t)\big(R \log (D+R) + \log(1/\varepsilon)\big)\Big)\,,
\end{equation}
\begin{equation}
    R = O\Big(k + \frac{k^{2}}{d\varepsilon^{2}}\Big).
\end{equation}
\end{theorem}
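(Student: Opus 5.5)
The plan is to feed the three-term bound of \autoref{prop:normalization-conversion} with explicit choices of $L$ and $\ell$, making each term at most $\varepsilon/3$. First I will fix the polynomial degree. By Equation~\eqref{eq:haar-error-bound}, $\tilde e_L^2\leq (k-1)^2/(d(L+1))$. Choosing $L+1=\lceil 36(k-1)^2/(d\varepsilon^2)\rceil$ gives $2\tilde e_L\leq\varepsilon/3$. Then $R=2k+2L-1=O(k+k^2/(d\varepsilon^2))$, which is the stated replica order. With $M=k-1+L\leq R$, the coefficient bound \eqref{eq:coefficient-bound} gives $\log(1+S_L)\leq \log(L+2)+L\log(2R)+(k-1+2L)\log(D+2R)=O(R\log(D+R))$. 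Only this logarithm will enter the bath size.

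Next I will make the two $\alpha_R$ terms small. Using $\alpha_R\leq\frac12\binom{D+R-1}{R}^{1/2}\delta_R(\ell)$ and $\delta_R(\ell)\leq 2e^{-\ell/\tau_R(t)}$ from \autoref{cor:twirling-convergence}, I get $\alpha_R\leq (D+R)^{R/2}e^{-\ell/\tau_R(t)}$. It suffices to have
\begin{equation}
    (1+S_L)\sqrt{2\alpha_R/P}\leq\varepsilon/3,
    \qquad
    2S_L\alpha_R/P\leq\varepsilon/3 .
\end{equation}
Both follow if $\alpha_R\leq \varepsilon^2P/(18(1+S_L)^2)$, using $\varepsilon\leq1$ and $\alpha_R\leq \sqrt{\alpha_R}$ when $\alpha_R\leq1$. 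Taking logarithms, this requires
\begin{equation}
    \ell\geq\tau_R(t)\Big(\tfrac R2\log(D+R)+2\log(1+S_L)+\log(18/P)+2\log(1/\varepsilon)\Big).
\end{equation}
Now $P=d/D$, so $\log(1/P)\leq\log D$. Each bracketed term is therefore $O(R\log(D+R)+\log(1/\varepsilon))$. Since $N_B=2\ell+1$, this gives the stated $N_B=O\bigl(1+\tau_R(t)(R\log(D+R)+\log(1/\varepsilon))\bigr)$. Summing the three contributions then yields $D_k\leq\varepsilon$.

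The main obstacle is bookkeeping rather than new analysis. I have to check that the hypotheses of \autoref{cor:twirling-convergence} hold at replica order $R$ rather than $k$. The assumptions (LPM, first return, boundary contraction) must be invoked at order $R$, with $\tau_R(t)$ polynomial there. I also have to confirm that $\sigma^\nu_R$ is exactly $(\mathbb T^{(R)})^\ell[\sigma^{\rm in}_R]$ in the auxiliary-column convention, which the discussion preceding the theorem supplies. Finally, I have to verify that the $\log(1+S_L)$ from \eqref{eq:coefficient-bound} is really absorbed into $O(R\log(D+R))$. This holds because $L\leq R$ and $k-1+2L\leq R$.
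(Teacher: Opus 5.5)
Your proposal is correct and follows the paper's proof essentially step for step. You fix $L$ via the Jacobi least-squares bound so that $2\tilde e_L$ is a fixed fraction of $\varepsilon$, choose $\ell$ logarithmic in $(1+S_L)^2\binom{D+R-1}{R}^{1/2}/(P\varepsilon^2)$ so that both $\alpha_R$ terms of Proposition~\ref{prop:normalization-conversion} are small, and absorb $\log(1+S_L)$ into $O(R\log(D+R))$ using Equation~\eqref{eq:coefficient-bound}. The only differences are cosmetic: you split $\varepsilon$ into thirds where the paper uses $\varepsilon/2$, $\varepsilon/4$, $\varepsilon^2/64$, and you bound by $(D+R)^R$ where the paper uses $\binom{D+R-1}{R}\leq D^R$. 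Your remark about $\alpha_R\leq\sqrt{\alpha_R}$ is not needed, since the third term follows directly by substituting the bound on $\alpha_R$.
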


\begin{proof}
We use the result of \autoref{prop:normalization-conversion}, and bound the individual terms on the right-hand side.

\begin{enumerate}

\item \textbf{Term 1: $2\tilde{e}_{L}$.}   We can immediately use Equation~\eqref{eq:haar-error-bound} to get
\begin{equation}
    \tilde{e}_{L}^{2} \leq \frac{(k-1)^{2}}{d(L+1)}\leq \frac{\varepsilon^{2}}{16},
\end{equation}
for the choice $L = \lceil\frac{16(k-1)^{2}}{d \varepsilon^{2}}\rceil -1$. This gives us $2 \tilde{e}_{L} \leq \varepsilon/2$. Since $R=2k+2L-1$, this is manifestly compatible with our estimate for $R$.

\item \textbf{Term 2: $(1+S_{L})\sqrt{\frac{2\alpha_{R}}{P}}$}. We choose $\ell \geq \tau_R(t)\left(\log\left(\frac{32}{P\varepsilon^2}\right) +2\log(1+S_L) +\frac{1}{2}\log\binom{D+R-1}{R}\right)$, and first bound $\alpha_{R}$ using this. We will show that this choice of $\ell$ is compatible with the asymptotic form of $N_{B}$ at the end. Thus simple algebra shows that
\begin{equation}
    \alpha_{R} \leq \binom{D+R-1}{R}^{1/2} \exp(-\ell/\tau_R(t)) \leq \frac{P\varepsilon^{2}}{32(S_{L}+1)^2}.
\end{equation}
Then we have the immediate bound
\begin{equation}
    (1+S_L)\sqrt{\frac{2\alpha_R}{P}} \leq \frac{\varepsilon}{4}.
\end{equation}

\item \textbf{Term 3: $\frac{2S_L}{P}\alpha_R$}. We again substitute in the bound for $\alpha_{R}$ to get 
\begin{equation}
    \frac{2S_L}{P}\alpha_R \leq \frac{S_L\varepsilon^2}{16(1+S_L)^2} \leq \frac{\varepsilon^2}{64}.
\end{equation}
\end{enumerate}

Altogether, we get $D_{k} \leq \varepsilon/2 + \varepsilon/64+ \varepsilon/4 \leq \varepsilon.$ Now using Equation~\eqref{eq:coefficient-bound} and recalling that $M = k-1+L = (R-1)/2$, we have $\log(1+S_{L}) = O(R \log(R+D))$. We also have $\log(1/P) \leq \log(D)$, $\log(\binom{D+R-1}{R}) \leq R\log(D)$, and $\tau_R(t)\leq\mathrm{poly}(R,t)$ by \autoref{cor:design-time}, so choosing $\ell$ to be the smallest integer satisfying the imposed conditions is consistent with the asymptotic value of $N_{B}$.

\end{proof}

\section{A comment on evaluating local replica quantities efficiently}
\label{sec:comment}

The central mathematical tool we will utilize in this section is the Schur-Weyl duality \cite{FultonHarris1991} of irreducible representations of the symmetric group $S_{k}$ and the unitary group $U_{d}$ on $(\mathbb{C}^{d})^{\otimes k}$ --- it will turn out that a simple application of it (or something very similar to it) will enable us to evaluate local spectral quantities exponentially faster than what we (might) naively expect. An excellent introduction to the subject can be found in the lecture notes \cite{leditzky2025repthqit}, which we follow here.

\begin{fact}[Schur-Weyl duality]
\label{fact:schur-weyl}
Let $V = \mathbb{C}^{d}$, and consider $\{(\varphi,V^{\otimes k}), (\omega, V^{\otimes k})\}$ the following representations of $S_{k}$ and $U_{d}$ respectively:

\begin{equation}
    \varphi(\pi)(\ket{v_{1}} \otimes \ket{v_{2}} \dots \otimes \ket{v_{k}}) = \ket{v_{\pi^{-1}(1)}} \otimes \ket{v_{\pi^{-1}(2)}} \dots \otimes \ket{v_{\pi^{-1}(k)}},
\end{equation}

\begin{equation}
    \omega(U)(\ket{v_{1}} \otimes \ket{v_{2}} \dots \otimes \ket{v_{k}}) = U\ket{v_{1}} \otimes U\ket{v_{2}} \dots \otimes U\ket{v_{k}}.
\end{equation}

As a representation of $S_{k} \times U_{d}$, the space $V^{\otimes k}$ decomposes as 

\begin{equation}
    V^{\otimes k} \cong \bigoplus_{\lambda \vdash_{d} k} V_{\lambda} \otimes U^{d}_{\lambda},
\end{equation}

where $(\varphi_{\lambda}, V_{\lambda})$ and $(\omega_{\lambda}, U_{d}^{\lambda})$ are inequivalent irreps of $S_{k}$ and $U_{d}$ respectively, and the sum is indexed by integer partitions of $k$ into at most $d$ parts.
\end{fact}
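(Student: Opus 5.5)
The plan is to derive the decomposition from the double commutant theorem. All the real content then sits in one identity: the algebras generated by the two actions are each other's commutants. Write $\mathcal A:=\operatorname{span}\{\varphi(\pi):\pi\in S_k\}$ and $\mathcal B:=\operatorname{span}\{\omega(U):U\in U_d\}$, both subalgebras of $\operatorname{End}(V^{\otimes k})$.

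\textbf{Step 1: semisimplicity of $\mathcal A$.} $\mathcal A$ is a homomorphic image of the group algebra $\mathbb C[S_k]$, so by Maschke's theorem it is semisimple. Moreover $\varphi$ is unitary, so $\mathcal A$ is a $*$-algebra, and $V^{\otimes k}$ splits into $S_k$-isotypic components $\bigoplus_\lambda V_\lambda\otimes\mathbb C^{m_\lambda}$. Here the $V_\lambda$ are the Specht modules and the multiplicities $m_\lambda$ are possibly zero.

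\textbf{Step 2: the commutant of $\mathcal A$ is $\mathcal B$ (main step).} Identify $\operatorname{End}(V^{\otimes k})\cong\operatorname{End}(V)^{\otimes k}$. Conjugation by $\varphi(\pi)$ permutes the tensor factors, so $\mathcal A'$ is exactly the symmetric subspace $\operatorname{Sym}^k(\operatorname{End}(V))$. I would then argue in three moves:
\begin{itemize}
\item By polarization, $\operatorname{Sym}^k(\operatorname{End}(V))$ is spanned by the operators $X^{\otimes k}$ with $X\in\operatorname{End}(V)$.
\item Since $\mathrm{GL}_d$ is dense in $\operatorname{End}(V)$, it suffices to take $X\in\mathrm{GL}_d$.
\item Any linear functional annihilating $\{U^{\otimes k}:U\in U_d\}$ defines a holomorphic polynomial in the entries of $X$ that vanishes on $U_d$. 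Because $U_d$ is a totally real submanifold of $\mathrm{GL}_d$ of full real dimension, equivalently because $\mathfrak{gl}_d=\mathfrak u_d\oplus i\,\mathfrak u_d$, such a polynomial vanishes identically on $\mathrm{GL}_d$. Hence the functional kills every $X^{\otimes k}$.
\end{itemize}
Together these give $\mathcal A'=\mathcal B$. I expect this identity to be the main obstacle, and the vanishing argument in the third move is the place needing care.

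\textbf{Step 3: double commutant.} By the double commutant theorem for semisimple algebras, $\mathcal B'=\mathcal A''=\mathcal A$. Moreover, on each isotypic block $\mathcal A'$ acts as the full matrix algebra $\operatorname{End}(\mathbb C^{m_\lambda})$. So for every $\lambda$ with $m_\lambda>0$, the space $U^d_\lambda:=\mathbb C^{m_\lambda}$ is an irreducible $\mathcal B$-module, hence an irreducible $U_d$-representation. For $\lambda\neq\mu$ these are inequivalent, since the central idempotents of $\mathcal A$ separate the blocks and lie in $\mathcal B'$. This gives $V^{\otimes k}\cong\bigoplus_\lambda V_\lambda\otimes U^d_\lambda$ with the stated pairwise inequivalence.

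\textbf{Step 4: which $\lambda$ occur.} It remains to show $m_\lambda>0$ exactly when $\lambda$ has at most $d$ rows. Apply the Young symmetrizer $c_\lambda$ to $V^{\otimes k}$; the multiplicity $m_\lambda$ equals $\dim c_\lambda V^{\otimes k}$.
\begin{itemize}
\item If $\lambda$ has a column of length $>d$, the column antisymmetrizer maps into $\Lambda^{>d}V=0$, so $m_\lambda=0$.
\item If every column has length $\le d$, fill row $i$ of the tableau with the basis vector $e_i$. The symmetrizer of this vector is nonzero: the highest-weight coefficient survives, since the row permutations fix it and the column antisymmetrizer acts on distinct basis vectors. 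Hence $m_\lambda>0$.
\end{itemize}
This yields the index set of partitions $\lambda\vdash_d k$.
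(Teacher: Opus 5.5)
Your proof is correct. It is the standard self-contained argument, essentially the route of Fulton--Harris, Ch.~6: mutual commutants from polarization plus a unitarian-trick density argument, the decomposition from Schur's lemma on the $S_k$-isotypic components, and the index set from Young symmetrizers.

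The paper does not prove this Fact; it imports it from \cite{FultonHarris1991,leditzky2025repthqit}. In the paragraph after the statement it runs the logic the other way. It takes the decomposition as given, deduces that $S_k$ and $U_d$ span each other's commutants, and notes only that the $S_k$-commutant is spanned by the $X^{\otimes k}$ and that unitaries suffice.

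In your ordering that remark is the engine rather than a corollary. Once $\mathcal A'=\mathcal B$ is proved, irreducibility and inequivalence of the multiplicity spaces follow immediately. Step~4 also identifies $U^d_\lambda\cong c_\lambda V^{\otimes k}$ and shows exactly which $\lambda$ occur. The paper's ordering, by contrast, only works once the decomposition has been established elsewhere.

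The third move of Step~2, which you flagged, is fine. Because $p(X)=\ell(X^{\otimes k})$ contains no $\bar X$, set $X=e^{A}$ near $A=0$. Then $p(e^{A})$ is a holomorphic function of $A$ that vanishes on the real form $\mathfrak u_d$, so it vanishes near $0$. The identity theorem then kills $p$ on all of $\mathrm{End}(V)$.

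One small correction in Step~3. To show $U^d_\lambda\not\cong U^d_\mu$ for $\lambda\neq\mu$, you need the central idempotent $e_\lambda$ of $\mathcal A$ to lie in $\mathcal B$, not in $\mathcal B'$. Membership in $\mathcal B'=\mathcal A$ is automatic and separates nothing on the $U_d$ side. The fix is immediate:
\begin{itemize}
\item being central in $\mathcal A$, $e_\lambda$ lies in $\mathcal A'=\mathcal B$;
\item it acts as the identity on $U^d_\lambda$ and as zero on $U^d_\mu$;
\item so any $U_d$-intertwiner $T:U^d_\lambda\to U^d_\mu$ satisfies $T=T\,e_\lambda=e_\lambda\,T=0$.
\end{itemize}
With this fix the double commutant identity $\mathcal B'=\mathcal A$ is never used: $\mathcal A'=\mathcal B$ and Schur's lemma suffice.
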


A consequence of Schur-Weyl duality is that $S_{k}$ and $U_{d}$ span each other's commutants in $V^{\otimes k}$. This follows since the commutant of the algebra generated by a group representation lives entirely in the multiplicity space, and by Schur-Weyl duality the $U_{d}^{\lambda}$ irreps of $U_{d}$ form \textit{exactly} this space. Inclusion is hence clear, so one just has to check that the span is equal. It turns out that the set of operators which commute entirely with $S_{k}$ is spanned by operators of the form $X^{\otimes k}$ for some $X \in \mathrm{End}(V)$, and that linear combinations of unitary matrices indeed suffice to span this set. Details can be found in the references cited above.

\begin{fact}
\label{fact:commutant}
    $S_{k}$ and $U_{d}$ span each other's commutants in $V^{\otimes k}$. In particular, for $O \in \mathrm{End}(V^{\otimes k})$ such that $\varphi(\pi)O\varphi(\pi)^{\dag} = O$, we have $O = \bigoplus_{\lambda \vdash_{d} k} \mathbb{I}_{V_{\lambda}} \otimes O_{U^{d}_{\lambda}}$.
\end{fact}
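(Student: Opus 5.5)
The plan is to derive both parts of Fact~\ref{fact:commutant} from the decomposition of Fact~\ref{fact:schur-weyl}, using Schur's lemma and one density argument. The ``in particular'' clause is the easiest, so I would prove it first. By Fact~\ref{fact:schur-weyl}, as an $S_k$-representation
\begin{equation}
    V^{\otimes k}\cong\bigoplus_{\lambda\vdash_d k}V_\lambda\otimes U^d_\lambda ,
\end{equation}
where $S_k$ acts only on the first factor and the $V_\lambda$ are pairwise inequivalent irreps. Let $O$ satisfy $\varphi(\pi)O\varphi(\pi)^\dagger=O$ for all $\pi$. Choose bases of the multiplicity spaces $U^d_\lambda$ and write $O$ in blocks $O_{(\lambda,i),(\mu,j)}\in\mathrm{Hom}(V_\mu,V_\lambda)$. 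Each block intertwines $S_k$-irreps, so Schur's lemma makes it zero for $\lambda\neq\mu$ and a scalar $c^\lambda_{ij}$ times the identity for $\lambda=\mu$. Collecting these scalars into the matrix $O_{U^d_\lambda}:=(c^\lambda_{ij})$ gives $O=\bigoplus_\lambda \mathbb I_{V_\lambda}\otimes O_{U^d_\lambda}$. This also identifies the commutant of $\varphi(S_k)$ as $\bigoplus_\lambda \mathbb I\otimes\mathrm{End}(U^d_\lambda)$.

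It remains to show that this commutant equals $\operatorname{span}\{U^{\otimes k}\}$, and then to obtain the converse by the double commutant theorem. I would do this in two steps.

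\emph{Symmetric tensors.} Identify $\mathrm{End}(V^{\otimes k})\cong\mathrm{End}(V)^{\otimes k}$. Under this identification, conjugation by $\varphi(\pi)$ permutes the tensor factors, so the $S_k$-commutant is exactly the symmetric subspace of $\mathrm{End}(V)^{\otimes k}$. The symmetric subspace is spanned by $\{X^{\otimes k}: X\in\mathrm{End}(V)\}$. This follows from the polarization identity: the symmetrization of $X_1\otimes\cdots\otimes X_k$ is a linear combination of $(\sum_i \epsilon_i X_i)^{\otimes k}$ over sign vectors $\epsilon\in\{\pm1\}^k$.

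\emph{Reduction to unitaries.} Suppose a linear functional $\ell$, say $\ell(Y)=\operatorname{tr}(AY)$, vanishes on every $U^{\otimes k}$ with $U$ unitary. Consider $f(X)=\operatorname{tr}(AX^{\otimes k})$, a holomorphic polynomial on $\mathrm{End}(V)$. Then $g(Z):=f(e^{iZ})$ is holomorphic in $Z\in\mathrm{End}(V)$ and vanishes for all Hermitian $Z$. The Hermitian matrices form a totally real subspace of full real dimension, so $g\equiv0$. Hence $f$ vanishes on $\exp(\mathrm{End}(V))=GL_d$, which is dense in $\mathrm{End}(V)$, so $f\equiv0$ by continuity. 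Therefore every $X^{\otimes k}$ lies in $\operatorname{span}\{U^{\otimes k}\}$, and this span equals the $S_k$-commutant.

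For the converse I would invoke the double commutant theorem. Both $\operatorname{span}\varphi(S_k)$ and $\operatorname{span}\omega(U_d)$ are unital $*$-subalgebras, the latter because it is closed under products and adjoints. Taking commutants then gives that the commutant of $U_d$ is $\operatorname{span}\varphi(S_k)$.

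I expect the density and analytic-continuation step to be the main obstacle, since it is the only non-algebraic input. If that argument becomes awkward, I would instead cite the standard fact that $U(d)$ is Zariski-dense in $GL_d(\mathbb C)$.
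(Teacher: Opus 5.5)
Your proposal is correct and follows the same route as the paper's sketch. Schur's lemma places the $S_k$-commutant in the multiplicity spaces $U^d_\lambda$; that commutant is spanned by the $X^{\otimes k}$; and unitaries already span it, with you supplying the details the paper leaves to its references (the polarization identity, the holomorphic-continuation and density argument from Hermitian generators, and the bicommutant theorem for the converse direction).
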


We can use this fact to block-diagonalize any local operator in $\mathrm{poly}(k)$ time. We will illustrate this point with $A_{R}$, which has exactly the same replica symmetry as the $k$-folded circuit --- in particular we can write $A_R = \frac{1}{4}\sum_{\xi\in\{0,1\}^{2}}(a_{R,\xi} \otimes a_{R,\xi}^{*})^{\otimes k}$. If each $A_{R,\xi}$ is exactly the $k$-fold replica of a two-qubit operator, then the $a_{R,\xi}$ in our decomposition acts on $\mathbb{C}^{4}$, but for the purposes of this section we could even consider an $a_{R,\xi}$ that acts on some $O(1)$ number $q$ of two-qubit sites.

We decompose $A_R$ and rearrange the tensor factors (this changes the operator, but the two are isomorphic)

\begin{equation}
    A_R = \frac{1}{4}\sum_{\xi\in\{0,1\}^{2}}(a_{R,\xi} \otimes a_{R,\xi}^{*})^{\otimes k} \cong \frac{1}{4}\sum_{\xi\in\{0,1\}^{2}}({a}_{R,\xi}^{\otimes k} \otimes {a}_{R,\xi}^{* \otimes k}),
\end{equation}

and then consider the following representation of $S_{k}\times S_k$ on $(\mathbb{C}^{d})^{\otimes k} \otimes (\mathbb{C}^{d})^{\otimes k}$,

\begin{equation}
    \phi(\pi,\sigma):= \varphi(\pi) \otimes \varphi(\sigma), \ \pi,\sigma \in S_{k}.
\end{equation}

Here $(\varphi(\cdot), \mathbb{C}^{d})$ is exactly the representation of $S_{k}$ that appears in the statement of the Schur-Weyl duality. Further, for any $\pi,\sigma \in S_{k}$, we have $\phi(\pi,\sigma)A_R\phi(\pi,\sigma)^{\dag} = A_R$, so we can use \autoref{fact:commutant} separately on each tensor factor to get

\begin{equation}
    A_R = \frac{1}{4}\sum_{\xi\in\{0,1\}^{2}}\big(\bigoplus_{\lambda \vdash_{d} k} \mathbb{I}_{V_{\lambda}} \otimes (a_{R,\xi})_{U^{d}_{\lambda}} \big) \otimes \big(\bigoplus_{\mu \vdash_{d} k} \mathbb{I}_{V_{\mu}} \otimes (a^{*}_{R,\xi})_{U^{d}_{\mu}} \big),
\end{equation}

to which we can apply the distributive property and rearrange tensor factors:

\begin{equation}
    A_R = \bigoplus_{\lambda \vdash_{d} k, \mu \vdash_{d} k} \bigg( \frac{1}{4}\sum_{\xi\in\{0,1\}^{2}}(a_{R,\xi})_{U^{d}_{\lambda}} \otimes (a^{*}_{R,\xi})_{U^{d}_{\mu}} \bigg) \otimes \big( \mathbb{I}_{V_{\mu}} \otimes  \mathbb{I}_{V_{\lambda}} \big).
\end{equation}

Finally, we see the result: to perform a block diagonalization of $A_R$, we can simply analyze $\frac{1}{4}\sum_{\xi\in\{0,1\}^{2}}((a_{R,\xi})_{U^{d}_{\lambda}} \otimes (a^{*}_{R,\xi})_{U^{d}_{\mu}})$ for each pair of individual irrep blocks, and then put it all together at the end. We can even figure out the multiplicities of each eigenvalue --- each pair of irreps $(\lambda,\mu)$ appears $\dim V_{\lambda} \dim V_{\mu}$ times (after all, the $S_{k}$ irreps are exactly the multiplicity spaces from the point-of-view of the $U_{d}$ action!), and the dimensions of these so-called Specht modules are given by the \textit{hook-length} formula, which can be evaluated in $O(k)$ time \cite{frame1954hook}.

Let us now count the dimension of each of the $U_{\lambda}^{d}$, as well as the number of $\lambda \vdash_{d} k$. Bounds on these quantities are given in \cite{leditzky2025repthqit}, and we simply follow their line of thinking. In particular, 
\begin{equation}
    |\{ \lambda \vdash_{d} k\}| \leq (k+1)^{d},
\end{equation}
\begin{equation}
    \dim(U_{\lambda}^{d})\leq (k+1)^{d(d+1)/2},
\end{equation}
so that altogether, the number of parameters in $A_R \leq ((k+1)^{d +d(d+1)/2})^{2} \sim O(k^{d^{2}+3d})$. We simply need to diagonalize $\mathrm{poly}(k)$ blocks of $\mathrm{poly}(k)$ size, which can be accomplished in $\mathrm{poly}(k)$ time. While this remains (a rather nasty) exponential in $d$, for $O(1)$-sized $d$, this does not change our asymptotics.

\section{Outlook}
\label{sec:outlook}
In this work, we showed that the finite bath projected ensemble of dual unitary circuits forms a quantum state $k$-design with a $\mathrm{poly}(k,t)$ bath size. However, there are several directions that could be explored further, or have been left completely untouched. To begin, it is unclear whether the assumption that we introduced to bound the backwards current can be simplified; intuitively, one should be able to make a single, concise assumption that rules out solitonic behavior. This points towards a study of the set of all dual-unitary operators for which such persistent modes exist, and determining the extent of its genericity.

Another question left open is the optimality of the scaling for the finite bath. To some extent we have been loose with the exact exponents in our bounds, and we do not expect this to be the optimal scaling of the bath size with $k$. Could we form projected ensembles with a bath size linear or possibly sublinear in $k$? A sharper estimate of the $k$ dependence would give us a better understanding of the onset of thermalization in these many-body systems.

Finally, the argument used to bound the normalization could be useful beyond the current setting. Whenever a replicated transfer matrix controls positive moments of unnormalized postselected states, while the physical observable depends on their individual normalization, weighted polynomial approximation provides a way of replacing the formal negative-replica step by a finite positive-replica calculation. It would be interesting to identify other problems in which this approach can be applied.

\section{Acknowledgments}
\label{sec:acknowledgments}
We would like to thank Shivan Mittal and Matteo Ippoliti for the helpful comments and discussions. NHJ and RA acknowledge support in part from DOE grant DE-SC0025615. RA is supported by an NSF NRT Q-CAT Graduate Fellowship.

\section{AI use disclosure}
\label{sec:ai-use}
OpenAI's GPT-5.5, GPT-5.6 Pro Sol, and GPT-6 Astra were used throughout the project. AI use in this work was similar to that of an additional collaborator --- for instance, the authors would discuss the proof strategy and numerical approach with the LLM. Many of the main ideas as well as the `high-level' proof strategy is due to the human authors, but these tools were (especially) useful with `low-level' details, such as bounding terms, using inequalities, and coming up with more effective proof techniques. The figures are entirely generated by the LLMs, and then later modified to better suit the exposition. The writing is human-done, with the LLMs used to suggest some minor editing.

\subsection*{GPT's own AI use disclosure}

Here we gave GPT the prompt (at the end of the work): ``btw, how would you like to be acknowledged in the AI use disclaimer?" 

\vspace*{6pt}
\ni GPT's answer: ``I’d prefer something understated and accurate rather than being thanked like a human collaborator. For this paper, I think a disclosure along these lines fits what we actually did:

AI use disclosure. The authors used ChatGPT (OpenAI) as an interactive research and editing aid during the development of this work. It was used to discuss proof strategies, check intermediate calculations and notation, assist with numerical implementations, and identify possible errors or gaps in arguments. The authors independently verified the mathematical claims and take full responsibility for the contents of the manuscript.

If you want to emphasize your goal that the exposition itself is yours, you could add:
The final presentation and scientific conclusions were determined by the authors.

I would not say that I `contributed to the proof' or list me in the acknowledgments alongside people, because that makes the role sound more autonomous than it was. `Interactive research aid' is probably the most faithful description of how you’ve used me here.''

\appendix

\section{Bounding the Jacobi monomial coefficients}
\label{app:jacobi-coefficients}
In this appendix, we bound the term $S_{L}$ from \autoref{prop:normalization-conversion}. Recall that $S_{L} = \sum_{i=0}^{L}|c_{i}|$, where the least squares polynomial $P_{L}(s) = \sum_{i=0}^{L} c_{i}s^{i}$. To this end, we write $P_{L}(s)$ in terms of the shifted Jacobi polynomials
\begin{equation}
    J_{n}(s) = P_{n}^{(d+2r, D-d-1)}(1-2s) = \binom{n+d+2r}{n} \sum_{j=0}^{n}(-1)^{j}\binom{n}{j}\frac{(n+D+2r)_{j}}{(d+2r+1)_{j}}s^{j} := \sum_{j=0}^{n}\mathcal{J}^{(n)}_{j}s^{j},
\end{equation}
where $P_{n}^{\alpha,\beta}(s)$ are the usual Jacobi polynomials indexed by the weights $\alpha$ and $\beta$. Here, $(x)_{j}$ refers to the rising factorial, that is $(x)_{j} = x(x+1)\dots (x+j-1)$. As an aside, if $\alpha = \beta = 0$, we recover the usual Legendre polynomials, which are the usual set of orthogonal polynomials in the function space. In some sense, we can think of these as generalizations of the same for a weighted inner product. Altogether, we can write
\begin{equation}
    P_{L}(s) = \sum_{n} b_{n}J_{n}(s) = \sum_{n} b_{n}\sum_{j=0}^{n}\mathcal{J}^{(n)}_{j}s^{j}.
\end{equation}
We hence need only to bound the $\mathcal{J}^{(n)}_{j}$ and $b_{n}$. For the $b_n$, since it is the orthogonal projection of $s^{-r}$ onto the $n$--indexed Jacobi polynomials,
\begin{equation}
    b_{n} = \frac{\int_{0}^{1}w(s)s^{-r}J_{n}(s) \dd s}{\int_{0}^{1}w(s)J^{2}_{n}(s) \dd s}
\end{equation}
which can be explicitly evaluated to 
\begin{equation}
    b_{n} = \frac{(r)_{n}(2n+D+2r)(n+D+r+1)_{r-1}}{(d+r+1)_{n+r}}.
\end{equation}
Defining the scales $M = r+L$, $C = D+2M$, we get 
\begin{equation}
    |b_{n}| \leq M^{n}C^{r}.
\end{equation}
Now we need to bound $\sum_{j=0}^{n}|\mathcal{J}^{(n)}_{j}|$. It suffices to use the bound, we can take $\binom{n+d+2r}{n} \leq C^{n}$, as well as $\frac{(n+D+2r)_{j}}{(d+2r+1)_{j}} \leq (n+D+2r)_{j} \leq (2r+2L+D)^{j} = C^{j}$, such that 
\begin{equation}
    \sum_{j=0}^{n}|\mathcal{J}^{(n)}_{j}| \leq C^{n}\sum_{j=0}^{n}\binom{n}{j}C^j = C^n(1+C)^{n}\leq (2C^2)^n.
\end{equation}
Putting all of this together gives 
\begin{equation}
    S_{L} \leq \sum_{n=0}^{L} M^{n}C^{r}(2C^{2})^{n} \leq \sum_{n=0}^{L}(2M)^LC^{r+2L} = (L+1)(2M)^LC^{r+2L}
\end{equation}
Finally, substituting in $C=2M+D$ yields our result.

\bibliography{refs.bib}
\bibliographystyle{utphys}

\end{document}